\documentclass[aps,prl,superscriptaddress,longbibliography,showpacs,twocolumn]{revtex4-2}

\usepackage{pdfpages}

\makeatletter
\AtBeginDocument{\let\LS@rot\@undefined}
\makeatother

\usepackage{soul}
\setstcolor{red}
\usepackage{comment}

\usepackage{lipsum}
\usepackage{caption}
\DeclareCaptionJustification{justified}{\leftskip=0pt \rightskip=0pt \parfillskip=0pt plus 1fil}

\usepackage{graphicx}
\usepackage{dcolumn}
\usepackage{bm}
\usepackage{caption}
\usepackage{amsmath,amssymb,amsthm}
\usepackage{braket}
\usepackage{empheq}
\usepackage{subfig}
\usepackage{tikz}
\usepackage{url}
\usepackage{hyperref}
\usepackage{quantikz}
\usepackage{xcolor}
\usepackage{float}
\usepackage{mathrsfs}
\DeclareMathOperator{\Tr}{Tr}

\definecolor{blueviolet}{rgb}{0.2, 0.2, 0.6}
\definecolor{webgreen}{rgb}{0,.5,0}
\definecolor{webbrown}{rgb}{.6,0,0}

\hypersetup{
    breaklinks=true,
    colorlinks=true,       
    linkcolor=blueviolet,          
    citecolor=webgreen,        
    filecolor=magenta,      
    urlcolor=webbrown,           
    runcolor=cyan
}

\newtheorem{theorem}{Theorem}
\newtheorem*{theorem*}{Theorem}
\newtheorem{corollary}{Corollary}
\newtheorem{lemma}{Lemma}
\newtheorem{proposition}{Proposition}

\newtheorem{problem}{Problem}

\usepackage[ruled,vlined]{algorithm2e}
\SetKwFunction{FProp}{Propagate}
\SetKwFunction{FOptP}{OptMix}
\SetKwFunction{FGR}{GRAPE}

\usepackage{etoolbox}
\makeatletter
\patchcmd{\@citex}{\unskip, et al.}{\unskip et al.}{}{}
\makeatother

\begin{document}

\title{Randomized Quantum Optimal Control}
\author{Leeseok Kim}
\affiliation{Center for Quantum Information and Control, University of New Mexico, NM 87131, USA}
\affiliation{Department of Electrical and Computer Engineering, University of New Mexico, NM 87131, USA}
\author{Francisco Riberi}
\affiliation{Center for Quantum Information and Control, University of New Mexico, NM 87131, USA}
\affiliation{Department of Electrical and Computer Engineering, University of New Mexico, NM 87131, USA}
\author{Kevin Baca}
\affiliation{Center for Quantum Information and Control, University of New Mexico, NM 87131, USA}
\affiliation{Department of Physics and Astronomy, University of New Mexico, NM 87131, USA}
\author{Milad Marvian}
\affiliation{Center for Quantum Information and Control, University of New Mexico, NM 87131, USA}
\affiliation{Department of Electrical and Computer Engineering, University of New Mexico, NM 87131, USA}
\affiliation{Department of Physics and Astronomy, University of New Mexico, NM 87131, USA}

\begin{abstract}
Quantum optimal control (QOC) aims to find control functions that optimally steer a quantum system toward a target operation. We introduce a \emph{randomized} QOC framework {where optimization is carried over an ensemble of control functions and their probabilities, instead of a single set of functions}. Using this framework, we prove that randomized QOC can reach a target accuracy faster than any deterministic protocol under the same resource constraints. We also develop general symmetry-based constructions that convert a given control into an ensemble of controls that can systematically reduce the error. We benchmark these constructions for CNOT implementation and find that the resulting randomized protocol quadratically suppresses the error of the optimized deterministic solution. In addition, we introduce randomized GRAPE, which generalizes GRAPE to directly optimize control ensembles and their associated probabilities. Finally, as a related application, we discuss randomized boundary-pulse constructions {that enhance} robustness against coherent noise.
\end{abstract}

\maketitle

\textit{Introduction---} Precise control of quantum systems {has proven} essential for advancing quantum technologies. Given a system model, quantum optimal control (QOC) \cite{werschnikquantum2007,glaser2015training,d2021introduction,koch2022quantum,ansel2024introduction} provides analytical \cite{khaneja2001time,d2001optimal,boscain2002optimal,ggaron2013time,hegerfeldt2013driving,boscain2021introduction} and numerical \cite{khaneja2005optimal,palao2003optimal,caneva2011chopped,rach2015dressing,machnes2018tunable,muller2022one} techniques for designing control functions that realize a desired operation. QOC has become a key tool for implementing high-fidelity gates across quantum computing platforms \cite{kelly2014optimal,dodle2014high,heeres2017implementing,figgatt2019parallel,omran2019generation,jandura2022time,evered2023high}.

In parallel, randomization has become a powerful tool for improving quantum protocols, with applications in quantum algorithms \cite{campbell2019random,childs2019faster,boixo2009eigenpath,ouyang2020compilation,wan2022randomized,martyn2025halving}, noise tailoring \cite{wallman2016noise,hashim2021randomized}, and dynamical decoupling \cite{viola2005random,kern2009randomized,yi2026faster}. By sampling from an ensemble of protocols rather than repeatedly applying a fixed deterministic one, randomized schemes can average out coherent errors and, in some cases, outperform the best deterministic protocol within a given class. Yet most such methods are formulated in the gate-based model, leaving the systematic integration of randomization into continuous-time models largely unexplored.

In this Letter, we propose a general framework for randomized QOC, in which one optimizes over an ensemble of control waveforms and their associated probabilities, rather than over a single waveform (cf.~Fig.~\ref{fig:main-figure}). Randomized QOC expands the feasible set from deterministic trajectories to convex combinations of trajectories. This probabilistic mixing enables coherent-error cancellations between different control branches, allowing randomized protocols to outperform the deterministic optimum under the same resource constraints.

{We first prove, through an exactly solvable single-qubit example, that randomized QOC can outperform the deterministic optimum under identical control constraints. The optimal randomized error is the square of the optimal deterministic error, implying that a target accuracy can be reached in less time than with any deterministic protocol.}
{Motivated by this mechanism, we develop two symmetry-based prescriptions that transform a deterministic control into a randomized ensemble with suppressed leading-order error.}

\begin{figure}[t!]
    \centering
    \includegraphics[width=8.6cm]{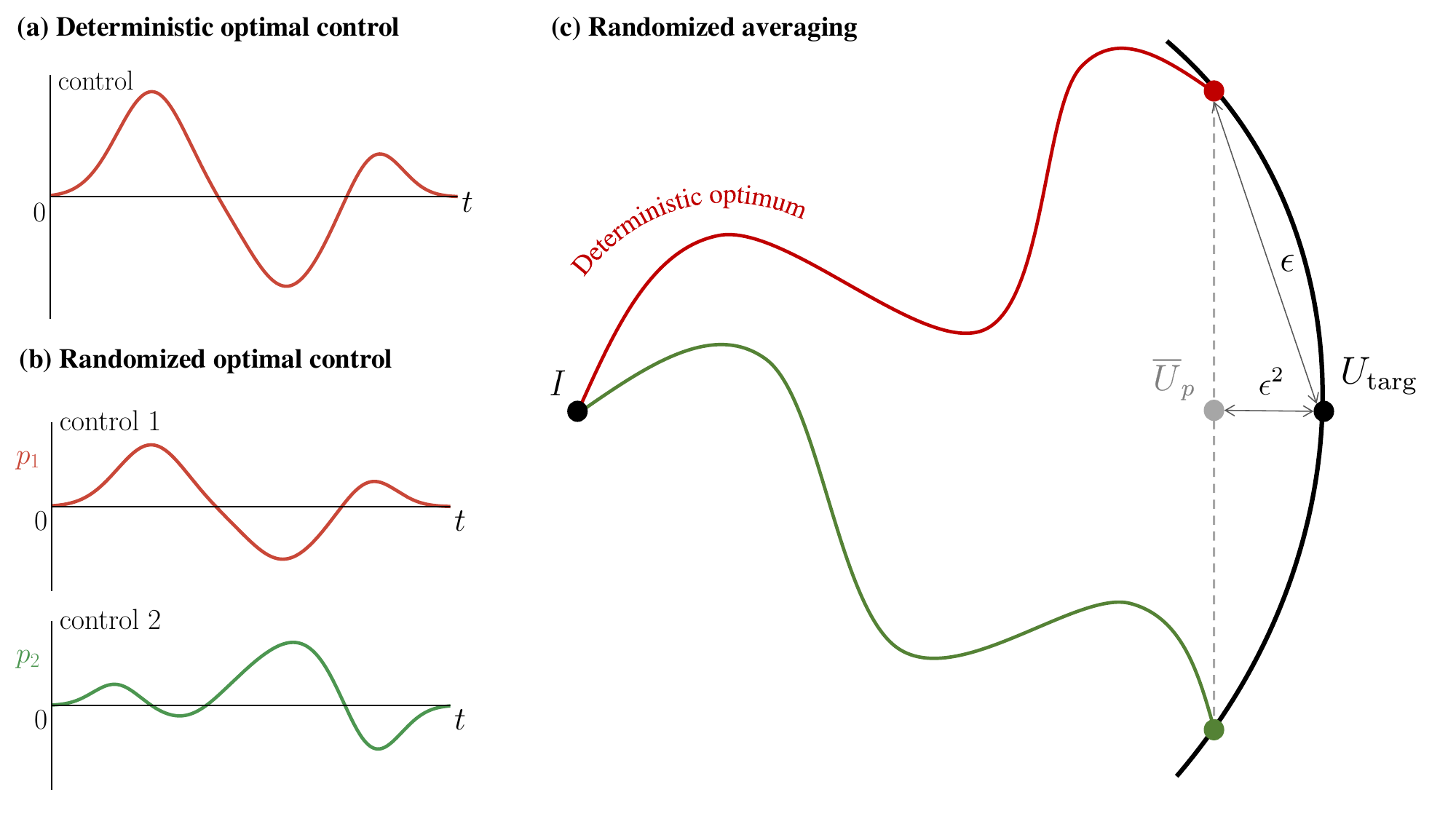}
    \caption{Schematic comparison between deterministic and randomized quantum optimal control. (a) Conventional QOC optimizes a single control waveform to approximate the target unitary $U_{\rm targ}$. (b) Randomized QOC optimizes an ensemble of control waveforms, shown here with two branches, together with the associated probability distribution. (c) Averaging over the unitaries generated by different controls can yield a quadratically closer approximation to $U_{\rm targ}$ than the deterministic optimum.}
    \label{fig:main-figure}
\end{figure}

{We benchmark these prescriptions on CNOT implementation in a two-qubit model. The resulting randomized ensembles exhibit the same quadratic suppression achieved by the single-qubit example and closely match the performance of a randomized generalization of the GRAPE  algorithm \cite{khaneja2005optimal}. }
Finally, building on Ref.~\cite{yi2026faster}, we extend the framework to noise-robust control through randomized boundary-pulse schemes. In particular, we {provide} a general randomized construction {capable of canceling arbitrary} first-order 1-local coherent noise {for Clifford targets.}

\textit{Randomized-control framework---}
We consider a controlled quantum system
\begin{align}
\label{model}
H_{\bm f}(t)=H_0+\sum_{j=1}^m f_j(t)H_j ,
\end{align}
where $H_0$ is the drift Hamiltonian, $H_j$ are the control Hamiltonians, and $f_j(t)$ are control functions, {possibly subject to boundedness constraints}. We collect these functions into the vector-valued control waveform $\bm f(t)=(f_1(t),\ldots,f_m(t))$, and {write} the corresponding unitary generated over a fixed time $T$,
\begin{align}
U_{\bm f}(T)
=
\mathcal T
\exp\left(
-i\int_0^T H_{\bm f}(t)dt
\right),
\end{align}
where $\mathcal T$ denotes time ordering. Conventional QOC seeks a single control waveform $\bm f$ whose final unitary $U_{\bm f}(T)$ best approximates a target unitary $U_{\rm targ}$.

In the randomized-control framework, we instead optimize over an ensemble of $M$ waveforms
$\{\bm f^{(i)}\}_{i=1}^M$ and a probability distribution $\bm p=(p_1,\ldots,p_M)$, with $p_i\ge0$ and $\sum_{i=1}^M p_i=1$. Writing $U^{(i)}(T):=U_{\bm f^{(i)}}(T)$, the ensemble induces the {following} mixed-unitary channel:
\begin{align}
\mathcal R_{\bm p}(\rho) := \sum_{i=1}^M p_iU^{(i)}(T)\rho U^{(i)}(T)^\dagger .
\end{align}
{Given a target channel $\mathcal U_{\rm targ}(\rho):= U_{\rm targ}\rho U_{\rm targ}^\dagger$, randomized QOC seeks to  minimize:}
\begin{align}
\label{diamond-distance-rqoc}
\min_{\{\bm f^{(i)}\}_{i=1}^M,\bm p} \frac12 \left\| \mathcal R_{\bm p} - \mathcal U_{\rm targ} \right\|_\diamond,
\end{align}
where $\|\mathcal A-\mathcal B\|_\diamond$ denotes the diamond distance between quantum channels $\mathcal A$ and $\mathcal B$ \cite{watrous2018the}.

\textit{Mechanism of randomized advantage---} {The advantage of randomized control originates from coherent-error cancellation between different branches $\boldsymbol{f}^{(i)}$. This is conveniently described in an averaged-unitary picture.} Since each branch represents the same channel up to a global phase, we may choose branch phases $\bm\phi=(\phi_1,\ldots,\phi_M)$ and define the average
\begin{align}
\overline U_{\bm p,\bm\phi} := \sum_{i=1}^M p_i e^{i\phi_i}U^{(i)}(T).
\end{align}
By the sharpened mixing lemma \cite{chen2021concentration}, the phase-minimized averaged-unitary error upper bounds the diamond-distance error:
\begin{align}
\epsilon_\diamond(\bm p,\{\bm f^{(i)}\}) = \frac12 \left\| \mathcal R_{\bm p} - \mathcal U_{\rm targ} \right\|_\diamond \le 2 \min_{\bm \phi} \left\| \overline U_{\bm p,\bm \phi} - U_{\rm targ}
\right\|.
\end{align}
where $\|A\|$
denotes the operator norm.

Let $\bm f$ implement
\begin{align}
\label{reference_error}
U_{\bm f}(T) = U_{\rm targ}e^{-iE}, \qquad \|E\|\ll1.
\end{align}
{Here $\bm f$ may denote any admissible waveform, including a deterministic optimum.} For a randomized protocol, choosing phase representatives such that
\begin{align}
e^{i\phi_i}U^{(i)}(T) = U_{\rm targ}e^{-iE_i},
\qquad \|E_i\|\ll1,
\end{align}
the phase-aligned averaged unitary satisfies
\begin{align}
\label{averaged-unitary-phase}
\overline U_{\bm p,\bm\phi} = U_{\rm targ} \left(I - i\sum_{i=1}^M p_iE_i + \mathcal O(\|E_i\|^2) \right).
\end{align}
Thus, averaging replaces the error generator of a single implementation by the ensemble-averaged error generator $\overline E:=\sum_{i=1}^M p_iE_i$. {The possibility of engineering cancellations in $\overline E$ is the origin of the randomized advantage. By choosing the ensemble $(\{\bm f^{(i)}\}_{i=1}^M,\bm p)$ so that the dominant components of the $E_i$ cancel in $\overline E$, the leading coherent error can be eliminated. In particular, $\overline E=0$ implies quadratic error suppression.}

\textit{Randomized advantage in approximate time-optimal control---}
Randomization can reduce the time required to reach a prescribed target accuracy. At fixed tolerance $\epsilon>0$, define $T_{\epsilon_\diamond}^{\rm det}$ and $T_{\epsilon_\diamond}^{\rm rand}$ as the minimum times needed to reach the target channel $\mathcal U_{\rm targ}$ within diamond-distance $\epsilon$ using deterministic and randomized controls, respectively. {We then have}:
\begin{theorem}
\label{thm:strict-randomized-speedup}
There exists a Hamiltonian model of the form \eqref{model}, a target unitary channel $\mathcal U_{\rm targ}$, and a tolerance $\epsilon>0$ such that, under the same control constraints,
\begin{align}
T_{\epsilon_\diamond}^{\rm rand} < T_{\epsilon_\diamond}^{\rm det}.
\end{align}
\end{theorem}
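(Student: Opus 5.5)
We will establish the theorem with an exactly solvable single-qubit example in which both optimal errors are computable in closed form. Take $H_0=0$ and one bounded control, $H_{\bm f}(t)=f(t)\,\sigma_x/2$ with $|f(t)|\le\Omega$, and target $U_{\rm targ}=e^{-i\theta\sigma_x/2}$ with rotation angle $\theta\in(0,\pi]$. All Hamiltonians commute, so every admissible waveform produces $U_f(T)=e^{-i\alpha\sigma_x/2}$ with $\alpha=\int_0^T f\,dt$. The reachable angles at time $T$ are therefore exactly $\alpha\in[-\Omega T,\Omega T]$.

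\textbf{Step 1 (deterministic error).} The diamond distance between two unitary channels $U$ and $V$ is governed by the spectrum of $V^\dagger U$. Here $V^\dagger U=e^{-i\delta\sigma_x/2}$ with eigenphases $\mp\delta/2$, so $\tfrac12\|\mathcal U-\mathcal V\|_\diamond=|\sin(\delta/2)|$ for $|\delta|\le\pi$. The best deterministic choice when $\Omega T<\theta$ is $\alpha=\Omega T$, which gives
\begin{align}
\epsilon^{\rm det}(T)=\sin\!\big((\theta-\Omega T)/2\big).
\end{align}
Hence $T^{\rm det}_{\epsilon}=(\theta-2\arcsin\epsilon)/\Omega$, provided this quantity is positive.

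\textbf{Step 2 (randomized error).} A randomized protocol yields the channel $\rho\mapsto\sum_i p_i e^{-i\alpha_i\sigma_x/2}\rho\, e^{i\alpha_i\sigma_x/2}$ with $|\alpha_i|\le\Omega T$. Working in the $\sigma_x$ eigenbasis, this channel leaves the diagonal untouched and multiplies the off-diagonal entry by $\lambda=\sum_ip_ie^{-i\alpha_i}$. The set of achievable $\lambda$ is the convex hull of the arc $\{e^{-i\alpha}:|\alpha|\le\Omega T\}$. The target multiplies the off-diagonal entry by $e^{-i\theta}$.

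For two such phase-covariant qubit channels $\mathcal D_\lambda$ and $\mathcal D_\mu$, I will show that the diamond distance is attained by the maximally entangled input. That input gives
\begin{align}
\tfrac12\|\mathcal D_\lambda-\mathcal D_\mu\|_\diamond=\tfrac12|\lambda-\mu|.
\end{align}
The intended argument is a symmetry (twirling) reduction under $\sigma_x$ conjugation, which both channels commute with. The lower bound is immediate from the Bell input; the upper bound follows because the difference of Choi matrices has trace norm $|\lambda-\mu|$. With this formula, the randomized error is $\tfrac12\,\mathrm{dist}\big(e^{-i\theta},\mathrm{conv}(\text{arc})\big)$. The nearest point to $e^{-i\theta}$ in this convex hull is the endpoint $e^{-i\Omega T}$, which recovers $\sin((\theta-\Omega T)/2)$ with no gain.

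To get a strict advantage I will therefore add a drift: $H_0=\Delta\sigma_z/2$, or equivalently a second, orthogonal error direction. Rather than keep that drift, I will instead use the simpler route suggested by the paper's mechanism: errors in two directions that cancel on average.

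\textbf{Revised model.} Take two controls $f_x\sigma_x/2+f_y\sigma_y/2$ with $f_x^2+f_y^2\le\Omega^2$, and target $e^{-i\pi\sigma_x/2}$. For $T$ slightly below $\pi/\Omega$, the deterministic optimum is the bang rotation with error $\approx(\pi-\Omega T)/2$. The randomized protocol mixes $R_{\hat n_\pm}(\Omega T)$, rotations about the tilted axes $\hat n_\pm=(\cos\eta,\pm\sin\eta,0)$.

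\textbf{Main obstacle.} The hardest step is the lower bound showing that no deterministic waveform beats the bang rotation. I plan to get it from the speed limit: the geodesic distance on $SU(2)$ from $I$ to the target is $\pi/2$, and it is traversed at rate at most $\Omega/2$. This yields $\epsilon^{\rm det}\ge\sin((\pi-\Omega T)/2)$.

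On the randomized side, choosing $\eta$ so that the first-order error generators $E_\pm$ cancel gives $\overline E=0$. By Eq.~\eqref{averaged-unitary-phase} together with the sharpened mixing lemma, the randomized error is then $O(\delta^2)$, where $\delta=\pi-\Omega T$. Since $\delta^2\ll\sin(\delta/2)$, the randomized error falls below the deterministic one. Setting $\epsilon=\epsilon^{\rm rand}(T)$ then gives $T^{\rm det}_\epsilon>T$, which proves the strict inequality.

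Whether $\overline E=0$ is actually achievable by an appropriate choice of $\eta$ is the point I will check concretely first, computing $E_\pm$ explicitly and solving for $\eta$ as a function of $\delta$.
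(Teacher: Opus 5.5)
Your final argument fails at exactly the point you flagged: for the tilted-axis ensemble, no choice of $\eta$ gives $\overline E=0$. With $\delta=\pi-\Omega T$ and $U_{\rm targ}=-i\sigma_x$, a direct computation gives
\[
U_{\rm targ}^\dagger R_{\hat n_\pm}(\Omega T)=a\,I+ib\,\sigma_x\pm ic\,\sigma_z,\qquad a=\cos\tfrac{\delta}{2}\cos\eta,\quad b=\sin\tfrac{\delta}{2},\quad c=\cos\tfrac{\delta}{2}\sin\eta .
\]
Both branches carry the \emph{same} under-rotation term $ib\,\sigma_x$. Tilting only adds a $\pm\sigma_z$ error that cancels on mixing. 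The $\sigma_x$ part of $\overline E$ survives with size at least $\sin(\delta/2)$ for every $\eta$. Re-phasing a branch by $-1$ puts it near $-I$, so phase alignment cannot help. Computing the channel directly, the equal mixture's relative channel is $\rho\mapsto A\rho A^\dagger+c^2\sigma_z\rho\sigma_z$ with $A=aI+ib\,\sigma_x$. On a $\sigma_y$ eigenstate its trace-distance error is $\sqrt{a^2b^2+(b^2+c^2)^2}\ge b=\sin(\delta/2)=\epsilon^{\rm det}(T)$. So this ensemble never beats the bang rotation, and your closing step ``$O(\delta^2)\ll\sin(\delta/2)$'' has nothing to stand on.

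The missing idea is the paper's: use the global-phase freedom to mix branches that miss the target from opposite sides. Since $e^{+i\Omega T\sigma_x/2}=-U_{\rm targ}e^{-i\delta\sigma_x/2}$, take the equal mixture of the constant controls $\pm\Omega$ along $\sigma_x$. This is the sign flip $f\mapsto-f$, i.e.\ conjugation by $\sigma_z$, which negates both the error and $U_{\rm targ}$, exactly as $Z$ does in the paper. It gives $E_\pm=\mp\tfrac{\delta}{2}\sigma_x$ and $\overline E=0$.

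Your Step~2 already contains a rigorous proof of this, but you misread the geometry. The endpoint $e^{-i\Omega T}$ is the nearest point of the convex hull of the arc to $e^{-i\theta}$ only when $\theta\le\pi-\Omega T$. For $\theta=\pi$, every point of the hull has real part at least $\cos\Omega T$, so the nearest point to $-1$ is the chord midpoint $\lambda=\cos\Omega T$. This gives $\epsilon^{\rm rand}(T)=\tfrac12(1+\cos\Omega T)=\cos^2(\Omega T/2)=(\epsilon^{\rm det}(T))^2$. Hence $T^{\rm rand}_\epsilon=\tfrac{2}{\Omega}\arccos\sqrt{\epsilon}<\tfrac{2}{\Omega}\arccos\epsilon=T^{\rm det}_\epsilon$ for every $\epsilon\in(0,1)$.

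That drift-free commuting model proves the theorem far more cheaply than the paper's $Z+f(t)X$ model. The paper needs a Pontryagin analysis for the deterministic optimum and a Choi-fidelity lower bound for the randomized one. What the paper's example buys is the same quadratic separation in a genuinely non-commuting control problem.

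One repair remains in Step~2. The trace norm of the Choi-matrix difference only lower-bounds the diamond norm, so prove $\tfrac12\|\mathcal D_\lambda-\mathcal D_\mu\|_\diamond\le\tfrac12|\lambda-\mu|$ directly. For an input $|{+}\rangle|\psi_+\rangle+|{-}\rangle|\psi_-\rangle$ written in the $\sigma_x$ eigenbasis, the output difference has trace norm $2|\lambda-\mu|\,\|\psi_+\|\,\|\psi_-\|\le|\lambda-\mu|$.
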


\textit{Analytically solvable example---}
We present a minimal exactly solvable example in which the optimal randomized error is the square of the optimal deterministic error, {with technical details given in Supplemental Material (SM).} This provides a constructive proof of Theorem~\ref{thm:strict-randomized-speedup}.

Consider the single-qubit model
\begin{align}
\label{sq-model}
H(t)=f(t)X+Z, \qquad |f(t)|\le f_{\max},
\end{align}
{with $\{X,Y,Z\}$ the relevant Pauli matrices and target} $U_{\rm targ}=e^{-i\pi/2Y}$. We work in the short-time regime $T\sqrt{1+f_{\max}^2}<\pi$, {where} the target is not exactly reachable, and assume $f_{\max}\le1$. In this regime, the deterministic optimum is
\begin{align}
\label{f_plus_protocol}
f_+(t)=
\begin{cases}
+f_{\max}, & 0\le t<T/2, \\
-f_{\max}, & T/2\le t\le T.
\end{cases}
\end{align}

Now consider the sign-flipped control $f_-(t):=-f_+(t)$. An optimal randomized protocol solving {Eq.} \eqref{diamond-distance-rqoc} is 
\begin{align}
\label{identity_randomized_protocol}
f^{(1)}(t)=f_+(t), \quad f^{(2)}(t)=f_-(t), \quad p_1=p_2=\frac12.
\end{align}

Let $\epsilon_{\diamond}^{\rm det,\star}(T)$ and $\epsilon_{\diamond}^{\rm rand,\star}(T)$ denote the optimal diamond-distance errors achieved by the deterministic protocol \eqref{f_plus_protocol} and the randomized protocol \eqref{identity_randomized_protocol}, respectively. Direct evaluation yields the exact relation
\begin{align}
\epsilon_{\diamond}^{\rm rand,\star}(T) = \left(\epsilon_{\diamond}^{\rm det,\star}(T)
\right)^2 < \epsilon_{\diamond}^{\rm det,\star}(T),
\end{align}
demonstrating a quadratic separation between randomized and deterministic optimal control. 

{This example illustrates the central mechanism enabling a randomization advantage: the deterministic optimum is not improved by finding a better control waveform, but by enlarging the feasible set from a single waveform to a convex combination of optimal waveforms whose coherent errors cancel.}

The speedup follows immediately. Choose any time $T_0$ in the certified short-time regime and any tolerance $\epsilon$ satisfying $\epsilon_{\diamond}^{\rm rand,\star}(T_0) < \epsilon < \epsilon_{\diamond}^{\rm det,\star}(T_0)$. Since $\epsilon_{\diamond}^{\rm det,\star}(T)$ decreases monotonically in this regime, no deterministic protocol can reach tolerance $\epsilon$ for any $T\le T_0$, whereas the randomized protocol does so at $T_0$. Hence $T_{\epsilon_\diamond}^{\rm rand} \le T_0 < T_{\epsilon_\diamond}^{\rm det}$, which proves Theorem~\ref{thm:strict-randomized-speedup}.

The origin of the quadratic separation is transparent in the phase-aligned averaged-unitary picture. The sign flip $f_+\mapsto f_-=-f_+$ {can be regarded as a consequence of} conjugating the Hamiltonian with $Z$:
\begin{align}
U_{f_-}(T) = ZU_{f_+}(T)Z.
\end{align}
If $U_{f_+}(T)=U_{\rm targ}e^{-iE}$, then $ZEZ=-E$ in this example. Since $ZU_{\rm targ}Z=-U_{\rm targ}$, we have $-U_{f_-}(T) = U_{\rm targ}e^{+iE}$. Thus, the two branches realize the same target unitary up to a global phase, while carrying opposite coherent errors, $E_1=E$ and $E_2=-E$, so equal mixing yields $\overline E=0$. By Eq.~\eqref{averaged-unitary-phase}, the first-order error vanishes and only quadratic corrections remain.

\textit{Symmetry-generated randomized control---}
The single-qubit example shows that the randomized advantage can arise from averaging control branches with opposite coherent errors. This suggests a constructive route {to improving performance:} rather than optimizing all branches independently, we start from a reference control $\bm f$, for instance a deterministic QOC solution, and generate additional branches using symmetries of the Hamiltonian model. We present two general constructions {that implement this procedure}: a target-preserving twirl and a time-reversal pairing.

\textit{(i) Target-preserving twirl.}
The first construction cancels error components that transform nontrivially under a symmetry {that preserves} the target operation. Let $\mathsf G$ be a finite set of unitaries {commuting} with the target unitary up to a global phase,
\begin{align}
\label{target_preserving_G}
g^\dagger U_{\rm targ}g=e^{i\theta_g}U_{\rm targ}, \qquad g\in\mathsf G .
\end{align}
Assume that each conjugation by $g\in\mathsf G$ is implementable through an admissible transformed control $\bm f^{(g)}$, satisfying
\begin{align}
H_{\bm f^{(g)}}(t)=g^\dagger H_{\bm f}(t)g .
\end{align}
The resulting evolution obeys
\begin{align}
U_{\bm f^{(g)}}(T)=g^\dagger U_{\bm f}(T)g ,
\end{align}
and the randomized protocol samples uniformly over the resulting symmetry-related branches.

Let $E$ denote the generator of the relative unitary error for the reference implementation, defined in Eq. \eqref{reference_error}. Using Eq.~\eqref{target_preserving_G}, each transformed branch obeys
\begin{align}
U_{\bm f^{(g)}}(T) = e^{i\theta_g}U_{\rm targ}e^{-ig^\dagger E g}.
\end{align}
Aligning branch phases and averaging uniformly over $g\in\mathsf G$, Eq.~\eqref{averaged-unitary-phase} gives
\begin{align}
\overline U_{\bm p,\bm\phi} = U_{\rm targ} \left(I-i\overline E_{\mathsf G} + \mathcal O(\|E\|^2) \right),
\end{align}
{with $\overline E_{\mathsf G} := \frac{1}{|\mathsf G|} \sum_{g\in\mathsf G}g^\dagger E g $ the twirled first-order error.}
{It follows that} all first-order error components that are modified by the twirl are canceled, while {twirl-invariant contributions} remain.

\emph{(ii) Time-reversal pairing.}
To suppress residual errors that are invariant under the twirl, we introduce a complementary construction that pairs the reference deterministic implementation with a transformed inverse evolution. This transformation reverses the sign of the relevant residual error contributions, so averaging the two implementations cancels them to first order.

Consider a Hermitian target unitary,
\begin{align}
\label{hermitian_target}
U_{\rm targ}^\dagger=U_{\rm targ},
\end{align}
and choose a finite set $\mathsf V$ of unitaries that commute with the target unitary up to a global phase,
\begin{align}
\label{target_preserving_V}
v^\dagger U_{\rm targ}v=e^{i\theta_v}U_{\rm targ},
\qquad v\in\mathsf V .
\end{align}
For the given reference control $\bm f$, {we restrict to the class of $v\in\mathsf V$ admitting} a transformed control $\widetilde{\bm f}^{(v)}$ that satisfies
\begin{align}
\label{implementable_time_reversal}
H_{\widetilde{\bm f}^{(v)}}(t) = -v^\dagger H_{\bm f}(T-t)v.
\end{align}
Time reversal generates branches implementing the inverse evolution with conjugated error generators. Specifically,
\begin{align}
\label{endpoint_time_reversal}
U_{\widetilde{\bm f}^{(v)}}(T) = v^\dagger U_{\bm f}(T)^\dagger v = e^{i\theta_v} U_{\rm targ} e^{i\Phi_v(E)},
\end{align}
where $\Phi_v(E) := U_{\rm targ}^\dagger v^\dagger E v U_{\rm targ}$.

We average the reference branch with the transformed inverse branches using equal total weights. Aligning branch phases and applying Eq.~\eqref{averaged-unitary-phase} gives
\begin{align}
\label{time_reversal_pair_average}
\overline U_{\bm p,\bm \phi} &= U_{\rm targ}\left(I-\frac{i}{2} \left(E-\overline\Phi_{\mathsf V}(E)\right) + \mathcal O(\|E\|^2) \right),
\end{align}
where $\overline\Phi_{\mathsf V}(E) := \frac{1}{|\mathsf V|} \sum_{v\in\mathsf V} \Phi_v(E)$. {From Eq.~\eqref{time_reversal_pair_average}, the first-order error cancels whenever}
\begin{align}
{\overline\Phi_{\mathsf V}(E)=E.}
\end{align}
A simple sufficient condition is
\begin{align}
\label{time_reversal_commuting_condition}
[E,U_{\rm targ}]=0, \qquad [E,v]=0 \quad (v\in\mathsf V).
\end{align}

The two constructions are complementary: the twirl removes symmetry-odd errors, while time-reversal pairing cancels residual twirl-invariant contributions. We now illustrate them in a two-qubit model.

\textit{(iii) Example.}
We consider the two-qubit {Hamiltonian}
\begin{align}
\label{two_qubit_model}
H_{\bm f}(t) = JZ_1Z_2 +  \sum_{j=1}^2 \left(f_{x,j}(t)X_j + f_{y,j}(t)Y_j \right),
\end{align}
where
$\bm f(t)=(f_{x,1}(t),f_{y,1}(t),f_{x,2}(t),f_{y,2}(t))$ is a fixed reference waveform, for instance, a deterministic QOC solution. We take the target to be $U_{\rm targ}={\rm CNOT}_{1\to2}$.

For the target-preserving twirl, choose $\mathsf G=\{I,Z_1\}$, which satisfies Eq.~\eqref{target_preserving_G}. In the model of Eq.~\eqref{two_qubit_model}, conjugation by $Z_1$ leaves the drift $Z_1Z_2$ invariant and flips the sign of the local controls on the first qubit. Thus the transformed controls are
\begin{align}
\bm f^{(I)}(t) &= \bm f(t), \\
\bm f^{(Z_1)}(t) &= (-f_{x,1}(t),-f_{y,1}(t),f_{x,2}(t),f_{y,2}(t)).
\end{align}
{The twirled unitary branch $U_{\bm f^{(Z_1)}}(T)=Z_1U_{\bm f}(T)Z_1$ follows}, and the averaged error generator becomes $\overline E_{\mathsf G} = \frac12\left( E+Z_1EZ_1 \right)$. Hence the twirl cancels, to first order, all Pauli error terms containing $X_1$ or $Y_1$, {leaving errors such as $Z_1X_2$, $Z_1$, and $X_2$ invariant}.

{To cancel} such residual error components, we use the time-reversal pairing with
$\mathsf V=\{X_2\}$, which satisfies Eq.~\eqref{target_preserving_V}. The corresponding transformed control is
\begin{align}
\begin{split}
\label{cnot_time_reversal_flip}
\widetilde{\bm f}^{(X_2)}(t) &=
\bigl( -f_{x,1}(T-t), -f_{y,1}(T-t), \\
&\qquad -f_{x,2}(T-t), +f_{y,2}(T-t)\bigr).
\end{split}
\end{align}
The minus sign together with conjugation by $X_2$ leaves the drift $Z_1Z_2$ invariant and produces the sign pattern in Eq.~\eqref{cnot_time_reversal_flip}. Thus $U_{\widetilde{\bm f}^{(X_2)}}(T) = X_2 U_{\bm f}(T)^\dagger X_2$. Consider, for example, a residual error component $E\propto Z_1X_2$, which is unchanged by the target-preserving twirl. Since {this error satisfies Eq.~\eqref{time_reversal_commuting_condition}, the time-reversal pairing cancels it to first order.}

The two constructions can be concatenated by applying the time-reversal pairing to each target-preserving-twirled branch. For $g\in\{I,Z_1\}$, define $H_{\widetilde{\bm f}^{(g,X_2)}}(t) = -X_2H_{\bm f^{(g)}}(T-t)X_2$. The concatenated protocol applies the four branches
\begin{align}
\label{concatenated_control_protocol_CNOT}
\bm f^{(I)}, \quad \bm f^{(Z_1)}, \quad \widetilde{\bm f}^{(I,X_2)}, \quad \widetilde{\bm f}^{(Z_1,X_2)}
\end{align}
with probability $1/4$ each. Thus the concatenated protocol eliminates all errors containing $X_1$ or $Y_1$, together with the residual errors $X_2$, $Z_1$, and $Z_1X_2$. Only ${\rm span}\{Y_2,Z_2,Z_1Y_2,Z_1Z_2\}$ can survive at first order.

\begin{figure}
    \centering
    \includegraphics[width=8.6cm]{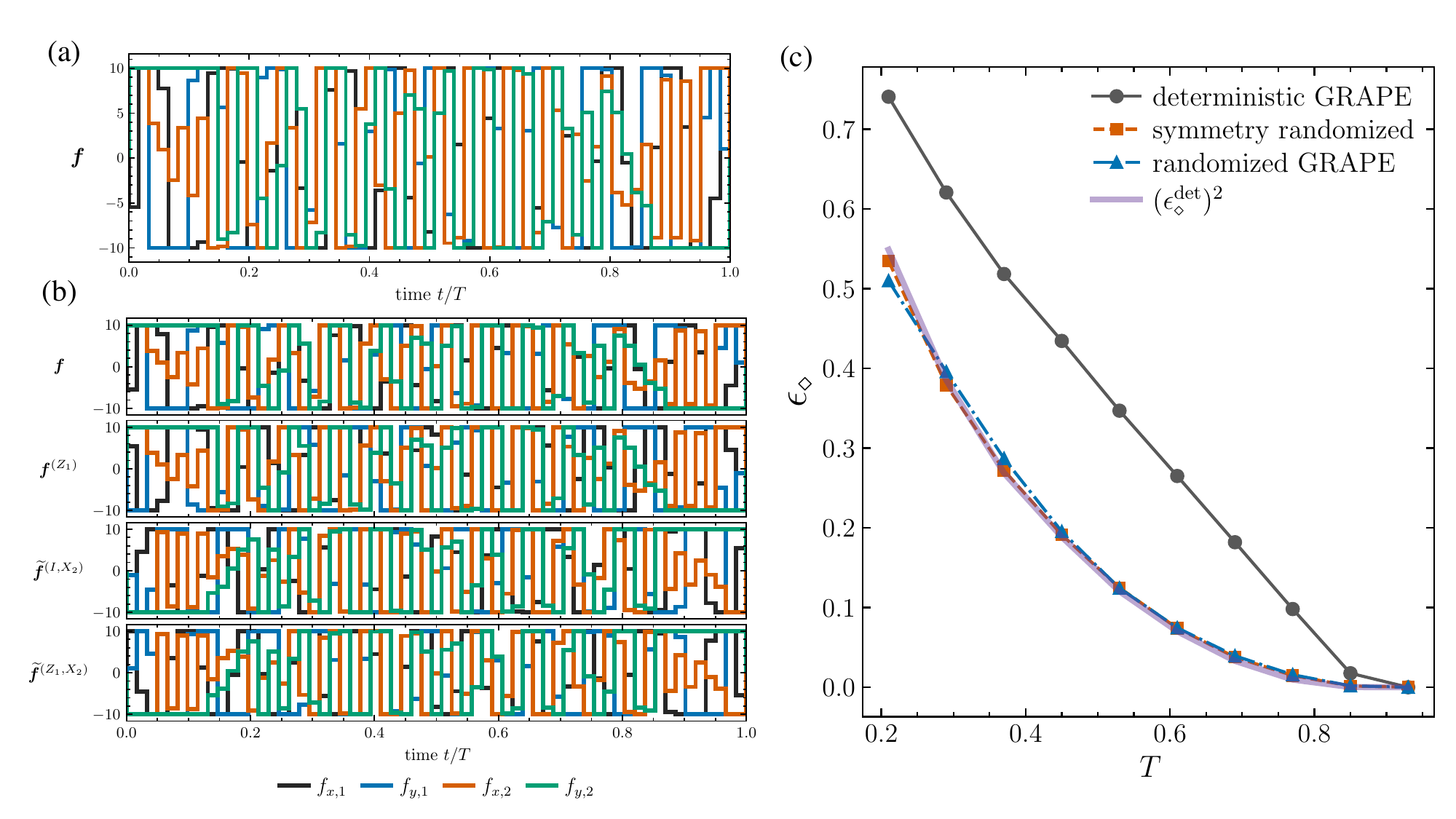}
    \caption{Optimal control of CNOT implementation in the two-qubit model of \eqref{two_qubit_model}, with $J=1$ and $f_{\max}=10$. (a) Deterministic GRAPE solution $\bm f$. (b) Four symmetry-generated branches of \eqref{concatenated_control_protocol_CNOT}, obtained by sign flips and time reversal of $\bm f$, and applied with equal probabilities of $1/4$. (c) Diamond-distance error versus total time $T$. The symmetry-generated protocol suppresses the deterministic GRAPE error quadratically, closely following $[\epsilon_\diamond^{\rm det}(T)]^2$ (faint purple guide), and achieves essentially the same performance as randomized GRAPE.}
    \label{fig:randomized-cnot}
\end{figure}

\textit{{CNOT benchmark---}}
We benchmark the symmetry-generated randomized protocols for CNOT implementation in the two-qubit model {of Eq.}~\eqref{two_qubit_model}, with $J=1$ and $f_{\max}=10$. For each $T$, we first obtain a reference pulse $\bm f$ using a deterministic GRAPE optimization \cite{khaneja2005optimal}. A representative deterministic solution is shown in Fig.~\ref{fig:randomized-cnot}(a).

From this waveform, we generate the four randomized branches in {Eq.}~\eqref{concatenated_control_protocol_CNOT} using the target-preserving twirl and time-reversal pairing described above. Explicitly, the randomized ensemble is obtained by applying sign flips and time reversal to the deterministic pulse, {see} Fig.~\ref{fig:randomized-cnot}(b). For comparison, we also develop a randomized GRAPE algorithm that directly optimizes over control ensembles and probabilities 
(see the SM).

Fig.~\ref{fig:randomized-cnot}(c) shows the diamond-distance error to the target CNOT channel. 
Denoting the deterministic GRAPE error by $\epsilon_\diamond^{\rm det}(T)$, the symmetry-generated randomized protocol closely follows $\left(\epsilon_\diamond^{\rm det}(T)\right)^2$, showing the same quadratic suppression predicted by the exactly solvable single-qubit example. {This behavior is consistent with cancellation of the leading coherent error.}

The {reason} is revealed by the deterministic relative error, whose leading component lies predominantly in the $Z_1X_2$ error. As discussed above, this error is unaffected by the target-preserving twirl, but is canceled by the time-reversal component of the symmetry protocol. The observed quadratic scaling therefore follows directly from coherent-error cancellation in the randomized ensemble. The symmetry-generated protocol is nearly indistinguishable from randomized GRAPE over the plotted range, suggesting that the symmetry-based construction captures the dominant mechanism underlying the randomized advantage in this setting. Further numerical details and error diagnostics are provided in the SM.

\textit{Robustness to noise---}
The error cancellation mechanism underlying randomized QOC can suppress coherent noise during {target gate implementation}. Consider {a setting where} the controlled dynamics $H_{\bm f}(t)$ of {Eq.}~\eqref{model} {is modified by} weak coherent noise,
\begin{align}
\widetilde{H}_{\bm f}(t):=H_{\bm f}(t)+\eta V, \qquad \eta\ll1,
\end{align}
where $V=V^\dagger$ and $\Tr(V)=0$. For simplicity, {assume} that a reference control $\bm f$ implements the target exactly, $U_{\bm f}(T)=U_{\rm targ}$. Then the perturbed evolution {is, to first order,} $U_\eta(T) = U_{\rm targ}e^{-i\eta F_V+\mathcal O(\eta^2)}$,
where $F_V = \int_0^T U_{\bm f}(t)^\dagger V U_{\bm f}(t)dt$.

{For this particular noise-robust protocol}, we construct the randomization through boundary pulses. Let $\{B_\alpha,p_\alpha\}$ be an ensemble of implementable unitaries. For each branch, we apply $B_\alpha$ before the reference control and
\begin{align}
C_\alpha := U_{\rm targ}B_\alpha^\dagger U_{\rm targ}^\dagger
\end{align}
after it. In the absence of noise, every branch yields the same target operation, $C_\alpha U_{\rm targ}B_\alpha = U_{\rm targ}$. Assuming for the moment instantaneous boundary pulses, the noisy evolution in branch $\alpha$ becomes $C_\alpha U_\eta(T) B_\alpha = U_{\rm targ}
\exp\left(-i\eta B_\alpha^\dagger F_V B_\alpha + \mathcal O(\eta^2)\right)$.
Averaging over the ensemble therefore replaces the coherent-noise generator $F_V$ by its averaged counterpart, $\overline F_V := \sum_\alpha p_\alpha B_\alpha^\dagger F_V B_\alpha$, yielding
\begin{align}
\sum_\alpha p_\alpha C_\alpha U_\eta(T) B_\alpha = U_{\rm targ}\left(I-i\eta\overline F_V + \mathcal O(\eta^2) \right).
\end{align}
{Thus, the present construction replaces the coherent-noise generator $F_V$ by $\overline F_V$, as randomized QOC replaces the coherent control error generator $E$ by its ensemble average $\overline E$.} When the ensemble is chosen so that $\overline F_V=0$, the first-order coherent-noise contribution is canceled and the error scales as $\mathcal O(\eta^2)$. This procedure may be viewed as a generalization of randomized dynamical decoupling \cite{yi2026faster} from identity to nontrivial target operations.

As a concrete example, consider any reference waveform that implements $U_{\rm targ} = {\rm CNOT}_{1\to2}$. exactly in the absence of noise. Choosing $B_\alpha$ uniformly from $\{I,X,Y,Z\}^{\otimes2}$ gives $\overline F_V = \frac1{16} \sum_B B^\dagger F_V B = 0$, so that the first-order coherent-noise contribution is completely removed. Since CNOT is {a} Clifford {gate}, each $C_\alpha = U_{\rm targ}B_\alpha^\dagger U_{\rm targ}^\dagger$ is again a Pauli string, e.g., $X_1\mapsto X_1X_2$, $Z_1\mapsto Z_1$, etc. Thus all required $B_\alpha$ and $C_\alpha$ remain experimentally simple.

Fig.~\ref{fig:boundary_noise_robustness_CNOT} illustrates the resulting noise-robust randomized CNOT protocol for coherent 1-local noise, $V = \sum_{j=1}^{2} \sum_{P\in\{X,Y,Z\}} v_{j,P}P_j,~ v_{j,P}\in[0,1]$. The randomized protocol exhibits the predicted $\mathcal O(\eta^2)$ diamond-distance scaling.

In practice, boundary pulses have finite duration $\tau_p$, which introduces an $\mathcal O(\eta\tau_p)$ contribution from noise acting during the pulses. However, this {linear} contribution can also be averaged {out}. In the SM, we construct randomized finite-width pulse sequences that implement Pauli operators with average error $\mathcal O(\eta^2\tau_p^2)$, thereby preserving the quadratic suppression in realistic implementations.
\begin{figure}
    \centering
    \includegraphics[width=7.6cm]{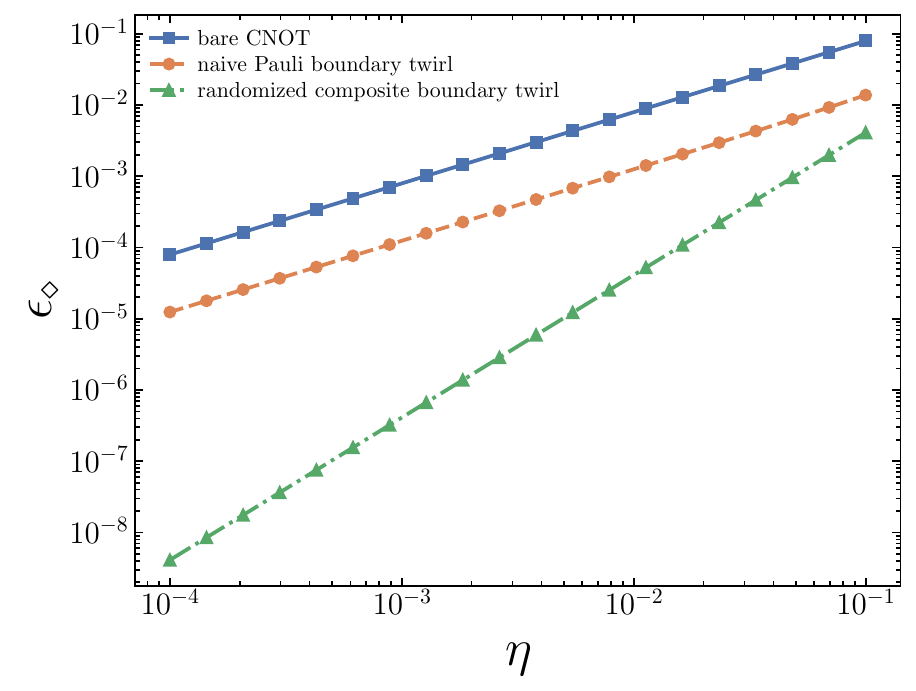}
    \caption{Noise robustness of CNOT implementation under coherent 1-local noise. The randomized composite boundary twirl suppresses the leading coherent-noise contribution and achieves $\mathcal{O}(\eta^2)$ diamond-distance scaling, whereas the bare CNOT and naive Pauli boundary-twirl protocols exhibit $\mathcal{O}(\eta)$ scaling.}
    \label{fig:boundary_noise_robustness_CNOT}
\end{figure}

\textit{Conclusion---}
{We showed that randomized quantum optimal control can outperform deterministic optimal control under identical resource constraints. In an exactly solvable single-qubit example, the optimal randomized error equals the square of the optimal deterministic error, implying a strict reduction in the minimum time required to achieve a given target accuracy.}

{The advantage relies in coherent-error cancellation between different control branches. Motivated by this principle, we developed symmetry-based constructions that generate randomized ensembles from deterministic controls and suppress leading-order errors. For CNOT implementation, these constructions reproduce the quadratic error suppression predicted by the solvable example and achieve performance comparable to direct randomized optimization. The same coherent-error-cancellation can be used to build noise-robust control protocols: randomized boundary-pulse schemes cancel leading-order coherent 1-local noise for Clifford targets.}

This work opens many avenues for future research, including extending the framework beyond mixed-unitary channels, establishing bounds on the power of randomized techniques, and tailoring these methods to hardware-specific gate implementations, particularly in regimes where coherent or biased noise dominates



\textit{Acknowledgments---}This work is support by the DOE Early Career Research Award No. DE-SC0026373. Additional support by DOE Express Award Number DE-SC0024685 is acknowledged.

\bibliography{prl_ref}

\clearpage
\onecolumngrid

\begin{center}
\textbf{\large Supplemental Material}
\end{center}

\setcounter{table}{0}
\renewcommand{\thetable}{S\arabic{table}}
\setcounter{figure}{0}
\renewcommand{\thefigure}{S\arabic{figure}}
\setcounter{equation}{0}
\renewcommand{\theequation}{S\arabic{equation}}
\setcounter{secnumdepth}{3}
\setcounter{theorem}{0}
\renewcommand{\thetheorem}{S\arabic{theorem}}
\setcounter{lemma}{0}
\renewcommand{\thelemma}{S\arabic{lemma}}
\setcounter{proposition}{0}
\renewcommand{\theproposition}{S\arabic{proposition}}

\section{Deterministic optimum for the single-qubit example}

In this section, we prove the deterministic optimal control result used in the single-qubit example in the main text. We begin by stating the precise problem statement. 

\begin{problem}[Deterministic optimal control problem]
\label{prob:sm_det_oc}
Consider the controlled single-qubit Hamiltonian
\begin{align}
    H_f(t)=Z+f(t)X,\qquad |f(t)|\le f_{\max},
\end{align}
and the target unitary channel
\begin{align}
    U_{\rm targ}=e^{-i\pi Y/2}, \qquad \mathcal U_{\rm targ}(\rho) := U_{\rm targ}\rho U_{\rm targ}^\dagger .
\end{align}
For an admissible control $f:[0,T]\to[-f_{\max},f_{\max}]$, let
\begin{align}
    U_f(T):= \mathcal T\exp\left(-i\int_0^T H_f(t)dt
    \right), \qquad \mathcal U_f(\rho) := U_f(T)\rho U_f(T)^\dagger .
\end{align}
For given fixed $T$ and $f_{\max}$, the deterministic optimal control problem is
\begin{align}
    \epsilon_{\diamond}^{\rm det,\star}(T) := \min_{f: |f(t)|\le f_{\max}}
    \frac12 \left\| \mathcal U_f-\mathcal U_{\rm targ}\right\|_\diamond.
    \label{sm_det_oc_problem}
\end{align}
\end{problem}

We now state the main theorem of this section.

\begin{theorem}[Deterministic optimal control solution] \label{thm:sm_det_opt}
Assume that $0<f_{\max}\le 1$ and $T\sqrt{1+f_{\max}^2} < \pi$. Then the following symmetric two-bang control
\begin{align}
    f_+(t)=
    \begin{cases}
        +f_{\max}, & 0\le t<T/2,\\
        -f_{\max}, & T/2\le t\le T
    \end{cases}
    \label{sm_f_plus_det}
\end{align}
solves Problem~\ref{prob:sm_det_oc}. The corresponding minimum error is
\begin{align}
\epsilon_{\diamond}^{\rm det,\star}(T) = \sqrt{ 1- \left( \frac{2f_{\max}}{1+f_{\max}^2} \sin^2\left(\frac{T\sqrt{1+f_{\max}^2}}{2} \right)\right)^2}.
\end{align}

\end{theorem}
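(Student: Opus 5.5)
The plan is to reduce the diamond distance to a scalar functional of the final unitary and then solve the resulting time-constrained maximization with Pontryagin's maximum principle (PMP).

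\emph{Step 1 (reduction to a scalar objective).} For two unitary channels, $\frac12\|\mathcal U_f-\mathcal U_{\rm targ}\|_\diamond=\sqrt{1-\nu^2}$, where $\nu$ is the distance from the origin to the convex hull of the spectrum of $W:=U_{\rm targ}^\dagger U_f(T)$. Since $H_f$ is traceless, $U_f(T)\in SU(2)$ and we can write $U_f(T)=a_0I-i(a_xX+a_yY+a_zZ)$ with $a_0^2+|\bm a|^2=1$. Using $U_{\rm targ}^\dagger=iY$, the eigenvalues of $W$ are $e^{\pm i\theta}$ with $\cos\theta=\tfrac12\mathrm{Re}\,\mathrm{Tr}(W)=a_y$. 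The two eigenvalues are complex conjugates, so their convex hull is a vertical segment, and for a qubit $\nu=|a_y|$. Problem~\ref{prob:sm_det_oc} is therefore equivalent to maximizing $|a_y(T)|$ over admissible controls. The sign of $a_y$ can be flipped by $f\mapsto -f$, by the $Z$-conjugation symmetry noted in the main text, so it suffices to maximize $a_y$ itself.

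\emph{Step 2 (evaluate the candidate).} For $f_+$ the evolution is $e^{-i\frac T2(Z-f_{\max}X)}e^{-i\frac T2(Z+f_{\max}X)}$. Set $\omega=\sqrt{1+f_{\max}^2}$ and write each factor as $\cos\frac{\omega T}{2}-i\sin\frac{\omega T}{2}\,\hat n_\mp\cdot\bm\sigma$. The product has $Y$-component $\sin^2\!\frac{\omega T}{2}\,(\hat n_-\times\hat n_+)_y$ up to sign, and $|(\hat n_-\times\hat n_+)_y|=2f_{\max}/(1+f_{\max}^2)$. This reproduces the stated formula once substituted into Step 1.

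\emph{Step 3 (optimality via PMP).} The objective $a_y(T)$ is linear in the quaternion, so the PMP Hamiltonian is affine in $f$, with switching function $\phi(t)=\mathrm{Tr}(\Lambda(t)X)$-type, where $\Lambda$ is the costate transported along the trajectory. Where $\phi\neq0$ the control is bang, $f=\pm f_{\max}$. I will first rule out singular arcs: on a singular arc $\phi\equiv\dot\phi\equiv0$, and the second-derivative (Legendre--Clebsch) condition forces $f=0$. A direct comparison, where the hypothesis $f_{\max}\le1$ should enter, then shows that inserting a zero-control arc does not increase $a_y$ in the short-time regime. For bang--bang extremals of $Z\pm f_{\max}X$, the standard argument (as in Boscain--Piccoli-type analyses of two-level systems) shows that interior bang arcs all have equal length $\tau$, with $\pi/\omega\le\omega\tau\le 2\pi/\omega$-type bounds. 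Since $T\omega<\pi$, this excludes any extremal with two or more switches, so at most one switch occurs. One then optimizes over the single switching time $s\in[0,T]$, together with the no-switch controls $f\equiv\pm f_{\max}$. Computing $a_y(s)$ explicitly gives a function proportional to $\sin(\omega s/2)\sin(\omega(T-s)/2)$, up to constant and sign factors, which is maximized at $s=T/2$ by concavity and symmetry. The no-switch controls give $a_y=0$, since the rotation axis $\hat n_\pm$ lies in the $xz$-plane.

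\emph{Main obstacle.} The delicate step is the switching-structure argument: excluding singular and multi-switch extremals, and handling whether the PMP maximum is attained, i.e., the existence of an optimum, which follows from Filippov's theorem since the control set is convex and compact and the dynamics are affine in the control. I would first attempt a direct costate computation in the $su(2)\cong\mathbb R^3$ picture, where the costate rotates about the current axis $\hat n_\pm$ at rate $\omega$. From this, the time between zeros of $\phi$ should be computable in closed form and shown to exceed $T$ whenever $T\omega<\pi$. If that fails, a fallback is a reachable-set comparison: bound $a_y$ for a general control by a Grönwall-type estimate on the rotation angle accumulated out of the $xz$-plane.
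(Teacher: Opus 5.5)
Your Steps 1 and 2 are correct, and your overall route is the paper's: reduce to maximizing $a_y(T)$, apply PMP, then compare a finite list of candidates. The switching-structure part of Step 3, however, fails as stated.

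\emph{The switching-gap bound is off by a factor of two.} $e^{-itH_\pm}$ rotates the Bloch sphere, and the costate in the adjoint picture, at rate $2\omega$, not $\omega$. Write $\phi=p^{T}A_Xq$, $\eta=p^{T}A_Yq$, $\chi=p^{T}A_Zq$ as in the paper, and let $\mathscr H_{\max}=\chi+f_{\max}|\phi|$ be the constant maximized Hamiltonian. On a bang arc of sign $s$ leaving a zero $t_0$ one finds $\phi(t_0+\tau)=\frac{s}{\omega^2}\bigl(A(1-\cos 2\omega\tau)+B\sin 2\omega\tau\bigr)$, with $B\ge0$ and $A=f_{\max}\mathscr H_{\max}$.
\begin{itemize}
\item For $A\ge0$ the next zero lies in $[\pi/(2\omega),\pi/\omega]$. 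The $[\pi,2\pi]$-type range you may have in mind from time-optimal Bloch-sphere analyses is measured in rotation angle $2\omega\tau$.
\item For $A=0$ the next zero is at exactly $\pi/(2\omega)$, which is $<T$ once $\omega T>\pi/2$.
\end{itemize}
So $\omega T<\pi$ excludes only three or more switchings. Three-arc controls survive, and your planned closed-form argument that the time between zeros exceeds $T$ would fail.

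\emph{The sign of $\mathscr H_{\max}$ is not addressed.} Even the weaker bound needs $\mathscr H_{\max}\ge0$. That sign is automatic in time-optimal problems, but not in this fixed-time Mayer problem. If $A<0$, the next zero is at $\omega^{-1}\arctan(B/|A|)$, which can be arbitrarily short, so extremals with many switchings are not excluded at all. The paper supplies the missing idea with the symmetry $g(t)=-f(T-t)$. Since $H_g(t)=-XH_f(T-t)X$, one has $U_g(T)=XU_f(T)^\dagger X$, which keeps $y(T)$ and flips $x(T)$. Hence some optimal control has $x(T)\ge0$, and $p(T)=e_y$ then gives $\mathscr H_{\max}=x(T)+f_{\max}|z(T)|\ge0$.

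\emph{The two-switch family still has to be handled.} The paper's formula gives $y(T)=s\frac{2f_{\max}}{\omega^2}\sin(\omega t_2)\sin\bigl(\omega(t_1-t_3)\bigr)\le\frac{2f_{\max}}{\omega^2}\sin^2(\omega T/2)$ for arcs of durations $t_1,t_2,t_3$ and signs $s,-s,s$. Alternatively, at a switching time one has $\partial y(T)/\partial\sigma=\bigl(f(\sigma^-)-f(\sigma^+)\bigr)\phi(\sigma)$. So the switching times of an extremal are critical points of this function, and it has none with all $t_i>0$ when $\omega T<\pi$. Your ``at most one switch'' conclusion is therefore true, just not for your reason.

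\emph{The singular-arc step is incomplete.} On a singular arc, $\ddot\phi=0$ gives $f\chi=0$, not $f=0$; Legendre--Clebsch only constrains the sign of $\chi$. You need $\chi\ne0$. This holds because $\phi=\eta=\chi=0$ at one instant forces $p\equiv\pm q$, hence $|y(T)|=1$. You must therefore prove separately that the target is not exactly reachable when $\omega T<\pi$; the paper does this with a Wirtinger-inequality argument.

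\emph{The bang--off--bang step needs a structural lemma first.} Before the comparison (where $f_{\max}\le1$ enters through $G'(u)\ge0$), you must show that at most one bang arc precedes and at most one follows the off arc. The paper gets this from the fact that a bang arc between two zeros of $\phi$, one of which also has $\eta=0$, lasts at least $\pi/\omega$.

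\emph{Minor points.} The one-switch value is $\frac{2f_{\max}}{\omega^2}\sin(\omega s)\sin\bigl(\omega(T-s)\bigr)$, with full rather than half angles, though the maximizer $s=T/2$ is unchanged. Your appeal to Filippov's theorem for existence of an optimum covers a point the paper leaves implicit.
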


The proof proceeds in three steps. First, we rewrite Problem~\ref{prob:sm_det_oc} as the problem of maximizing the $Y$-component $|y(T)|$ of the final unitary, and then use a symmetry of the control system to reduce this further to maximizing $y(T)$ (Section~\ref{subsec:dia-to-Y}). Second, we apply the Pontryagin maximum principle to characterize the possible optimal extremals. In particular, we show that an optimal extremal can be chosen from a finite family of bang-bang or bang-off-bang controls (Section~\ref{subsec:pmp-reduction}).  Finally, we evaluate these candidate families explicitly and show the two-bang protocol \eqref{sm_f_plus_det} maximizes $y(T)$, thus proving Theorem~\ref{thm:sm_det_opt} (Section~\ref{subsec:candidate-comparison}).

\subsection{From diamond distance to the $Y$-component objective}\label{subsec:dia-to-Y}

In this subsection, we reduce Problem~\ref{prob:sm_det_oc} to the problem of maximizing the final $Y$-component of the final unitary. 

We first note the following elementary form of the diamond distance between unitary channels for $SU(2)$. 

\begin{proposition}[Diamond distance for $SU(2)$ unitary channels]
\label{prop:sm_su2_diamond_distance}
Let $U,V \in SU(2)$, and let
\begin{align}
\mathcal U(\rho):=U\rho U^\dagger,\qquad \mathcal V(\rho):=V\rho V^\dagger .
\end{align}
Then
\begin{align}
\frac12 \left\| \mathcal U-\mathcal V \right\|_\diamond = \sqrt{1- \left| \frac12\Tr(U^\dagger V)\right|^2}.
\label{sm_su2_diamond_distance}
\end{align}
\end{proposition}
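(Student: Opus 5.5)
The plan is to reduce the diamond distance between two unitary channels to a statement about the single unitary $W:=U^\dagger V\in SU(2)$, and then use the standard fact that for unitary channels the diamond norm is governed by the numerical range of $W$. First, by unitary invariance of the diamond norm (pre- and post-composition with unitary channels does not change it), we have $\frac12\|\mathcal U-\mathcal V\|_\diamond=\frac12\|\mathrm{id}-\mathcal W\|_\diamond$ with $\mathcal W(\rho)=W\rho W^\dagger$. Since $W\in SU(2)$, its eigenvalues are $e^{\pm i\theta}$ for some $\theta\in[0,\pi]$, and $\frac12\Tr W=\cos\theta$. The target formula therefore reads $\sqrt{1-\cos^2\theta}=|\sin\theta|$, so it suffices to show $\frac12\|\mathrm{id}-\mathcal W\|_\diamond=\sin\theta$ when the eigenvalue arc has angular width $2\theta\le\pi$, and $=1$ otherwise. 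Note that $|\cos\theta|$ is symmetric under $\theta\mapsto\pi-\theta$, which corresponds to replacing $W$ by $-W$. This represents the same channel, so we may assume $\theta\in[0,\pi/2]$.

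For the \emph{lower bound}, we use an input pure state $\psi=(|e_1\rangle+|e_2\rangle)/\sqrt2$ built from the two eigenvectors of $W$. No ancilla is needed, because the trace distance between two pure states is $\sqrt{1-|\langle\psi|W|\psi\rangle|^2}$. Here $\langle\psi|W|\psi\rangle=\cos\theta$, which gives $\sin\theta$.

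For the \emph{upper bound}, take any input $|\Psi\rangle$ on system plus ancilla. The outputs $|\Psi\rangle$ and $(W\otimes I)|\Psi\rangle$ are pure, so their trace distance is $\sqrt{1-|\langle\Psi|W\otimes I|\Psi\rangle|^2}$. Moreover $\langle\Psi|W\otimes I|\Psi\rangle=\Tr(\rho_S W)=q e^{i\theta}+(1-q)e^{-i\theta}$ for some $q\in[0,1]$, where $\rho_S$ is the reduced state. This is a point on the chord joining $e^{\pm i\theta}$, whose modulus is at least $\cos\theta$, with the minimum at $q=1/2$. Hence the trace distance is at most $\sqrt{1-\cos^2\theta}$. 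Since the diamond norm of a difference of unitary channels is attained on pure inputs (by convexity of trace distance), this gives the matching upper bound.

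The only subtle step is the upper bound, specifically justifying the reduction to pure inputs and using the minimum modulus over the numerical range. With $\theta\le\pi/2$ that minimum is $\cos\theta\ge0$, and the $\pm W$ freedom is exactly what makes $|\frac12\Tr(U^\dagger V)|$, with the absolute value, the correct quantity. Everything else is a routine two-dimensional computation.
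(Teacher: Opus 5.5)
Your proof is correct and follows essentially the paper's route. The paper cites Watrous' Theorem~3.55 to write the distance as $\sqrt{1-\delta(W)^2}$, with $\delta(W)$ the minimum modulus of the numerical range of $W=U^\dagger V$, and then finds $\delta(W)=|\alpha|$ via the Pauli parametrization $W=\alpha I-i\bm\beta\cdot\bm\sigma$. You instead prove that reduction directly for this case, using the eigenvector-superposition input for the lower bound and the fact that $\langle\Psi|W\otimes I|\Psi\rangle=\Tr(\rho_S W)$ lies on the chord for the upper bound, and you do the same minimization in the eigenbasis.

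One slip: your aside that the distance equals $1$ when $2\theta>\pi$ is false. For example, $W=-I$ has $\theta=\pi$ but gives distance $0$. The aside is never used, however. In fact $|qe^{i\theta}+(1-q)e^{-i\theta}|^2=\cos^2\theta+(2q-1)^2\sin^2\theta\ge\cos^2\theta$ holds for every $\theta\in[0,\pi]$, so even your $W\mapsto -W$ reduction is unnecessary.
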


\begin{proof}
For two unitary channels, Theorem~3.55 of Ref.~\cite{watrous2018the}, with $\lambda=1/2$ and with the isometries specialized to unitaries, implies
\begin{align}
\label{sm_unitary_channel_distance}
    \frac12 \left\|\mathcal U-\mathcal V\right\|_\diamond = \sqrt{1-\delta(U^\dagger V)^2}, \qquad \delta(W):= \min_{\|u\|=1}|u^\dagger W u|.
\end{align}
Let $W=U^\dagger V\in SU(2)$.  Every element of $SU(2)$ can be written as
\begin{align}
    W=\alpha I-i\bm\beta\cdot\bm\sigma, \qquad \alpha\in\mathbb R,\quad \bm\beta\in\mathbb R^3,\quad \alpha^2+\|\bm\beta\|^2=1.
\end{align}
For a pure state with Bloch vector $\bm r$, $\|\bm r\|=1$, we have $    \bra{\psi} W \ket{\psi} = \alpha-i\bm\beta\cdot\bm r$. Therefore $|\bra{\psi} W \ket{\psi}|^2 = \alpha^2+(\bm\beta\cdot\bm r)^2 \ge \alpha^2$, where equality is achieved by choosing $\bm r\perp \bm\beta$, which is always possible. Hence, $\delta(W)=|\alpha|$, and since $\alpha=\frac12\Tr(W)=\frac12\Tr(U^\dagger V)$, we obtain \eqref{sm_su2_diamond_distance}.
\end{proof}

We now apply Proposition~\ref{prop:sm_su2_diamond_distance} to the target $U_{\rm targ}=e^{-i\pi Y/2}=-iY$. 

\begin{lemma}[Reduction to the $Y$-component]
\label{lem:sm_reduction_to_y}
Let $U_f(T) = a_f(T)I-i\bigl(x_f(T)X+y_f(T)Y+z_f(T)Z\bigr)$. Then
\begin{align}
\label{sm_diamond_y_relation}
\frac12 \left\|\mathcal U_f-  \mathcal U_{\rm targ}\right\|_\diamond = \sqrt{1-y_f(T)^2}.
\end{align}
Consequently, Problem~\ref{prob:sm_det_oc} is equivalent to maximizing $|y_f(T)|$ over all admissible controls.
\end{lemma}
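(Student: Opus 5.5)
The plan is to apply Proposition~\ref{prop:sm_su2_diamond_distance} with $U=U_f(T)$ and $V=U_{\rm targ}$. First we note that both unitaries lie in $SU(2)$. The Hamiltonian $H_f(t)=Z+f(t)X$ is traceless for every $t$, so the time-ordered exponential $U_f(T)$ has determinant $\exp(-i\int_0^T\Tr H_f\,dt)=1$. Hence it lies in $SU(2)$ and admits the stated parametrization $U_f(T)=a_f I-i(x_fX+y_fY+z_fZ)$ with real coefficients satisfying $a_f^2+x_f^2+y_f^2+z_f^2=1$. The target $U_{\rm targ}=e^{-i\pi Y/2}=\cos(\pi/2)I-i\sin(\pi/2)Y=-iY$ also lies in $SU(2)$, since $\det(-iY)=(-i)^2\det Y=(-1)(-1)=1$.

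Next we compute $\frac12\Tr(U_f(T)^\dagger U_{\rm targ})$. We have $U_f(T)^\dagger=a_f I+i(x_fX+y_fY+z_fZ)$, so
\begin{align}
U_f(T)^\dagger U_{\rm targ}= -i a_f Y + (x_fXY+y_fY^2+z_fZY).
\end{align}
Using $\Tr(Y)=\Tr(XY)=\Tr(ZY)=0$ and $\Tr(Y^2)=2$, the half-trace is exactly $y_f(T)$, which is real. Substituting this into Eq.~\eqref{sm_su2_diamond_distance} gives $\frac12\|\mathcal U_f-\mathcal U_{\rm targ}\|_\diamond=\sqrt{1-y_f(T)^2}$. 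This is Eq.~\eqref{sm_diamond_y_relation}.

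The equivalence claim then follows because $s\mapsto\sqrt{1-s^2}$ is strictly decreasing on $s\in[0,1]$, and $|y_f(T)|\le 1$ by the normalization. Therefore minimizing the diamond distance over admissible $f$ is the same as maximizing $|y_f(T)|$, and the two problems have the same minimizers.

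There is no substantive obstacle. The only points needing care are the determinant-one argument, which justifies using Proposition~\ref{prop:sm_su2_diamond_distance} rather than a $U(2)$ version with an extra phase, and the sign conventions in the trace computation. Even a sign slip there would be harmless, because only $y_f(T)^2$ enters the final formula.
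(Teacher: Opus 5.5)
Your proposal is correct and follows essentially the same route as the paper: apply Proposition~\ref{prop:sm_su2_diamond_distance} with $U_{\rm targ}=-iY$, and compute the half-trace to be $y_f(T)$. Your determinant check (a traceless $H_f$ gives $U_f(T)\in SU(2)$) and the monotonicity argument for the equivalence make explicit two steps the paper leaves implicit.
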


\begin{proof}
Since $U_{\rm targ}=e^{-i\pi Y/2}=-iY$, we have
\begin{align}
\frac12\Tr\left(U_{\rm targ}^\dagger U_f(T)\right) = \frac12\Tr\left(iY\left(a_f I-i(x_fX+y_fY+z_fZ)\right)\right) = y_f(T).
\end{align}
The claim follows directly from Proposition~\ref{prop:sm_su2_diamond_distance}.
\end{proof}

The preceding lemma reduces Problem~\ref{prob:sm_det_oc} to the problem of maximizing $|y_f(T)|$. We next use a symmetry of the dynamics to remove the absolute value. 

\begin{lemma}[Sign symmetry] \label{lem:sm_sign_symmetry}
For the control system in Problem~\ref{prob:sm_det_oc}, the maximization of $|y_f(T)|$ is equivalent to the maximization of $y_f(T)$:
\begin{align} \label{sm_abs_y_to_y}
\max_{f: |f(t)| \le f_{\max}} |y_f(T)|= \max_{f: |f(t)| \le f_{\max}} y_f(T).
\end{align}
\end{lemma}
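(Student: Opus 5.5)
We need a symmetry that maps each admissible control to another admissible control while flipping the sign of the final $Y$-component. Once we have it, the claimed equality follows at once: for any $f$, either $y_f(T)\ge 0$ or the image control has $Y$-component $-y_f(T)=|y_f(T)|$. Hence $\sup_f |y_f(T)| \le \sup_f y_f(T)$, and the reverse inequality is trivial. Both maxima are attained, because the reachable set under bounded measurable controls is compact (standard Filippov-type argument), so the equality holds for maxima as well as suprema.

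For the symmetry, I would use conjugation by $Z$, which is the same mechanism the main text uses. Since $ZXZ=-X$ and $ZZZ=Z$, we have
\begin{align*}
Z H_f(t) Z = Z - f(t)X = H_{-f}(t).
\end{align*}
Therefore $U_{-f}(T)=Z U_f(T) Z$, and $-f$ is admissible because $|-f(t)|\le f_{\max}$. Writing $U_f(T)=a I - i(xX+yY+zZ)$, conjugation by $Z$ gives
\begin{align*}
Z U_f(T) Z = a I - i(-xX - yY + zZ).
\end{align*}
This yields $y_{-f}(T)=-y_f(T)$, which is exactly the sign flip we need.

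No step is a serious obstacle. The only point needing care is that the maxima in \eqref{sm_abs_y_to_y} exist. If one prefers not to invoke compactness of the reachable set, one can state the lemma for suprema; the subsequent Pontryagin analysis then supplies the maximizer anyway.
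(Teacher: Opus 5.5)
Your proof is correct and matches the paper's own argument: conjugation by $Z$ sends $H_f$ to $H_{-f}$, so $U_{-f}(T)=ZU_f(T)Z$, which flips the sign of $y_f(T)$ while keeping the control admissible. Your remark that the maxima are attained, by compactness of the reachable set, is a point the paper leaves implicit and does not change the argument.
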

\begin{proof}
If $f$ is admissible, then $-f$ is also admissible. Moreover, $H_{-f}(t) =Z-f(t)X=ZH_f(t)Z$. Let $U_f(t)$ be the unitary evolution generated by $H_f(t)$, then $U_{-f}(T)=ZU_f(T)Z$. Therefore, if $U_f(T) = a_f I-i(x_fX+y_fY+z_fZ)$, then, using $ZXZ=-X$, $ZYZ=-Y$, and $ZZZ=Z$, we obtain $U_{-f}(T) = a_f I-i(-x_fX-y_fY+z_fZ)$. Thus replacing $f$ by $-f$ sends $y_f(T)$ to $-y_f(T)$. Hence every achievable value of $y_f(T)$ is accompanied by its negative, and therefore maximizing $|y_f(T)|$ is equivalent to maximizing $y_f(T)$.
\end{proof}

We also show that the short-time condition itself rules out exact reachability of the target channel.

\begin{lemma}[No exact implementation in the short-time regime] \label{lem:sm_no_exact_short_time} 
Set $\Omega:=\sqrt{1+f_{\max}^2}$. If $\Omega T<\pi$, then no admissible control satisfying $|f(t)|\le f_{\max}$ exactly implements the target channel $\mathcal U_{\rm targ}(\rho) = Y \rho Y$. 
\end{lemma}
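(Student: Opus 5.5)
The plan is to argue by contradiction through the Bloch-sphere picture, reducing exact implementability to a minimum-time statement for transferring the north pole to the south pole. Suppose some admissible $f$ gives $\mathcal U_f=\mathcal U_{\rm targ}$. Since $U_f(T)\in SU(2)$, this forces $U_f(T)=\pm iY$, i.e.\ $|y_f(T)|=1$ in the notation of Lemma~\ref{lem:sm_reduction_to_y}. By Lemma~\ref{lem:sm_sign_symmetry} we may assume $y_f(T)=1$. On the Bloch sphere, $U_f(T)$ is then the rotation by $\pi$ about the $Y$ axis. In particular it maps the Bloch vector $\bm r(0)=\hat z$ of $\ket{0}$ to $\bm r(T)=-\hat z$. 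So it suffices to show that no admissible control steers $\hat z$ to $-\hat z$ in time $T<\pi/\Omega$.

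The Bloch vector obeys $\dot{\bm r}=2\bm h(t)\times\bm r$ with $\bm h=(f(t),0,1)$. This is an instantaneous rotation about an axis in the $xz$-plane, tilted from $\hat z$ by the angle $\alpha(t)=\arctan|f(t)|\le\alpha_{\max}:=\arctan f_{\max}\le\pi/4$, at angular speed $2\sqrt{1+f^2}\le 2\Omega$. Let $\theta(t)$ be the polar angle of $\bm r(t)$. I will show that $\theta$ increases at most as fast as along the symmetric two-bang trajectory. The key geometric fact is this: under rotation about an axis $\bm n$ making angle $\alpha$ with $\hat z$, the polar angle of a point changes at rate at most $2|\bm h|\sin\alpha\cdot(\text{a factor depending on the angular distance to }\bm n)$. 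The factor saturates only when $\bm r$, $\bm n$, $\hat z$ are suitably aligned. Concretely, I would write $\frac{d}{dt}\cos\theta=\dot r_z=2f r_y$ and combine it with the constraint that $\bm r$ moves on cones about admissible axes. The aim is a comparison inequality for the pair $(\theta,\text{distance to the current axis})$. From it I would conclude that reaching $\theta=\pi$ requires two full half-turns of total rotation angle at least $2\pi$ at angular speed $\le 2\Omega$, the first about an axis tilted by $+\alpha$ and the second about one tilted by $-\alpha$. That gives $T\ge\pi/\Omega$, with equality only at $f_{\max}=1$ via $f_+$ with $\Omega T=\pi$.

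The hardest step is making this comparison rigorous for arbitrary measurable $f$, not just bang-bang controls. The crude bound $|\dot\theta|\le 2f_{\max}$ from $|\dot r_z|\le 2f_{\max}\sin\theta$ only gives $T\ge\pi/(2f_{\max})$, which suffices when $f_{\max}^2\le1/3$ but not up to $f_{\max}=1$. The speed bound $|\dot{\bm r}|\le2\Omega$ alone only gives $T\ge\pi/(2\Omega)$. My first attempt would therefore be a Lyapunov-type function $W(\bm r)$ equal to the minimal time to reach $-\hat z$ along the candidate two-bang synthesis. I would verify the Hamilton--Jacobi inequality $\max_{|f|\le f_{\max}}\nabla W\cdot(2\bm h\times\bm r)\ge-1$ directly, using that the dynamics are affine in $f$, so the maximum is attained at $f=\pm f_{\max}$. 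If this verification is too messy, the fallback is a Pontryagin argument specialized to the state-transfer problem. A time-optimal transfer is bang-bang with switching times governed by $\Omega$, because singular arcs force $f=0$ and the drift $Z$ alone fixes $\hat z$ and is useless. Among the resulting finite families of bang-bang trajectories, one checks that the shortest one reaching $-\hat z$ has duration $\pi/\Omega$, contradicting $\Omega T<\pi$.
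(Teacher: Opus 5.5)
You reduce correctly: an exact implementation forces $U_f(T)=\pm iY$, which sends $\hat z$ to $-\hat z$. Your estimate $|\dot\theta|\le 2f_{\max}$ also settles $f_{\max}^2\le 1/3$. But for $1/3<f_{\max}^2\le 1$, the whole content of the lemma becomes the claim that the north-to-south transfer needs time at least $\pi/\Omega$, and you do not prove it. That claim is in fact true for $f_{\max}\le 1$, but proving it is a Boscain--Mason-type time-optimal synthesis problem, considerably harder than the lemma itself. The routes you sketch do not work as stated:
\begin{itemize}
\item The ``two half-turns'' picture behind your comparison and Lyapunov routes is correct only at $f_{\max}=1$. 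A bang arc about an axis tilted by $\alpha=\arctan f_{\max}$ keeps the distance $\rho$ to that axis fixed, so the polar angle stays in $[|\rho-\alpha|,\rho+\alpha]$ and changes by at most $2\alpha$. For $f_{\max}<1$, two arcs therefore reach polar angle at most $4\alpha<\pi$, and any bang-bang transfer needs at least three arcs. There is no two-bang synthesis reaching $-\hat z$ from which to define your $W$.
\item A verification function lower-bounds the transfer time of \emph{every} control only if $\nabla W\cdot(2\bm h\times\bm r)\ge -1$ holds for all admissible $f$, i.e.\ with $\min_{|f|\le f_{\max}}$. The $\max$ version you wrote is satisfied by any synthesis along its own trajectories and certifies nothing.
\item In the PMP fallback, ``the drift fixes $\hat z$'' does not dispose of singular arcs. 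For this state-transfer problem they lie on the equator, where the drift does move the state.
\item The claim that the shortest bang-bang transfer lasts $\pi/\Omega$ is inaccurate for $f_{\max}<1$: it needs at least three arcs and is strictly longer. What you would need is a switching-gap lemma plus an explicit check that no extremal with at most one interior arc joins the poles in time below $\pi/\Omega$. None of this is carried out.
\end{itemize}

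The missing idea is to keep the phase information that the Bloch-sphere projection throws away. Write $U_f=aI-i(xX+yY+zZ)$. State transfer only gives $a(T)=z(T)=0$, whereas $U_f(T)=\pm iY$ also gives $x(T)=0$, so $B:=x+iz$ vanishes at both $t=0$ and $t=T$. The paper sets $A:=a+iy$ and derives $\dot A=(i-f)B$, $\dot B=(i+f)A$. It integrates $\frac{d}{dt}(A^*B)=(i+f)(|A|^2-|B|^2)$ over $[0,T]$ and takes imaginary parts to get $\int_0^T|A|^2dt=\int_0^T|B|^2dt=T/2$. It then combines Wirtinger's inequality $\int_0^T|\dot B|^2dt\ge(\pi/T)^2\int_0^T|B|^2dt$ with $|\dot B|^2=(1+f^2)|A|^2\le\Omega^2|A|^2$ to conclude $\Omega T\ge\pi$. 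This covers every measurable control in a few lines and does not even use $f_{\max}\le 1$.
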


\begin{proof} 
Write $U_f(t)=a(t)I-i\bigl(x(t)X+y(t)Y+z(t)Z\bigr)$ and define \begin{align} 
A(t):=a(t)+iy(t),\qquad B(t):=x(t)+iz(t). 
\end{align} 
The Schr\"odinger equation with $H_f(t)=Z+f(t)X$ gives 
\begin{align} 
\dot A=(i-f)B,\qquad \dot B=(i+f)A. 
\end{align} 
Moreover, unitarity gives $|A(t)|^2+|B(t)|^2=1$, and the initial condition $U_f(0)=I$ gives $A(0)=1$ and $B(0)=0$.

Now define 
\begin{align}
W(t):=A(t)^*B(t),\qquad D(t):=|A(t)|^2-|B(t)|^2. 
\end{align} 
Using the equations for $A$ and $B$, we obtain 
\begin{align} 
\dot W = \dot A^*B+A^*\dot B = (-i-f)|B|^2+(i+f)|A|^2 = (i+f)D. 
\end{align}

Now assume that there exists an admissible control $f$ that implements the target channel exactly at time $T$. Then, since $B(0)=B(T)=0$, we have $W(0)=W(T)=0$. Therefore 
\begin{align} 
0 = W(T)-W(0) = \int_0^T (i+f(t))D(t) dt. 
\end{align} 
Taking imaginary parts gives 
\begin{align} 
\int_0^T D(t) dt=0. 
\end{align} 
Together with $|A(t)|^2+|B(t)|^2=1$, this implies 
\begin{align} \int_0^T |A(t)|^2 dt = \int_0^T |B(t)|^2 dt = \frac{T}{2}.
\end{align}
Since $B(0)=B(T)=0$, the Wirtinger inequality \cite{hardy1952inequalities} gives 
\begin{align} 
\int_0^T |\dot B(t)|^2 dt \ge \frac{\pi^2}{T^2} \int_0^T |B(t)|^2 dt = \frac{\pi^2}{2T}. 
\end{align} 
On the other hand, 
\begin{align} 
|\dot B(t)|^2 = |i+f(t)|^2|A(t)|^2 = (1+f(t)^2)|A(t)|^2 \le \Omega^2 |A(t)|^2. 
\end{align}
Hence 
\begin{align} 
\int_0^T |\dot B(t)|^2 dt \le \Omega^2 \int_0^T |A(t)|^2 dt = \frac{\Omega^2T}{2}. 
\end{align} 
Combining the two bounds yields
\begin{align} 
\frac{\pi^2}{2T} \le \frac{\Omega^2T}{2}, 
\end{align} 
and therefore $\Omega T\ge \pi$. This contradicts the assumption $\Omega T<\pi$. Hence exact implementation is impossible in the short-time regime.
\end{proof}

\subsection{Pontryagin maximum principle and structure of extremal solutions} \label{subsec:pmp-reduction}
Combining Lemmas~\ref{lem:sm_reduction_to_y} and \ref{lem:sm_sign_symmetry}, it remains to solve the maximization problem
\begin{align}
\label{sm_y_max_problem}
    \max_{f: |f(t)|\le f_{\max}} y_f(T).
\end{align}
We now characterize the possible optimal extremals for \eqref{sm_y_max_problem} using the Pontryagin maximum principle.

To do so, we first denote the unitary evolution as
\begin{align}
U_f(t) = a(t)I-i\bigl(x(t)X+y(t)Y+z(t)Z\bigr),
\end{align}
and define
\begin{align}
q(t) := (a(t),x(t),y(t),z(t))^T.
\end{align}
In these coordinates, left multiplication by $-iX$, $-iY$, and $-iZ$ is represented by the real skew-symmetric matrices
\begin{align}
A_X &= \begin{pmatrix}
0&-1&0&0\\
1&0&0&0\\
0&0&0&-1\\
0&0&1&0
\end{pmatrix},&
A_Y &= \begin{pmatrix}
0&0&-1&0\\
0&0&0&1\\
1&0&0&0\\
0&-1&0&0
\end{pmatrix},&
A_Z &=
\begin{pmatrix}
0&0&0&-1\\
0&0&-1&0\\
0&1&0&0\\
1&0&0&0
\end{pmatrix}.
\end{align}
Equivalently,
\begin{align}
\label{sm_AX_AZ_def}
A_Xq=(-x,a,-z,y)^T,\qquad
A_Yq=(-y,z,a,-x)^T,\qquad
A_Zq=(-z,-y,x,a)^T .
\end{align}
The Schr\"odinger equation of the single-qubit system is therefore equivalent to
\begin{align}
\label{sm_q_dynamics}
\dot q(t)=\bigl(A_Z+f(t)A_X\bigr)q(t),\qquad
q(0)=e_a:=(1,0,0,0)^T .
\end{align}
The terminal cost that we aim to maximize is
\begin{align}
y(T)=e_y^T q(T), \qquad e_y:=(0,0,1,0)^T .
\end{align}

We now state the Pontryagin maximum principle (PMP) conditions for the problem in \eqref{sm_y_max_problem}, which provide necessary conditions for a control to be extremal. A detailed review of PMP is beyond the scope of this work; we refer the reader to standard treatments, e.g., Refs.~\cite{boscain2021introduction,ansel2024introduction}.

\begin{lemma}[Pontryagin maximum principle conditions]
\label{lem:sm_pmp_setup}
Let $f^\star$ be an optimal control for \eqref{sm_y_max_problem}, with corresponding trajectory $q^\star(t)$. Then there exists a costate $p(t)\in\mathbb R^4$ such that
\begin{align}
\label{sm_adjoint_equation}
\dot p(t) = \bigl(A_Z+f^\star(t)A_X\bigr)p(t), \qquad p(T)=e_y.
\end{align}
Moreover, for almost every $t\in[0,T]$, $f^\star(t)$ maximizes the Pontryagin Hamiltonian
\begin{align}
\label{pontryagin-hamiltonian}
\mathscr H(q,p,f) := p^T(A_Z+fA_X)q
\end{align}
over $f\in[-f_{\max},f_{\max}]$.
\end{lemma}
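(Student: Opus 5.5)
We obtain Lemma~\ref{lem:sm_pmp_setup} by specializing the standard Mayer-form PMP (terminal payoff, no running cost, free endpoint, fixed time $T$) to the bilinear system \eqref{sm_q_dynamics}. The problem maximizes $\phi(q(T)) = e_y^Tq(T)$ subject to $\dot q = F(q,f) := (A_Z + fA_X)q$, with $f(t)\in[-f_{\max},f_{\max}]$ measurable and $q(0)=e_a$. Since the terminal state is unconstrained, the PMP applies in its normal form: the multiplier on the payoff can be taken equal to one, and no abnormal case arises. It gives a costate $p$ solving
\[
\dot p = -\partial_q \mathscr H = -(A_Z+f^\star A_X)^T p,
\]
with transversality condition $p(T) = \nabla\phi(q^\star(T)) = e_y$. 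It also says that $f^\star(t)$ maximizes $\mathscr H(q^\star(t),p(t),f)$ over the admissible interval for almost every $t$.

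The only paper-specific step is to put the adjoint equation in the stated form. We check directly from the explicit matrices that $A_X$ and $A_Z$ are real skew-symmetric, so $A^T = -A$. It follows that
\[
-(A_Z+f^\star A_X)^T = A_Z + f^\star A_X,
\]
which is exactly \eqref{sm_adjoint_equation}. Existence and uniqueness of the absolutely continuous solution $p$ follow because the coefficient matrix is bounded and measurable.

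For completeness we also say why the necessary conditions hold, by a needle-variation argument. Replace $f^\star$ by a constant $v$ on $[\tau,\tau+\delta]$ at a Lebesgue point $\tau$. The first-order change in $q(T)$ is $\delta\,\Phi(T,\tau)(v-f^\star(\tau))A_Xq^\star(\tau)$, where $\Phi$ is the state transition matrix. Optimality gives
\[
e_y^T\Phi(T,\tau)A_Xq^\star(\tau)\,\bigl(v-f^\star(\tau)\bigr)\le0.
\]
Setting $p(\tau)^T := e_y^T\Phi(T,\tau)$ reproduces both the adjoint equation and the pointwise maximization of $p^TA_Xq\,f$, which is the $f$-dependent part of $\mathscr H$. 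Skew-symmetry is again what lets $\Phi(T,\tau)^T$ propagate under the same generator: because the generator is skew-symmetric, $\Phi(T,\tau)$ is orthogonal, so $\Phi(T,\tau)^T=\Phi(T,\tau)^{-1}$, and this inverse matrix satisfies the same linear ODE in $\tau$ as $q$.

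The main obstacle is bookkeeping rather than substance: getting the sign convention right between the general adjoint equation $\dot p=-A^Tp$ and the form stated in the lemma, and confirming that there is no abnormal multiplier. Both are immediate in the free-endpoint Mayer setting.
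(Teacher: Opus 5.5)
Your proposal is correct and follows essentially the same route as the paper. Both specialize the fixed-time, free-endpoint PMP to the bilinear system $\dot q=(A_Z+fA_X)q$, use skew-symmetry of $A_X,A_Z$ to turn $\dot p=-(A_Z+f^\star A_X)^Tp$ into the stated adjoint equation, and read off $p(T)=e_y$ from transversality. The only difference is in ruling out the abnormal case: the paper does it explicitly ($\lambda_0=0$ would force $p(T)=0$ and hence $p\equiv0$, violating nontriviality), whereas you assert it as standard and your needle-variation sketch makes it automatic by constructing $p(\tau)=\Phi(T,\tau)^Te_y$ directly.
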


\begin{proof}
We apply the Pontryagin maximum principle to the fixed-time maximization problem \eqref{sm_y_max_problem} in $\mathbb R^4$. Since $A_X,A_Z$ are skew-symmetric, the adjoint equation $\dot p=-\partial_q\mathscr H$ gives \eqref{sm_adjoint_equation}. The terminal cost is $e_y^Tq(T)$, so the transversality condition gives $p(T)=-\lambda_0 e_y$ for some $\lambda_0 \le 0$. If $\lambda_0=0$, then $p(T)=0$, and the homogeneous adjoint equation implies $p(t) \equiv 0$. This contradicts the nontriviality condition of the Pontryagin maximum principle, which states that the PMP multipliers cannot vanish simultaneously, i.e., $(p(t),\lambda_0)\neq(0,0)$ for all $t$. Hence $\lambda_0 < 0$. After rescaling the costate, we set $\lambda_0=-1$, and therefore $p(T)=e_y$. Finally, the Hamiltonian maximization condition is the maximum condition in the
Pontryagin maximum principle.
\end{proof}

We now introduce the switching function. For notational simplicity, in the rest of this subsection we write $q(t)=q^\star(t)$ and $f(t)=f^\star(t)$ along a fixed extremal. Define
\begin{align}
\phi(t):=p(t)^TA_Xq(t), \qquad \chi(t):=p(t)^TA_Zq(t),
\end{align}
where $\phi(t)$ is denoted by the switching function. Then the Pontryagin Hamiltonian along the extremal is
\begin{align}
\mathscr H(q(t),p(t),f) = \chi(t)+f\phi(t),
\end{align}
and the maximum condition in Lemma~\ref{lem:sm_pmp_setup} therefore gives
\begin{align}
\label{sm_phi_bang_condition}
f(t) = \begin{cases}
        +f_{\max}, & \phi(t)>0,\\
        -f_{\max}, & \phi(t)<0,
    \end{cases}
\end{align}
for almost every $t$ such that $\phi(t) \neq 0$. Thus zeros of $\phi$ are the only times at which the PMP does not immediately determine the control.

To analyze such zeros, we first analyze the differential equations satisfied by the switching functions.

\begin{proposition}[Switching-function equations] \label{lem:sm_switching_equations}
Define $\eta(t):=p(t)^TA_Yq(t)$ where $A_Yq=(-y,z,a,-x)^T$. Then
\begin{align}
\begin{split}
\label{sm_switching_equations}
\dot\phi(t)&=-2\eta(t),\\
\dot\eta(t)&=2\phi(t)-2f(t)\chi(t),\\
\dot\chi(t)&=2f(t)\eta(t).
\end{split}
\end{align}
\end{proposition}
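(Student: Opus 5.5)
Since $\phi$, $\eta$ and $\chi$ are all bilinear forms $p^T A q$ in the state and the costate, we differentiate them with the product rule. Both $q$ and $p$ obey the same linear equation $\dot q=(A_Z+fA_X)q$ and $\dot p=(A_Z+fA_X)p$ (Lemma~\ref{lem:sm_pmp_setup}). For any fixed matrix $A$ this gives
\begin{align}
\frac{d}{dt}\,p^TAq = p^T\bigl((A_Z+fA_X)^TA + A(A_Z+fA_X)\bigr)q = p^T\bigl([A,A_Z]+f[A,A_X]\bigr)q ,
\end{align}
where the last step uses the skew-symmetry $A_X^T=-A_X$ and $A_Z^T=-A_Z$. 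This identity holds for almost every $t$, with $f$ merely measurable, because $p$ and $q$ are absolutely continuous. The whole computation therefore reduces to knowing the commutators among $A_X$, $A_Y$ and $A_Z$.

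The key step is to identify these commutators. The matrices represent left multiplication by $-iX$, $-iY$, $-iZ$ on the coordinate vector of $U$. Left multiplication is a representation, so $A_PA_Q$ represents left multiplication by $(-iP)(-iQ)=-PQ$. Using $[X,Y]=2iZ$ and its cyclic versions, $[-iX,-iY]=-[X,Y]=-2iZ$, so
\begin{align}
[A_X,A_Y]=2A_Z,\qquad [A_Y,A_Z]=2A_X,\qquad [A_Z,A_X]=2A_Y .
\end{align}
We will double-check these relations, including signs, directly from the explicit matrices in \eqref{sm_AX_AZ_def}, for instance by computing $A_XA_Ye_a$ and comparing coordinates. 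We also check $A_X^2=-I$, consistent with $(-iX)^2=-I$. Fixing these signs is the only real source of error in the argument.

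Substituting into the product-rule identity gives the three equations.
\begin{itemize}
\item For $A=A_X$: $[A_X,A_Z]=-2A_Y$ and $[A_X,A_X]=0$, so $\dot\phi=-2\eta$.
\item For $A=A_Z$: $[A_Z,A_Z]=0$ and $f[A_Z,A_X]=2fA_Y$, so $\dot\chi=2f\eta$.
\item For $A=A_Y$: $[A_Y,A_Z]=2A_X$ and $f[A_Y,A_X]=-2fA_Z$, so $\dot\eta=2\phi-2f\chi$.
\end{itemize}
These are exactly \eqref{sm_switching_equations}. As a consistency check, the Pontryagin Hamiltonian $\chi+f\phi$ should be constant on bang arcs, where $f$ is constant. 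Indeed $\dot\chi+f\dot\phi=2f\eta-2f\eta=0$, which confirms that the signs are mutually consistent.
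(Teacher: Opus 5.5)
Your proposal is correct and matches the paper's proof: you differentiate $p^TBq$ using $\dot q=(A_Z+fA_X)q$, $\dot p=(A_Z+fA_X)p$ and skew-symmetry to get $p^T[B,A_Z+fA_X]q$, then apply $[A_X,A_Y]=2A_Z$, $[A_Y,A_Z]=2A_X$, $[A_Z,A_X]=2A_Y$, which are the same commutation relations the paper uses. The paper simply asserts these relations, whereas you derive them from the fact that $A_X,A_Y,A_Z$ represent left multiplication by $-iX,-iY,-iZ$; that fact does hold for the given matrices, so your justification is sound.
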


\begin{proof}
Let $A_f:=A_Z+f(t)A_X$. Since $\dot q=A_fq$ and $\dot p=A_fp$, for any $B\in\{A_X,A_Y,A_Z\}$ we have
\begin{align}
\frac{d}{dt}\left(p^TBq\right) = p^T\left(A_f^TB+BA_f\right)q.
\end{align}
Using 
\begin{align}
A_XA_Z-A_ZA_X = -2A_Y, \quad A_YA_Z-A_ZA_Y = 2A_X, \quad A_YA_X-A_XA_Y = -2A_Z,
\end{align}
we obtain
\begin{align}
\dot\phi(t) &= p^T(A_XA_Z - A_ZA_X)q = -2\eta(t),\\
\dot\eta(t) &= p^T(A_YA_Z - A_ZA_Y)q + f(t)p^T(A_YA_X - A_XA_Y)q \\ 
&= 2\phi(t)-2f(t)\chi(t),\\
\dot\chi(t) &= f(t)p^T(A_ZA_X - A_XA_Z)q = 2f(t)\eta(t).
\end{align}
This proves the claim.
\end{proof}

Then, the following lemma shows that on any singular region, that is, any interval on which $\phi(t)=0$, the control function must be $0$.

\begin{lemma}[Singular regions are off controls] \label{lem:sm_singular_regions_off}
Assume that the optimal final unitary is not an exact implementation of the target, so that $|y(T)|<1$ for the maximization problem \eqref{sm_y_max_problem}. If $\phi(t)\equiv 0$ on an open interval $I\subset[0,T]$, then
\begin{align}
    f(t)=0
\end{align}
for almost every $t\in I$.
\end{lemma}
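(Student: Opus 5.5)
The plan is to extract the conclusion from the switching-function equations of Proposition~\ref{lem:sm_switching_equations}, together with two conserved quantities along the extremal. Fix an open interval $I$ on which $\phi\equiv0$. The trajectories $q,p$ are absolutely continuous, so $\phi,\eta,\chi$ are absolutely continuous and the equations \eqref{sm_switching_equations} hold almost everywhere. From $\dot\phi=-2\eta$ and $\phi\equiv0$ on $I$ we get $\eta\equiv0$ on $I$. Then $\dot\eta=2\phi-2f\chi$ vanishes a.e.\ on $I$, which gives
\begin{align}
f(t)\chi(t)=0\qquad\text{for a.e. } t\in I .
\end{align}
Moreover $\dot\chi=2f\eta=0$ on $I$, so $\chi$ is a constant $\chi_I$ on $I$. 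It therefore suffices to show that $\chi_I\neq0$; then $f=0$ a.e.\ on $I$ follows at once.

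To rule out $\chi_I=0$, I would first establish two facts about the costate.
\begin{enumerate}
\item The quantity $p^Tq$ is conserved. Indeed, $\frac{d}{dt}(p^Tq)=p^T(A_f^T+A_f)q=0$, since $A_f=A_Z+fA_X$ is skew-symmetric. Hence $p(t)^Tq(t)=p(T)^Tq(T)=e_y^Tq(T)=y(T)$ for all $t$.
\item The vectors $q,\,A_Xq,\,A_Yq,\,A_Zq$ form an orthonormal basis of $\mathbb R^4$. This is the statement that left multiplication by the unit quaternions $1,-iX,-iY,-iZ$ sends the unit vector $q$ to mutually orthogonal unit vectors, and it can be checked directly from \eqref{sm_AX_AZ_def}. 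Since $\dot p=A_fp$ with $A_f$ skew-symmetric, $|p(t)|=|p(T)|=|e_y|=1$. Expanding $p$ in this basis then yields the identity
\begin{align}
(p^Tq)^2+\phi^2+\eta^2+\chi^2=|p|^2=1 .
\end{align}
\end{enumerate}

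Now suppose $\chi_I=0$. Then at any $t\in I$ we would have $\phi=\eta=\chi=0$, so the identity forces $(p^Tq)^2=1$, that is, $y(T)^2=1$. This contradicts the standing assumption $|y(T)|<1$. Hence $\chi_I\neq0$, and therefore $f(t)=0$ for almost every $t\in I$.

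I expect the only real step to be the orthonormal-basis identity; everything else is immediate from Proposition~\ref{lem:sm_switching_equations}. For that step I would verify pairwise orthogonality and unit length by direct inner products from \eqref{sm_AX_AZ_def}, for example $q^TA_Xq=0$ by skew-symmetry and $(A_Xq)^T(A_Yq)=xy-az+az-xy=0$.
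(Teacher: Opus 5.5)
Your proof is correct and follows essentially the paper's argument: both reduce the claim to $f\chi=0$ a.e.\ on $I$, then rule out a zero of $\chi$ using the orthonormal frame $\{q,A_Xq,A_Yq,A_Zq\}$, norm conservation of $p$, and $|y(T)|<1$. The one difference is in how information reaches time $T$: the paper deduces $p(t_0)=\pm q(t_0)$ and uses ODE uniqueness, whereas you use the conserved inner product $p^Tq=y(T)$ and the identity $y(T)^2+\phi^2+\eta^2+\chi^2=1$, which is a slightly more compact version of the same fact.
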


\begin{proof}
Suppose $\phi(t)\equiv 0$ on $I$. From Proposition~\ref{lem:sm_switching_equations} we have $\dot\phi(t)=-2\eta(t)$, and therefore $\eta(t)\equiv 0$ on $I$. Using it again gives $0 = \dot\eta(t) = 2\phi(t)-2f(t)\chi(t) = -2f(t)\chi(t)$ for almost every $t\in I$. Hence $f(t)\chi(t)=0$ for almost every $t\in I$. 

We now claim that $\chi(t)$ cannot vanish at any point of $I$ in the non-exact case. Indeed, if $\chi(t_0)=0$ for some $t_0\in I$, the earlier equations give $\phi(t_0) = \eta(t_0) = \chi(t_0)=0$. Equivalently,
\begin{align}
p(t_0)^TA_Xq(t_0) = p(t_0)^TA_Yq(t_0) = p(t_0)^TA_Zq(t_0) = 0.
\end{align}
Since $q(t_0)$ has unit norm, the vectors $A_Xq(t_0)$, $A_Yq(t_0)$, and $A_Zq(t_0)$ form an orthonormal basis of $q(t_0)^\perp$; indeed, this follows by a direct calculation from \eqref{sm_AX_AZ_def} 
since they are all orthogonal to $q(t_0)$, pairwise orthogonal to one another, and have norm $\|q(t_0)\|=1$. Therefore the above identities imply that $p(t_0)$ is orthogonal to $q(t_0)^\perp$, and hence
\begin{align}
    p(t_0)=\lambda q(t_0)
\end{align}
for some $\lambda\in\mathbb R$. Since $A_X$ and $A_Z$ are skew-symmetric, $A(t):=A_Z+f(t)A_X$ is skew-symmetric for every $t$. Hence, for any solution $r$ of $\dot r=A(t)r$,
\begin{align}
\frac{d}{dt}\|r(t)\|_2^2 = r(t)^T(A(t)^T+A(t))r(t)=0.
\end{align}
Thus the Euclidean norms of both $q$ and $p$ are conserved. Since $q(0)=e_a$ and $p(T)=e_y$, we have $\|q(t_0)\|_2=1$ and $\|p(t_0)\|_2=\|p(T)\|_2=\|e_y\|_2=1$. Therefore, from $p(t_0)=\lambda q(t_0)$, we obtain $1=\|p(t_0)\|_2=|\lambda| \|q(t_0)\|_2=|\lambda|$. Hence $\lambda=\pm1$.

Because $p$ and $\lambda q$ obey the same linear differential equation along the extremal and agree at $t_0$, uniqueness gives $p(t)=\lambda q(t)$ for all $t$. Evaluating at $t=T$ yields
\begin{align}
    e_y=p(T)=\lambda q(T),
\end{align}
so $q(T)=\lambda e_y$ and hence $|y(T)|=1$. This is an exact implementation of the target channel, contradicting Lemma~\ref{lem:sm_no_exact_short_time}. Therefore $\chi(t)$ cannot vanish on $I$. Since $f(t)\chi(t) = 0$ almost everywhere on $I$, we conclude that $f(t) = 0$ almost everywhere on $I$.
\end{proof}

We next state a simple time-reversal identity that will be used below. This is also used in the time-reversal pairing construction in \eqref{endpoint_time_reversal} in the main text.

\begin{proposition}[Time-reversal identity]
\label{prop:sm_time_reversal_identity}
Let $U(t)$ be generated by a time-dependent Hamiltonian $H(t)$,
\begin{align}
\dot U(t)=-iH(t)U(t), \qquad U(0)=I.
\end{align}
Fix a unitary $v$, and let $\widetilde U(t)$ be generated by
\begin{align}
\widetilde H(t) := - v^\dagger H(T-t)v, \qquad \dot{\widetilde U}(t)=-i\widetilde H(t)\widetilde U(t), \qquad \widetilde U(0)=I .
\end{align}
Then
\begin{align}
    \widetilde U(T)=v^\dagger U(T)^\dagger v.
\end{align}
\end{proposition}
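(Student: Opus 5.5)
The approach is to show that $W(t):=v^\dagger U(T-t)^\dagger U(T)\,v$ satisfies the same initial value problem as $\widetilde U(t)$, and then invoke uniqueness of solutions to linear ODEs. At $t=T$ this candidate equals $v^\dagger U(T)^\dagger v$, which is exactly the claimed endpoint. It also starts at the correct value, since $W(0)=v^\dagger U(T)^\dagger U(T)v=I$.

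The key step is differentiating $U(s)^\dagger$. Taking the adjoint of $\dot U=-iHU$ gives $\frac{d}{ds}U(s)^\dagger = iU(s)^\dagger H(s)$. Applying the chain rule with $s=T-t$ then yields
\begin{align}
\dot W(t) = -v^\dagger\, iU(T-t)^\dagger H(T-t)\,U(T)v .
\end{align}
This has $H(T-t)$ on the right of $U(T-t)^\dagger$, whereas we want a left-multiplication form $-i\widetilde H(t)W(t)$. So the next step is to rewrite $U(T-t)^\dagger H(T-t)$ as $H'(t)\,U(T-t)^\dagger$ for a suitable operator $H'(t)$.

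This reordering is the main subtlety, and I would handle it by going through the propagator from the other side. Define $R(t):=U(T)U(T-t)^{-1}$, the evolution from time $T-t$ to $T$. Then $W(t)=v^\dagger R(t)^\dagger v$, and it suffices to compute $\dot R$. Differentiating $U(T-t)^{-1}=U(T-t)^\dagger$ gives $\dot R(t) = -\,U(T)\,iU(T-t)^\dagger H(T-t)$, which is again right-multiplication. For the adjoint this reads
\begin{align}
\dot R(t)^\dagger = iH(T-t)\,U(T-t)U(T)^\dagger = iH(T-t)R(t)^\dagger .
\end{align}
Since $R(t)^\dagger=U(T-t)U(T)^\dagger$, this is now in left-multiplication form, as required.

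Conjugating by $v$ then gives
\begin{align}
\dot W = i\,v^\dagger H(T-t)v\,W = -i\widetilde H(t)W ,
\end{align}
with $W(0)=I$. This is exactly the equation satisfied by $\widetilde U$.

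The final step is uniqueness. Assuming $H$ is bounded and measurable, which holds here because $f$ is bounded, the linear time-ordered ODE has a unique (Carathéodory) solution. Hence $W(t)=\widetilde U(t)$ for all $t\in[0,T]$, and in particular $\widetilde U(T)=W(T)=v^\dagger U(T)^\dagger v$. The only point needing care is to make sure the derivative is taken in the form $W=v^\dagger U(T-t)U(T)^\dagger v$ with left multiplication, not in a form that leaves $H$ on the wrong side.
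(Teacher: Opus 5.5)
Your opening candidate is wrong. With $W(t):=v^\dagger U(T-t)^\dagger U(T)v$ you get $W(T)=v^\dagger U(0)^\dagger U(T)v=v^\dagger U(T)v$, not $v^\dagger U(T)^\dagger v$. The later claim $W(t)=v^\dagger R(t)^\dagger v$ is also false, since $R(t)^\dagger=U(T-t)U(T)^\dagger\neq U(T-t)^\dagger U(T)$ in general. For constant $H$ and $v=I$, your original $W(t)=e^{-iHt}$ solves $\dot W=-iHW$, whereas $\widetilde U$ solves $\dot{\widetilde U}=+iH\widetilde U$.

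The ``reordering subtlety'' you describe is therefore not a technicality to work around; it is a symptom of the wrong candidate. For your original $W$, moving $H(T-t)$ to the left gives $\dot W=-iK(t)W$ with $K(t)=v^\dagger U(T-t)^\dagger H(T-t)U(T-t)v$. This generator differs from $\widetilde H(t)=-v^\dagger H(T-t)v$ in both sign and conjugation, so that $W$ does not satisfy the required equation in general.

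What actually carries your argument is the switch to $v^\dagger R(t)^\dagger v=v^\dagger U(T-t)U(T)^\dagger v$, which you adopt in your last sentence. Take that as the definition from the start. Then your computation is correct and is exactly the paper's proof: the paper defines $V(t):=v^\dagger U(T-t)U(T)^\dagger v$, checks $V(0)=I$, derives $\dot V=-i\widetilde H(t)V$, and invokes uniqueness.

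The detour through $R$ and adjoints is unnecessary. Differentiating $U(T-t)$ directly gives $\frac{d}{dt}U(T-t)=iH(T-t)U(T-t)$, already in left-multiplication form, so $\dot V=i\,v^\dagger H(T-t)v\,V=-i\widetilde H(t)V$ in one line. Your remark about Carath\'eodory uniqueness for bounded measurable $H$ is a fine justification of the step the paper simply calls uniqueness.
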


\begin{proof}
Define
\begin{align}
    V(t):=v^\dagger U(T-t)U(T)^\dagger v .
\end{align}
Then $V(0)=I$. Moreover,
\begin{align}
\dot V(t) &= v^\dagger \frac{d}{dt}U(T-t) U(T)^\dagger v \\
&= i v^\dagger H(T-t)U(T-t)U(T)^\dagger v \\
&= i v^\dagger H(T-t)vV(t) \\
&= -i\widetilde H(t)V(t).
\end{align}
Thus $V(t)$ satisfies the same Schr\"odinger equation and initial condition as $\widetilde U(t)$. By uniqueness, $V(t)=\widetilde U(t)$. Evaluating at $t=T$ gives
\begin{align}
\widetilde U(T) = V(T) = v^\dagger U(0)U(T)^\dagger v = v^\dagger U(T)^\dagger v.
\end{align}

\end{proof}

We next prove a useful consequence of the PMP Hamiltonian, which will be used to control the spacing between switching times shortly.

\begin{lemma}[Non-negativity of maximized Pontryagin Hamiltonian] 
\label{lem:sm_nonnegative_HP} 
Define the maximized Pontryagin Hamiltonian 
\begin{align} 
\label{sm_K_definition} 
\mathscr H_{\max} := \max_{|u|\le f_{\max}} \mathscr H(q(t),p(t),u) = \chi(t)+f_{\max}|\phi(t)|. 
\end{align} 
Then $\mathscr H_{\max}$ is independent of $t$. Moreover, among optimal controls for \eqref{sm_y_max_problem}, one can choose an optimal extremal with 
\begin{align} 
\mathscr H_{\max} \ge 0. 
\end{align} 
\end{lemma}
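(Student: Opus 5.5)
The proof has two parts. First I show that $\mathscr H_{\max}$ is constant along any extremal, using Proposition~\ref{lem:sm_switching_equations}. Then I build a second optimal control from $f^\star$ by time reversal and show that its maximized Hamiltonian has the opposite sign. One of the two controls then has $\mathscr H_{\max}\ge 0$.

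\emph{Constancy.} By the maximum condition, $\mathscr H_{\max}(t)=\chi(t)+f(t)\phi(t)$ for almost every $t$. The key step is the following.
\begin{itemize}
\item On a bang arc, $f\equiv\pm f_{\max}$ is constant. Proposition~\ref{lem:sm_switching_equations} then gives $\frac{d}{dt}(\chi+f\phi)=2f\eta+f(-2\eta)=0$.
\item On a singular arc, Lemma~\ref{lem:sm_singular_regions_off} gives $\phi\equiv0$ and $f=0$. So $\mathscr H_{\max}=\chi$, and $\dot\chi=2f\eta=0$.
\end{itemize}
Since $\chi$ and $\phi$ are absolutely continuous, $t\mapsto\chi(t)+f_{\max}|\phi(t)|$ is continuous. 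Its derivative vanishes almost everywhere on each arc. At junctions (switching times, or entries into singular arcs) we have $\phi=0$, so no jump can occur there. Hence $\mathscr H_{\max}$ is constant. Alternatively, I could invoke the standard PMP fact that the maximized Hamiltonian is constant for autonomous systems.

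\emph{Reversed control.} Given an optimal $f^\star$, define $\tilde f(t):=-f^\star(T-t)$. This is admissible. Its Hamiltonian is $Z+\tilde f(t)X=-X\,H_{f^\star}(T-t)\,X$, so Proposition~\ref{prop:sm_time_reversal_identity} with $v=X$ gives $U_{\tilde f}(T)=X\,U_{f^\star}(T)^\dagger X$. Write $U_{f^\star}(T)=aI-i(xX+yY+zZ)$. Then $U^\dagger=aI+i(xX+yY+zZ)$, and conjugating by $X$ yields $aI-i(-xX+yY+zZ)$. So $\tilde y(T)=y(T)$, and $\tilde f$ is also optimal.

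Lemma~\ref{lem:sm_pmp_setup} therefore applies to $\tilde f$. Its costate is uniquely fixed by the linear adjoint equation and $\tilde p(T)=e_y$. It remains to compare $\tilde{\mathscr H}_{\max}$ with $\mathscr H_{\max}$.

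\emph{Main obstacle: the sign flip.} I expect the hard step to be writing $(\tilde q,\tilde p)$ explicitly in terms of $(q,p)$ and verifying the sign. My plan is to guess
\[
\tilde q(t)=R\,p(T-t),\qquad \tilde p(t)=R\,q(T-t),
\]
for a fixed orthogonal $R$. Here $R$ should be the real $4\times4$ matrix that implements $W\mapsto XW^\dagger X$ followed by the appropriate left multiplication by $-iY$, in the quaternion coordinates. One must check three things:
\begin{enumerate}
\item the boundary conditions: $\tilde q(0)=R\,e_y$ must equal $e_a$, and $\tilde p(T)=R\,e_a$ must equal $e_y$ (this may need a sign or an extra factor in $R$);
\item the differential equations, using $R A_Z R^{-1}$ and $R A_X R^{-1}$ together with the minus sign from $d/dt$ of $T-t$;
\item the resulting identity $\tilde p^TA_{\tilde f}\tilde q=-p^TA_{f^\star}q$.
\end{enumerate}
If the direct guess fails, my fallback is to use the conserved quantity $\mathscr H=p^T\dot q$. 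Evaluated at $t=T$ it equals $\dot y(T)$, and the reversal maps the terminal velocity to minus the initial velocity.

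Once the sign flip is established, we get $\tilde{\mathscr H}_{\max}=-\mathscr H_{\max}$. Both $f^\star$ and $\tilde f$ are optimal, so whichever of them has $\mathscr H_{\max}\ge0$ is the desired optimal extremal.
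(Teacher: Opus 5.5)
\textbf{The sign-flip step fails.} Your constancy argument and your reversed control $\tilde f(t)=-f^\star(T-t)$, with $U_{\tilde f}(T)=aI-i(-xX+yY+zZ)$, are what the paper uses. The gap is the step you call the main obstacle. The identity $\tilde{\mathscr H}_{\max}=-\mathscr H_{\max}$ is false, so no choice of $R$ can deliver it.

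Your ansatz itself is correct. Take $R$ to be the permutation matrix swapping the coordinates $a\leftrightarrow y$ and $x\leftrightarrow z$; on $2\times2$ matrices this is $W\mapsto XWZ$. Then one checks $\tilde q(t)=Rp(T-t)$, $\tilde p(t)=Rq(T-t)$, $Re_y=e_a$, $Re_a=e_y$, and $RA_{\tilde f(t)}R=-A_{f^\star(T-t)}$. But then, with $s=T-t$,
\[
\tilde p(t)^TA_{\tilde f(t)}\tilde q(t)=-\,q(s)^TA_{f^\star(s)}p(s)=+\,p(s)^TA_{f^\star(s)}q(s).
\]
The last equality holds because $A_{f^\star(s)}$ is skew-symmetric. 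Exchanging the roles of $p$ and $q$ produces a second minus sign, which cancels the one from reversing time. So time reversal \emph{preserves} the maximized Hamiltonian.

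A concrete check: $f_+$ is a fixed point of $f\mapsto -f(T-\cdot)$, and it has $\mathscr H_{\max}=f_{\max}\sin(\Omega T)/\Omega>0$. Hence the dichotomy ``one of $f^\star,\tilde f$ has $\mathscr H_{\max}\ge 0$'' gives you nothing beyond what you started with.

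\textbf{The missing idea.} Evaluate the conserved quantity at the terminal time using transversality, $p(T)=e_y$. This gives $\phi(T)=-z(T)$ and $\chi(T)=x(T)$, hence
\[
\mathscr H_{\max}=x(T)+f_{\max}|z(T)|.
\]
This is essentially the quantity $\dot y(T)$ in your fallback; you should not try to relate it to the initial velocity. Your reversal flips $x(T)$ and fixes $z(T)$. So choosing whichever of $f^\star,\tilde f$ has $x(T)\ge 0$ yields $\mathscr H_{\max}\ge 0$. This is the paper's proof.

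In fact you can also evaluate at $t=0$. There $p(0)$ represents $U(T)^\dagger(-iY)$, so $\chi(0)=-x(T)$ and $\phi(0)=z(T)$. Constancy then forces $x(T)=0$ on every extremal, so $\mathscr H_{\max}=f_{\max}|z(T)|\ge 0$ holds without any reversal. This is another way to see that a sign flip cannot occur.
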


\begin{proof}
We first prove that $\mathscr H_{\max}$ is constant. On a bang region where $\phi(t)>0$, the maximum condition gives $f(t)=f_{\max}$. Hence $\mathscr H_{\max}=\chi+f_{\max}\phi$. Using Proposition~\ref{lem:sm_switching_equations},
\begin{align}
\frac{d}{dt}\mathscr H_{\max} = \dot\chi+f_{\max}\dot\phi = 2f_{\max}\eta-2f_{\max}\eta = 0.
\end{align}
Similarly, on a bang region where $\phi(t)<0$, the maximum condition gives $f(t)=-f_{\max}$ and $\mathscr H_{\max}=\chi-f_{\max}\phi$. Thus
\begin{align}
\frac{d}{dt}\mathscr H_{\max} = \dot\chi-f_{\max}\dot\phi = -2f_{\max}\eta+2f_{\max}\eta = 0.
\end{align}
On a singular region, $\phi \equiv 0$. Then $\dot\phi=-2\eta$ implies $\eta \equiv 0$, and hence $\mathscr H_{\max}=\chi$ and $\dot\chi=2f\eta=0$. Hence $\mathscr H_{\max}$ is constant on every bang or singular region, and thus is independent of $t$.

It remains to show that one can choose an optimal extremal with $\mathscr H_{\max} \ge 0$. Let $f$ be any optimal control, denote $U_f(T)=aI-i(xX+yY+zZ)$, and define another admissible control
\begin{align}
    g(t) := -f(T-t).
\end{align}
Then, $H_g(t)=Z+g(t)X = Z-f(T-t)X = -XH_f(T-t)X$. By Proposition~\ref{prop:sm_time_reversal_identity}, with $v=X$, we obtain $U_g(T)=XU_f(T)^\dagger X$. Since $U_f(T)^\dagger=aI+i(xX+yY+zZ)$, we get $U_g(T) = aI+i(xX-yY-zZ) = aI-i(-xX+yY+zZ)$. Thus $g$ achieves the same final $Y$-component value $y(T)$ as $f$, while replacing $x(T)$ by $-x(T)$. Hence, among optimal controls, we may choose one with $x(T) \ge 0$.

For this chosen optimal extremal, using $p(T)=e_y$, we have $\phi(T) = p(T)^TA_Xq(T) = e_y^TA_Xq(T) = -z(T)$, and $\chi(T) = p(T)^TA_Zq(T) = e_y^TA_Zq(T) = x(T)$. Since $\mathscr H_{\max}$ is independent of $t$, it follows that
\begin{align}
\mathscr H_{\max} = \mathscr H_{\max}(T) = \chi(T)+f_{\max}|\phi(T)| = x(T)+f_{\max}|z(T)| \ge 0.
\end{align}
This proves the claim.
\end{proof}

We now use the nonnegative maximized Hamiltonian to control the spacing between zeros of the switching function.

\begin{lemma}[Minimum switching gap on bang regions] \label{lem:sm_switching_gap}
Consider an optimal extremal chosen so that $H_{\max}\ge0$, and set $\Omega:=\sqrt{1+f_{\max}^2}$. Suppose a bang interval of sign $s\in\{+1,-1\}$ leaves a switching time $t_0$, i.e. $\phi(t_0)=0,~ s\phi(t)>0$ for $t\in(t_0,t_1)$. If $t_1>t_0$ is the next zero of $\phi$, including the possibility that $t_1$ is the endpoint of the bang interval, then
\begin{align}
t_1-t_0\ge \frac{\pi}{2\Omega}.
\end{align}
In particular, any bang interval whose two endpoints are switching times has length at least $\pi/(2\Omega)$.
\end{lemma}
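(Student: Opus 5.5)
The idea is to solve the switching-function equations of Proposition~\ref{lem:sm_switching_equations} explicitly on the bang interval, where they are linear with constant coefficients. The constant maximized Hamiltonian of Lemma~\ref{lem:sm_nonnegative_HP} then reduces them to a single forced harmonic oscillator for $\phi$. The sign condition $\mathscr H_{\max}\ge 0$ is what forces the first return of $\phi$ to zero to take at least half a period.

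\emph{Step 1 (reduce to an oscillator).} On $(t_0,t_1)$ we have $f(t)=s f_{\max}$ by \eqref{sm_phi_bang_condition}. By Lemma~\ref{lem:sm_nonnegative_HP}, on this interval $K:=\mathscr H_{\max}=\chi(t)+s f_{\max}\phi(t)$ is a constant with $K\ge0$. This gives $\chi=K-s f_{\max}\phi$. Substituting into Proposition~\ref{lem:sm_switching_equations} yields
\begin{align}
\dot\eta = 2\phi-2sf_{\max}(K-sf_{\max}\phi)=2\Omega^2\phi-2sf_{\max}K,
\end{align}
and therefore
\begin{align}
\ddot\phi=-2\dot\eta=-4\Omega^2\phi+4sf_{\max}K .
\end{align}

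\emph{Step 2 (solve).} Set $\tau:=2\Omega(t-t_0)$ and $c:=sf_{\max}K/\Omega^2$. Using $\phi(t_0)=0$, the general solution is
\begin{align}
\phi(t)=c\,(1-\cos\tau)+B\sin\tau
\end{align}
for some real constant $B$, fixed by $\eta(t_0)$. Multiplying by $s$ gives
\begin{align}
s\phi(t)=\frac{f_{\max}K}{\Omega^2}(1-\cos\tau)+sB\sin\tau .
\end{align}
The first coefficient is $\ge0$ precisely because $K\ge0$.

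\emph{Step 3 (sign analysis).} We need $s\phi>0$ just after $t_0$. If $sB<0$, the $\sin\tau$ term dominates the $O(\tau^2)$ term $1-\cos\tau$ for small $\tau$, so $s\phi<0$ there, which is a contradiction. If $sB=0$ and $K=0$, then $\phi\equiv0$, again a contradiction. Hence $sB\ge0$, with the two coefficients $f_{\max}K/\Omega^2$ and $sB$ not both zero. For $\tau\in(0,\pi)$ both $1-\cos\tau>0$ and $\sin\tau>0$, so $s\phi>0$ on that whole range. Consequently the next zero of $\phi$ satisfies $\tau\ge\pi$, that is, $t_1-t_0\ge\pi/(2\Omega)$.

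The same reasoning applies verbatim if $t_1$ is the endpoint of the bang interval and $\phi(t_1)=0$. An interval bounded by two switching times is covered by the case where $t_0$ is its left endpoint.

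The main obstacle is Step 3. The delicate part is the role of $K\ge0$: without it the constant forcing could pull $s\phi$ back to zero before half a period. The degenerate possibilities ($sB=0$, or $K=0$) also have to be excluded cleanly.
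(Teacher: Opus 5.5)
Your proposal is correct and follows the paper's proof essentially step for step. The paper likewise uses the constancy of $\mathscr H_{\max}\ge 0$ to eliminate $\chi$, derives $\ddot\phi+4\Omega^2\phi=4sf_{\max}\mathscr H_{\max}$, writes $s\phi$ as a nonnegative combination of $1-\cos(2\Omega\tau)$ and $\sin(2\Omega\tau)$, and excludes the degenerate case because it would force $\phi\equiv 0$. The only cosmetic differences are that you derive the sign of the initial slope from $s\phi>0$ just after $t_0$ (the paper simply asserts $s\dot\phi(t_0)\ge 0$), and that for $t_1$ at the interval's endpoint the paper spells out the limit $\tau\to\ell^-$, which your ``verbatim'' remark leaves implicit via continuity of $\phi$.
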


\begin{proof}
Denote the bang region as $(t_0,t_0+\ell)$, so that $f(t)=s f_{\max}$ for almost every $t\in(t_0,t_0+\ell)$. By the maximum condition, this implies $s\phi(t) = |\phi(t)|$ on the bang region. Hence, on this region, $\mathscr H_{\max} = \chi(t)+s f_{\max}\phi(t)$, and therefore $\chi(t) = \mathscr H_{\max} - s f_{\max}\phi(t)$. Using the switching-function equations in Proposition~\ref{lem:sm_switching_equations}, we obtain, for a.e. $t\in(t_0,t_0+\ell)$,
\begin{align}
\ddot\phi(t) &= -2\dot\eta(t) = -4\phi(t)+4s f_{\max}\chi(t) = -4(1+f_{\max}^2)\phi(t) + 4s f_{\max}\mathscr H_{\max}.
\end{align}
Thus, with $\Omega=\sqrt{1+f_{\max}^2}$,
\begin{align}
\label{sm_phi_bang_ode}
\ddot\phi(t)+4\Omega^2\phi(t) = 4s f_{\max}\mathscr H_{\max}.
\end{align}
Since $t_0$ is a zero of $\phi$, we have $\phi(t_0)=0$. Moreover, because the bang region leaves $t_0$ with sign $s$, we have $s\dot\phi(t_0) \ge 0$. Define $r:=-s\eta(t_0)$. Using $\dot\phi(t_0)=-2\eta(t_0)$, this gives
\begin{align}
    r \ge 0, \qquad \dot\phi(t_0)=2sr.
\end{align}
Solving \eqref{sm_phi_bang_ode} with initial conditions $\phi(t_0)=0$ and $\dot\phi(t_0)=2sr$ gives, for $0<\tau<\ell$,
\begin{align}
\label{sm_phi_bang_solution}
\phi(t_0+\tau) = \frac{s}{\Omega^2} \left(f_{\max}\mathscr H_{\max} \bigl(1-\cos(2\Omega\tau)\bigr) + r\Omega\sin(2\Omega\tau) \right).
\end{align}
Now let
\begin{align}
    0 < \tau < \min\left\{\ell,\frac{\pi}{2\Omega}\right\}.
\end{align}
Then $0<2\Omega\tau<\pi$, so $1 - \cos(2\Omega\tau) > 0$ and $\sin(2\Omega\tau) > 0$. Since $\mathscr H_{\max}\ge0$ from Lemma \ref{lem:sm_nonnegative_HP} and $r\ge0$, the bracket in \eqref{sm_phi_bang_solution} is strictly positive unless $\mathscr H_{\max} = r = 0$. But if $\mathscr H_{\max} = r = 0$, then \eqref{sm_phi_bang_solution} gives $\phi \equiv 0$ on the bang region, contradicting that the region is bang. Therefore, $s\phi(t_0+\tau) > 0$ for every $0 < \tau < \min\left\{\ell,\frac{\pi}{2\Omega}\right\}$. 

Now suppose $t_1 \in (t_0,t_0+\ell]$ is another zero of $\phi$ and $t_1 - t_0 < \pi/(2\Omega)$. If $t_1<t_0+\ell$, this contradicts the strict positivity just proved. If $t_1 = t_0+\ell$, then taking the limit $\tau\to\ell^-$ in \eqref{sm_phi_bang_solution} gives $s\phi(t_1)>0$, again contradicting $\phi(t_1)=0$. Hence any such zero must satisfy
\begin{align}
    t_1 - t_0 \ge \frac{\pi}{2\Omega}.
\end{align}

\end{proof}

We can now state the structure of optimal controls using the preceding lemmas.

\begin{proposition}[Structure of optimal controls] 
\label{prop:sm_extremal_structure} 
Assume $\Omega T < \pi$ and $\Omega=\sqrt{1+f_{\max}^2}$. Consider a non-exact optimal extremal for \eqref{sm_y_max_problem}, chosen so that $\mathscr H_{\max} \ge 0$. Then, the optimal control can be chosen from one of the following two families: 
\begin{enumerate} 
\item a pure bang-bang control with at most two switchings, 
\begin{align} 
f(t)\in\{+f_{\max},-f_{\max}\}, 
\end{align} 
\item a bang-off-bang control, 
\begin{align} f(t)= 
\begin{cases} 
s_1 f_{\max}, & 0\le t<t_1, \\ 
0, & t_1<t<t_2, \\ 
s_2 f_{\max}, & t_2<t\le T, 
\end{cases} \qquad 0\leq t_1 \leq t_2 \leq T, \qquad s_1,s_2\in\{+1,-1\}. 
\label{sm_bang_off_bang_form} 
\end{align} 

\end{enumerate} 
\end{proposition}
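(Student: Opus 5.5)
The plan is to decompose $[0,T]$ according to the zero set of the switching function $\phi$ and combine three earlier results: the bang condition \eqref{sm_phi_bang_condition}, Lemma~\ref{lem:sm_singular_regions_off} (singular arcs are off arcs, valid since the extremal is non-exact), and the minimum switching gap of Lemma~\ref{lem:sm_switching_gap}.

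First, on any open interval where $\phi\neq 0$ the control is $\pm f_{\max}$. On any open interval where $\phi\equiv0$ it is $0$ almost everywhere. The remaining concern is that the zero set of $\phi$ might be wild, for example a Cantor-like set or isolated zeros accumulating. Accumulation of isolated zeros is excluded by Lemma~\ref{lem:sm_switching_gap}: any bang interval that starts at a zero has length at least $\pi/(2\Omega)$. On a bang arc $\phi$ solves the linear ODE \eqref{sm_phi_bang_ode} and is therefore analytic there. So the zero set consists of finitely many isolated points together with finitely many closed intervals. This follows because each connected component of the complement is either a bang arc of length $\ge\pi/(2\Omega)$, or one of the at most two boundary arcs touching $0$ or $T$.

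Second, I count arcs. Every bang arc whose two endpoints are zeros of $\phi$ has length $\ge \pi/(2\Omega)$. For an interior bang arc that begins at the end of a singular arc (so $\phi(t_0)=0$ and $\eta(t_0)=0$), Lemma~\ref{lem:sm_switching_gap} still applies, because its hypotheses only use $\phi(t_0)=0$ and the sign of departure. Since $T<\pi/\Omega = 2\cdot \pi/(2\Omega)$, there is at most one such interior bang arc. The structure is then a sequence of at most: an initial bang arc, zeros or singular arcs, at most one interior bang arc, more zeros or singular arcs, and a final bang arc.

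The step I expect to be the main obstacle is ruling out mixed patterns, such as bang–off–bang–off–bang, or a singular arc adjacent to a full interior bang arc. My approach would be to use the constancy of $\mathscr H_{\max}$. On a singular arc $\eta=\phi=0$ and $\chi=\mathscr H_{\max}$. A bang arc leaving a singular arc therefore has $r=0$ in \eqref{sm_phi_bang_solution}, so
\begin{align}
\phi(t_0+\tau)=\frac{s f_{\max}\mathscr H_{\max}}{\Omega^2}\bigl(1-\cos(2\Omega\tau)\bigr).
\end{align}
This can only return to zero after $\tau=\pi/\Omega>T$. Consequently, any bang arc adjacent to a singular arc must run to the endpoint $0$ or $T$ (using time reversal of the ODE for arcs entering a singular arc). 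This forces the bang–off–bang form \eqref{sm_bang_off_bang_form}, where singular pieces merge into a single off interval and adjacent bang pieces may have length zero ($t_1=0$ or $t_2=T$).

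If there is no singular arc, $\phi$ has only isolated zeros, and by the counting above there are at most two switchings. Zeros of $\phi$ at which the sign does not change do not alter the control, so they are not counted as switchings. This yields the pure bang-bang family. Measure-zero ambiguities at switching times are irrelevant to $U_f(T)$, which justifies "can be chosen".
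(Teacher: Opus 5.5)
Your proof is correct and follows the paper's route. Both use the bang/singular decomposition, count switchings with Lemma~\ref{lem:sm_switching_gap} when there is no singular arc, and use the $r=0$ solution of \eqref{sm_phi_bang_ode} to force any bang arc adjacent to a singular arc to reach $0$ or $T$. The differences are minor: you handle arcs entering a singular arc by integrating \eqref{sm_phi_bang_ode} backward from $\phi=\dot\phi=0$ instead of the paper's determinant argument, and you explicitly justify that the zero set of $\phi$ is a finite union of points and closed intervals, which the paper takes for granted.
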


\begin{proof}
By the maximum condition in Lemma~\ref{lem:sm_pmp_setup}, every region on which $\phi(t) \neq 0$ is a bang region, with $f(t) = \pm f_{\max}$ almost everywhere. By Lemma~\ref{lem:sm_singular_regions_off}, every singular region, every open interval on which $\phi(t)=0$, is an off region with $f(t)=0$ almost everywhere.

We first state a consequence of the explicit bang-region solution used in Lemma~\ref{lem:sm_switching_gap}. Suppose a bang region of sign $s$ starts at a zero $t_0$ of $\phi$. For $\tau=t-t_0$, we have
\begin{align}
\label{sm_phi_bang_solution_AB}
\phi(t_0+\tau) = \frac{s}{\Omega^2} \left(A\bigl(1-\cos(2\Omega\tau)\bigr) + B\sin(2\Omega\tau) \right),
\end{align}
where $A:=f_{\max}\mathscr H_{\max} \ge 0$ and $B:=r\Omega \ge 0$. If, at some later time $t_0+L$ on the same bang region, both $\phi(t_0+L)=0$ and $\eta(t_0+L)=0$, then also $\dot\phi(t_0+L)=0$. Setting $u=2\Omega L$, these two conditions imply
\begin{align}
\begin{pmatrix}
    1-\cos u & \sin u \\
    \sin u & \cos u
\end{pmatrix}
\begin{pmatrix}
    A \\
    B
\end{pmatrix} = 0.
\end{align}
For $0 < u < 2\pi$, the determinant is $\cos u - 1\neq0$. Hence $A = B = 0$, which would make $\phi$ vanish throughout the bang region, a contradiction. Therefore a bang region cannot start at $\phi=0$ and later reach $\phi=\eta=0$ in time $L < \pi/\Omega$.

Now suppose first that there is no singular region. Then all zeros of $\phi$ in $(0,T)$ are isolated switching times. If there were three such zeros $t_1<t_2<t_3$, Lemma~\ref{lem:sm_switching_gap} would give
\begin{align}
    t_2 - t_1 \ge \frac{\pi}{2\Omega}, \qquad t_3 - t_2 \ge \frac{\pi}{2\Omega},
\end{align}
and hence $t_3-t_1 \ge \pi/\Omega$. This contradicts our assumption on the short-time regime $t_3-t_1 < T \le \pi/\Omega$. Thus, in the absence of singular regions, the extremal is bang-only with at most two switchings.

Now suppose that a singular region is present, and let $(\alpha,\beta)$ be a maximal singular region. Since $(\alpha,\beta)$ is a singular region, we have $\phi(t)=0$ for all $t \in (\alpha,\beta)$. The switching-function equation $\dot\phi=-2\eta$ then gives $\eta(t) = 0$ on $(\alpha,\beta)$. 
Since $q$ and $p$ solve linear differential equations with bounded measurable coefficients, they are continuous. Therefore $\phi(t)$ and $\eta(t)$ are continuous functions of $t$. Thus the identities $\phi=0$ and $\eta=0$, which hold on the singular interval $(\alpha,\beta)$, extend to any endpoint $\alpha$ or $\beta$ lying in the interior of $[0,T]$.

Since $\phi$ and $\eta$ are continuous, the same equalities hold at any endpoint $\alpha$ or $\beta$ that lies in the interior of $[0,T]$. If there were a zero $t_0 < \alpha$ before the singular region, choose the last such zero. Then $(t_0,\alpha)$ is a bang region ending at $\phi(\alpha) = \eta(\alpha) = 0$. The consequence above gives
\begin{align}
    \alpha-t_0 \ge \frac{\pi}{\Omega},
\end{align}
contradicting $0 < t_0 < \alpha < T \le \pi/\Omega$. Hence the part before $\alpha$ is either empty or a single bang region. 

Similarly, after $\beta$ there can be no later zero of $\phi$. Indeed, a bang region leaving $\beta$ starts with $\phi(\beta)=\eta(\beta)=0$, so in \eqref{sm_phi_bang_solution_AB} we have $B=0$. Hence
\begin{align}
    s\phi(\beta+\tau) = \frac{A}{\Omega^2} \bigl(1-\cos(2\Omega\tau)\bigr),
\end{align}
which cannot vanish for $0 < \tau < \pi/\Omega$ unless $A=0$, in which case the region would be singular rather than bang. Since $T-\beta<\pi/\Omega$, no later zero can occur. Thus the part after $\beta$ is either empty or a single bang region. Therefore, if a singular region is present, there is exactly one off region, with at most one bang region before it and at most one bang region after it. This completes the proof. 

\end{proof}

\subsection{Comparison of extremal candidates and proof of Theorem~\ref{thm:sm_det_opt}} \label{subsec:candidate-comparison}

As a final step, we now compare the extremal control families obtained in Proposition~\ref{prop:sm_extremal_structure}. In particular, we show that none of these candidates achieves a larger final $Y$-component than the symmetric two-bang control \eqref{sm_f_plus_det}.  This therefore completes the proof of Theorem~\ref{thm:sm_det_opt}.

\begin{proof}[Proof of Theorem~\ref{thm:sm_det_opt}]
By Lemmas~\ref{lem:sm_reduction_to_y} and \ref{lem:sm_sign_symmetry}, it is enough to maximize $y_f(T)$. By Proposition~\ref{prop:sm_extremal_structure}, we may restrict attention to two families of extremal controls: pure bang-bang controls with at most two switchings, and bang-off-bang controls. We compare these two families separately. In particular, we give an upper bound on $y(T)$ for these optimal candidates and show that the two-bang pulse saturates the bound.

First consider a pure bang-bang candidate, which has the form of
\begin{align}
U(T) = e^{-iH_s t_3}e^{-iH_{-s}t_2}e^{-iH_s t_1}, \qquad t_1+t_2+t_3=T .
\end{align}
Using $e^{-iH_s t} = \cos(\Omega t)I - i\frac{\sin(\Omega t)}{\Omega}(Z + s f_{\max}X)$, direct multiplication gives
\begin{align}
    y(T) = s\frac{2f_{\max}}{\Omega^2} \sin(\Omega t_2) \sin \left(\Omega(t_1-t_3)\right).
\end{align}
Since $\Omega t_2 + \Omega|t_1-t_3| \le \Omega T\le\pi$, 
\begin{align}
\label{sm_pure_bang_bound}
y(T) \le \frac{2f_{\max}}{\Omega^2}\sin(\Omega t_2)\sin (\Omega|t_1-t_3|) \le \frac{2f_{\max}}{\Omega^2} \sin^2\left(\frac{\Omega T}{2}\right).
\end{align}

Now consider a bang-off-bang candidate,
\begin{align}
U(T) = e^{-iH_{s_2}b}e^{-iZ\tau}e^{-iH_{s_1}a}, \qquad a+b+\tau=T, \qquad s_1,s_2\in\{+1,-1\}.
\end{align}
Direct multiplication gives
\begin{align}
\label{sm_bang_off_bang_y_formula}
y(T) = \frac{f_{\max}}{\Omega^2} \bigg[ \Omega\sin\tau \Big(s_1\sin(\Omega a)\cos(\Omega b) - s_2\sin(\Omega b)\cos(\Omega a)\Big) + (s_1-s_2)\cos\tau \sin(\Omega a)\sin(\Omega b) \bigg].
\end{align}
If $s_1=s_2=s$, then
\begin{align}
y(T) =  s\frac{f_{\max}}{\Omega} \sin\tau  \sin\left(\Omega(a-b)\right).
\end{align}
Since $\tau+\Omega|a-b| \le \tau+\Omega(a+b) = \Omega T-(\Omega-1)\tau \le \Omega T \le\pi$,  we get
\begin{align}
\label{sm_bob_same_sign_bound}
y(T) \le \frac{f_{\max}}{\Omega} \sin^2\left(\frac{\Omega T}{2}\right) \le \frac{2f_{\max}}{\Omega^2} \sin^2\left(\frac{\Omega T}{2}\right), 
\end{align}
where the last inequality uses $\Omega \le 2$.

It remains to consider $s_1=-s_2$. Write $s_1=s$ and $s_2=-s$, so that $y(T)=sY(a,b,\tau)$ where, with $u := a+b$, 
\begin{align}
Y(a,b,\tau) = \frac{f_{\max}}{\Omega}\sin\tau\sin(\Omega u) + \frac{2f_{\max}}{\Omega^2} \cos\tau\sin(\Omega a)\sin(\Omega b).
\end{align}
If $\cos \tau \ge 0$, then $Y \ge 0$, and the case $s=-1$ is automatically bounded above by zero. For $s=+1$, using $\sin(\Omega a)\sin(\Omega b) \le \sin^2(\Omega u/2)$ gives
\begin{align}
    y(T)\le G\left(u\right) := \frac{f_{\max}}{\Omega}\sin(T-u)\sin(\Omega u) + \frac{2f_{\max}}{\Omega^2}\cos(T-u) \sin^2\left(\frac{\Omega u}{2}\right).
\end{align}
A direct differentiation gives
\begin{align}
G'(u) = \frac{f_{\max}}{\Omega^2} \left(1+f_{\max}^2\cos(\Omega u)\right)\sin(T-u)\ge0,
\end{align}
where we used $f_{\max}\le1$, $0\le\Omega u\le\Omega T\le\pi$, and $0\le T-u\le T\le\pi$.
Thus $G$ is nondecreasing on $0\le u\le T$. Since the bang-off-bang candidate has $u=a+b\in[0,T]$, we obtain
Hence
\begin{align}
y(T) \le G(T) = \frac{2f_{\max}}{\Omega^2} \sin^2\left(\frac{\Omega T}{2}\right).
\end{align}
If $\cos\tau < 0$, it is also straightforward to get the upper bound of $\sin^2\left(\frac{\Omega T}{2}\right)$ for $s = \pm 1$. Therefore, every bang-off-bang candidate satisfies the same upper bound as in \eqref{sm_pure_bang_bound}. 

Finally, the symmetric two-bang control $f_+$ in \eqref{sm_f_plus_det} corresponds to the pure bang-bang choice with $s=+1, t_1=t_2=\frac{T}{2}, t_3=0$, in which we have
\begin{align}
y_{f_+}(T) = \frac{2f_{\max}}{\Omega^2} \sin^2\left(\frac{\Omega T}{2}\right),
\end{align}
saturating the upper bound. Therefore $f_+$ maximizes $y_f(T)$, and it solves Problem~\ref{prob:sm_det_oc}. This completes the proof of Theorem~\ref{thm:sm_det_opt}.
\end{proof}

\section{Randomized optimum for the single-qubit example}

In this section, we prove the randomized optimal-control statement used in the single-qubit example in the main text. We begin by stating the precise randomized optimization problem.

\begin{problem}[Randomized optimal control problem]
\label{prob:sm_rand_oc}
Consider the same controlled single-qubit Hamiltonian
\begin{align}
    H_f(t)=Z+f(t)X, \qquad |f(t)|\le f_{\max},
\end{align}
and the same target unitary channel
\begin{align}
    U_{\rm targ}=e^{-i\pi Y/2},\qquad \mathcal U_{\rm targ}(\rho) := U_{\rm targ}\rho U_{\rm targ}^\dagger.
\end{align}
A randomized control protocol consists of a finite ensemble of admissible controls $f^{(1)},\ldots,f^{(M)}$ and probabilities $\bm p = (p_1,\ldots,p_M)$, with $p_j\ge0$ and $\sum_{j=1}^M p_j=1$. The implemented channel is
\begin{align}
    \mathcal R_{\bm p, \{f^{(j)}\}}(\rho) := \sum_{j=1}^M p_j U_{f^{(j)}}(T)\rho U_{f^{(j)}}(T)^\dagger .
\end{align}
For fixed $T$ and $f_{\max}$, the randomized optimal-control problem is
\begin{align}
\label{sm_rand_oc_problem}
\epsilon_{\diamond}^{\rm rand,\star}(T) := \min_{\bm p, \{f^{(j)}\}}\frac12 \left\| \mathcal R_{\bm p, \{f^{(j)}\}} - \mathcal U_{\rm targ} \right\|_\diamond,
\end{align}
where the minimum is over all finite randomized ensembles of admissible controls.
\end{problem}

We now state the main theorem of this section.

\begin{theorem}[Randomized optimal control solution] \label{thm:sm_rand_opt}
Assume that $0<f_{\max} \le 1$ and $T\sqrt{1+f_{\max}^2} < \pi$. Let $f_+$ be the symmetric two-bang control from Theorem~\ref{thm:sm_det_opt}, and define
\begin{align}
    f_-(t):=-f_+(t).
\end{align}
Then the two-branch randomized protocol
\begin{align}
\label{sm_rand_two_branch_protocol}
f^{(1)}(t)=f_+(t),\qquad f^{(2)}(t)=f_-(t),\qquad p_1=p_2=\frac12
\end{align}
solves Problem~\ref{prob:sm_rand_oc}. Moreover, the corresponding minimum error is the square of the minimum deterministic error from Theorem~\ref{thm:sm_det_opt}:
\begin{align}
\label{sm_rand_equals_det_squared}
\epsilon_{\diamond}^{\rm rand,\star}(T) = \left(\epsilon_{\diamond}^{\rm det,\star}(T)\right)^2 = 1- \left( \frac{2f_{\max}}{1+f_{\max}^2} \sin^2\left(\frac{T\sqrt{1+f_{\max}^2}}{2} \right)\right)^2.
\end{align}
\end{theorem}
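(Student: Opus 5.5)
The plan is to prove matching upper and lower bounds on $\epsilon_\diamond^{\rm rand,\star}(T)$, both expressed through the optimal deterministic $Y$-component
\[
y^\star:=\frac{2f_{\max}}{\Omega^2}\sin^2\!\left(\frac{\Omega T}{2}\right),\qquad \Omega=\sqrt{1+f_{\max}^2}.
\]
By Theorem~\ref{thm:sm_det_opt} together with Lemmas~\ref{lem:sm_reduction_to_y} and \ref{lem:sm_sign_symmetry}, we have $\epsilon_\diamond^{\rm det,\star}=\sqrt{1-(y^\star)^2}$ and $|y_f(T)|\le y^\star$ for every admissible $f$. It therefore suffices to show that $\epsilon_\diamond^{\rm rand,\star}(T)=1-(y^\star)^2$.

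\emph{Lower bound (optimality).} Take an arbitrary ensemble $\{p_j,f^{(j)}\}$. Let $\Phi:=\mathcal U_{\rm targ}^{-1}\circ\mathcal R$, where $\mathcal U_{\rm targ}^{-1}$ denotes the unitary channel of $U_{\rm targ}^\dagger$. Since the diamond norm is unitarily invariant, $\frac12\|\mathcal R-\mathcal U_{\rm targ}\|_\diamond=\frac12\|\Phi-{\rm id}\|_\diamond$.

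Feed in the maximally entangled state $|\Omega\rangle$. For a pure state $\psi$ and any state $\rho$, one has $\frac12\|\rho-\psi\|_1\ge 1-\langle\psi|\rho|\psi\rangle$. Applying this gives
\[
\tfrac12\|\Phi-{\rm id}\|_\diamond\ge 1-F_e(\Phi),
\]
where $F_e(\Phi)=\sum_j p_j\,\bigl|\tfrac12\Tr(U_{\rm targ}^\dagger U_{f^{(j)}}(T))\bigr|^2=\sum_j p_j\, y_{f^{(j)}}(T)^2$; the last equality is the trace computation in the proof of Lemma~\ref{lem:sm_reduction_to_y}. Since each $|y_{f^{(j)}}(T)|\le y^\star$, we get $F_e\le (y^\star)^2$. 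Hence every randomized protocol has error at least $1-(y^\star)^2$.

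\emph{Upper bound (achievability).} Write $U_{f_+}(T)=aI-i(xX+y^\star Y+zZ)$. By Lemma~\ref{lem:sm_sign_symmetry}, $U_{f_-}(T)=ZU_{f_+}(T)Z=aI-i(-xX-y^\star Y+zZ)$. I will form $W_j:=U_{\rm targ}^\dagger U_{f^{(j)}}(T)=iY\,U_{f^{(j)}}(T)$ and multiply out the Paulis explicitly. This should give
\[
W_1=y^\star I-i\,\bm w\cdot\bm\sigma,\qquad W_2=-\bigl(y^\star I+i\,\bm w\cdot\bm\sigma\bigr),\qquad \bm w=(z,-a,-x)\ \text{(up to sign conventions)}.
\]
That is, after fixing the global phase of $W_2$, the two branches carry opposite error vectors. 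This is the concrete form of the statement $E_2=-E$ in the main text, and checking it is the step I expect to require the most care.

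With this form, the cross terms cancel in the equal mixture, and
\[
\Phi(\rho)=\tfrac12\bigl(W_1\rho W_1^\dagger+W_2\rho W_2^\dagger\bigr)=(y^\star)^2\rho+(1-(y^\star)^2)\,(\hat n\cdot\bm\sigma)\rho(\hat n\cdot\bm\sigma),
\]
where $\hat n=\bm w/\|\bm w\|$ and $\|\bm w\|^2=1-(y^\star)^2$. Then $\Phi-{\rm id}=(1-(y^\star)^2)\,(\mathcal P-{\rm id})$, where $\mathcal P$ is conjugation by the Pauli-type unitary $\hat n\cdot\bm\sigma$. Since $\hat n\cdot\bm\sigma$ is traceless, Proposition~\ref{prop:sm_su2_diamond_distance} gives $\frac12\|\mathcal P-{\rm id}\|_\diamond=1$. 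Therefore the protocol \eqref{sm_rand_two_branch_protocol} achieves error exactly $1-(y^\star)^2$.

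Combining the two bounds, the minimum in \eqref{sm_rand_oc_problem} is attained by \eqref{sm_rand_two_branch_protocol} and equals $1-(y^\star)^2=(\epsilon_\diamond^{\rm det,\star})^2$.
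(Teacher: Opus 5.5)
Your overall route is the same as the paper's: an entanglement-fidelity lower bound with the branchwise bound $|y_j|\le y^\star$, followed by an explicit evaluation of the equal mixture. However, the achievability step has a genuine gap, exactly where you flagged it. Multiplying out with a general $x$ gives
\[
W_1=y^\star I-i\,(-zX-aY+xZ),\qquad -W_2=y^\star I-i\,(zX+aY+xZ).
\]
So after phase alignment the two error vectors are $(-z,-a,x)$ and $(z,a,x)$, and they are opposite only if $x=0$. For general $x$ the mixture is $(y^\star)^2\rho-ixy^\star[Z,\rho]+P\rho P+x^2Z\rho Z$ with $P=zX+aY$, so the first-order cross term does not cancel. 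Already the trace distance between the Choi states, $\tfrac12\bigl(\sqrt{c^2+4x^2}+c\bigr)$ with $c=1-(y^\star)^2-x^2$, strictly exceeds $1-(y^\star)^2$ whenever $xy^\star\neq 0$. Hence the sign symmetry of Lemma~\ref{lem:sm_sign_symmetry}, which holds for every admissible control, cannot by itself give achievability. You need the additional fact $x_{f_+}(T)=0$, which your argument never establishes.

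The missing fact is easy to supply, in either of two ways.
\begin{itemize}
\item Multiply the two bang propagators, $U_{f_+}(T)=e^{-iH_-T/2}e^{-iH_+T/2}$ with $H_\pm=Z\pm f_{\max}X$. This gives $U_{f_+}(T)=\chi I-i\bigl(y^\star Y+\zeta Z\bigr)$ with $\zeta=\sin(\Omega T)/\Omega$, which is what the paper does.
\item Note that $-f_+(T-t)=f_+(t)$ for almost every $t$. Then Proposition~\ref{prop:sm_time_reversal_identity} with $v=X$, as in the proof of Lemma~\ref{lem:sm_nonnegative_HP}, yields $U_{f_+}(T)=XU_{f_+}(T)^\dagger X$, i.e.\ $x=-x$.
\end{itemize}
With $x=0$ the rest of your computation goes through, with error vector $(-z,-a,0)$ rather than $(z,-a,-x)$. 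When invoking Proposition~\ref{prop:sm_su2_diamond_distance}, apply it to $i\hat n\cdot\bm\sigma\in SU(2)$, which induces the same channel as $\hat n\cdot\bm\sigma$.
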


The proof has two ingredients. First, we derive a general lower bound on the diamond-distance error of any randomized control protocol using the Choi state of the relative channel. Second, we show that the two-branch protocol \eqref{sm_rand_two_branch_protocol} saturatse this lower bound.

We first begin with the general lower bound.

\begin{lemma}[Lower bound for mixed-unitary channels \cite{wallman2014randomized,braasch2026limits}] 
\label{lem:sm_rand_choi_lower_bound}
Let
\begin{align} 
\Phi(\rho) = \sum_j p_j W_j\rho W_j^\dagger
\end{align}
be a mixed unitary channel on a single qubit. Then
\begin{align}
\label{sm_rand_choi_lower_bound}
\frac12 \left\|\Phi- \mathcal I\right\|_\diamond \ge 1- \sum_j p_j \left| \frac12\Tr(W_j) \right|^2,
\end{align}
where $\mathcal I(\rho) := \rho$ is the identity channel. 
\end{lemma}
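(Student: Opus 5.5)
The plan is to pass to the Choi state of the relative channel and then lower-bound the diamond norm by the trace distance between Choi states, which is in turn bounded below by a fidelity deficit. Let $\ket{\Omega}=\frac{1}{\sqrt2}(\ket{00}+\ket{11})$. Testing the diamond norm on the maximally entangled input gives
\begin{align}
\frac12\left\|\Phi-\mathcal I\right\|_\diamond \ge \frac12\left\|(\Phi\otimes\mathcal I)(\ket{\Omega}\!\bra{\Omega})-\ket{\Omega}\!\bra{\Omega}\right\|_1 .
\end{align}
Write $J_\Phi:=(\Phi\otimes\mathcal I)(\ket{\Omega}\!\bra{\Omega})=\sum_j p_j\ket{\psi_j}\!\bra{\psi_j}$ with $\ket{\psi_j}:=(W_j\otimes I)\ket{\Omega}$.

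We next use the standard inequality relating trace distance to overlap with a pure state. For any state $\rho$ and pure state $\ket{\Omega}$, let $P=\ket{\Omega}\!\bra{\Omega}$; then
\begin{align}
\frac12\|\rho-P\|_1\ge \Tr\bigl(P(P-\rho)\bigr)=1-\bra{\Omega}\rho\ket{\Omega}.
\end{align}
This holds because $\frac12\|A\|_1\ge\Tr(\Pi A)$ for any projector $\Pi$ when $\Tr A=0$. We apply it with $\rho=J_\Phi$.

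It then remains to evaluate the overlap. Since
\begin{align}
\bra{\Omega}\rho\ket{\Omega}=\sum_j p_j|\braket{\Omega|\psi_j}|^2,
\end{align}
and $\braket{\Omega|(W_j\otimes I)|\Omega}=\frac12\Tr(W_j)$, we obtain $\bra{\Omega}J_\Phi\ket{\Omega}=\sum_j p_j|\frac12\Tr W_j|^2$. Combining the three displays yields \eqref{sm_rand_choi_lower_bound}.

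No step is a real obstacle. The only care needed is the projector inequality: for traceless Hermitian $A$, $\Tr(\Pi A)\le\Tr(A_+)=\frac12\|A\|_1$. This is applied with $A=P-\rho$, where the relevant sign makes $\Tr(P(P-\rho))=1-\bra{\Omega}\rho\ket{\Omega}\ge0$.
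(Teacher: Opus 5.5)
Your proposal is correct and follows essentially the same route as the paper: evaluate the diamond norm on the Bell state, bound the Choi-state trace distance below by $1-\bra{\Omega}J_\Phi\ket{\Omega}$ via a projector test, and use $\bra{\Omega}(W_j\otimes I)\ket{\Omega}=\frac12\Tr(W_j)$. The only cosmetic difference is that you test $P-\rho$ against $P=\ket{\Omega}\!\bra{\Omega}$, whereas the paper tests $\rho-P$ against $M=I-P$; the two give the same quantity.
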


\begin{proof}
This lemma follows directly from Proposition 9 of \cite{wallman2014randomized}, together with the relation between the average gate fidelity and the entanglement fidelity \cite{nielsen2002a} and the Kraus representation of the latter. For completeness, however, we provide a direct proof.

Let $\ket{\Phi_2} := \frac{1}{\sqrt2} \left(\ket{00}+\ket{11}\right)$ be the Bell state, and define
\begin{align}
\rho_\Phi := (\Phi\otimes\mathcal I) \left(\ket{\Phi_2}\bra{\Phi_2}\right).
\end{align}
Using the definition of diamond norm, 
\begin{align}
\frac12 \left\|\Phi- \mathcal I \right\|_\diamond \ge \frac12 \left\|\rho_\Phi - \ket{\Phi_2}\bra{\Phi_2}\right\|_1.
\end{align}
For any state $\rho$ and pure state $\ket{\psi}$, the dual formulation of trace distance gives
\begin{align}
\frac12\left\|\rho-\ket{\psi}\bra{\psi}\right\|_1 = \max_{0\le M\le I} \Tr\left[M\left(\rho  - \ket{\psi}\bra{\psi}\right) \right] \ge 1-\bra{\psi}\rho\ket{\psi},
\end{align}
where we chose $M=I-\ket{\psi}\bra{\psi}$. 
Therefore
\begin{align}
\frac12 \left\|\Phi - \mathcal I\right\|_\diamond \ge 1- \bra{\Phi_2}\rho_\Phi\ket{\Phi_2}.
\end{align}
Finally,
\begin{align}
\bra{\Phi_2} \rho_\Phi\ket{\Phi_2} =  \sum_j p_j \left| \bra{\Phi_2} (W_j\otimes I) \ket{\Phi_2} \right|^2 = \sum_j p_j \left| \frac12\Tr(W_j) \right|^2.
\end{align}
This completes the proof. 
\end{proof}

We now apply Lemma~\ref{lem:sm_rand_choi_lower_bound} to the randomized optimal control problem in Problem~\ref{prob:sm_rand_oc}. This gives a universal lower bound on the diamond distance error in the randomized control setup.

\begin{proposition}[Universal randomized lower bound] \label{prop:sm_rand_lower_bound}

Assume that $0<f_{\max} \le 1$ and $T\sqrt{1+f_{\max}^2}\le\pi$. Define $\Omega:=\sqrt{1+f_{\max}^2}$ and 
\begin{align}
y^\star(T):= \frac{2f_{\max}}{1+f_{\max}^2} \sin^2\left(\frac{\Omega T}{2}\right).
\end{align}
Then every randomized protocol in Problem~\ref{prob:sm_rand_oc} satisfies
\begin{align}
\label{sm_rand_universal_lower_bound}
\frac12 \left\| \mathcal R_{\bm p,\bm f} - \mathcal U_{\rm targ} \right\|_\diamond \ge 1-y^\star(T)^2 . 
\end{align}
\end{proposition}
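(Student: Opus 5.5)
The plan is to reduce the randomized problem to the identity-channel lower bound of Lemma~\ref{lem:sm_rand_choi_lower_bound}, and then control each branch with the deterministic bound from Theorem~\ref{thm:sm_det_opt}.

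\textbf{Step 1: move the target to the identity.} The diamond norm is invariant under composition with unitary channels, so
\begin{align}
\left\|\mathcal R_{\bm p,\bm f}-\mathcal U_{\rm targ}\right\|_\diamond = \left\|\mathcal U_{\rm targ}^{-1}\circ\mathcal R_{\bm p,\bm f}-\mathcal I\right\|_\diamond .
\end{align}
Here $\mathcal U_{\rm targ}^{-1}\circ\mathcal R_{\bm p,\bm f}$ is the mixed-unitary channel with Kraus unitaries $W_j:=U_{\rm targ}^\dagger U_{f^{(j)}}(T)$ and weights $p_j$.

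\textbf{Step 2: apply the mixed-unitary lower bound.} Lemma~\ref{lem:sm_rand_choi_lower_bound} then gives
\begin{align}
\frac12\left\|\mathcal R_{\bm p,\bm f}-\mathcal U_{\rm targ}\right\|_\diamond \ge 1-\sum_j p_j\left|\tfrac12\Tr\bigl(U_{\rm targ}^\dagger U_{f^{(j)}}(T)\bigr)\right|^2 .
\end{align}
From the proof of Lemma~\ref{lem:sm_reduction_to_y}, with $U_{\rm targ}=-iY$, we have $\frac12\Tr(U_{\rm targ}^\dagger U_f(T))=y_f(T)$. The Hamiltonian is traceless, so every $U_f(T)$ lies in $SU(2)$ and $y_f(T)$ is real. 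The bound therefore becomes $1-\sum_j p_j\,y_{f^{(j)}}(T)^2$.

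\textbf{Step 3: bound each branch.} It remains to show that every admissible control satisfies $|y_f(T)|\le y^\star(T)$. Once this holds, $\sum_j p_j y_j^2\le y^\star(T)^2$ and the claim follows. The bound on $|y_f(T)|$ is exactly what the proof of Theorem~\ref{thm:sm_det_opt}, combined with Lemma~\ref{lem:sm_sign_symmetry}, establishes. Lemma~\ref{lem:sm_sign_symmetry} gives $\max|y_f(T)|=\max y_f(T)$. The candidate comparison then bounds $\max y_f(T)$ by $\frac{2f_{\max}}{\Omega^2}\sin^2(\Omega T/2)=y^\star(T)$, since $\Omega^2=1+f_{\max}^2$.

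\textbf{Main obstacle: the boundary case $\Omega T=\pi$.} The proposition allows $\Omega T=\pi$, whereas Theorem~\ref{thm:sm_det_opt} assumes $\Omega T<\pi$. At $\Omega T=\pi$ we have $y^\star=2f_{\max}/(1+f_{\max}^2)\le1$, with equality only when $f_{\max}=1$; in that case the bound is trivial. For $f_{\max}<1$ I would argue by continuity in $T$. The set of reachable $y_f(T)$ depends continuously on $T$: truncate or extend the control, and note that the Hamiltonian is uniformly bounded, so the change in $U_f$ is $O(\Delta T)$. The right-hand side is also continuous in $T$, so the bound $|y_f(T)|\le y^\star(T)$ passes to the limit $\Omega T\to\pi^-$.

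An alternative is to check that the PMP structural argument in Proposition~\ref{prop:sm_extremal_structure} still goes through with non-strict inequalities at $T=\pi/\Omega$, apart from the exact-implementation case, which is harmless for the bound. I would try the continuity route first, because it avoids re-auditing Lemma~\ref{lem:sm_no_exact_short_time} and Lemma~\ref{lem:sm_singular_regions_off}.
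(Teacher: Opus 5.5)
Your proof is correct and follows the paper's argument essentially step for step: you pass to the relative mixed-unitary channel with Kraus unitaries $W_j=U_{\rm targ}^\dagger U_{f^{(j)}}(T)$, apply the Choi-state bound using $\frac12\Tr(W_j)=y_j$, and bound each $|y_j|\le y^\star(T)$ via the deterministic optimum. Your separate treatment of $\Omega T=\pi$ patches a point the paper's proof skips, since the paper cites Theorem~\ref{thm:sm_det_opt}, which is stated only for $\Omega T<\pi$; the continuity argument works by truncating a given $f$ to $[0,T']$ with $T'<\pi/\Omega$, letting $T'\to\pi/\Omega$, and using $\|H_f(t)\|\le\Omega$, so no extension of controls is needed.
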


\begin{proof}
For an arbitrary randomized protocol $\mathcal R_{\bm p,\bm f}$, define the relative mixed-unitary channel
\begin{align}
\Phi := \mathcal U_{\rm targ}^\dagger\circ\mathcal R_{\bm p,\bm f}, \qquad \Phi(\rho) = \sum_j p_j W_j\rho W_j^\dagger, \qquad W_j:=U_{\rm targ}^\dagger U_{f^{(j)}}(T).
\end{align}
Since $U_{\rm targ} = -iY$, denoting $ U_{f^{(j)}}(T) = a_jI -i(x_jX+y_jY+z_jZ)$, then
\begin{align}
\frac12\Tr(W_j) = \frac12\Tr\left(iYU_{f^{(j)}}(T)\right) = y_j.
\end{align}
By the deterministic bound proved in Theorem~\ref{thm:sm_det_opt}, each branch
satisfies $|y_j|\le y^\star(T)$.

Applying Lemma~\ref{lem:sm_rand_choi_lower_bound} to $\Phi$, we obtain
\begin{align}
\frac12 \left\| \mathcal R_{\bm p,\bm f} - \mathcal U_{\rm targ} \right\|_\diamond = \frac12 \left\| \Phi-\mathcal I \right\|_\diamond  
&\ge  1- \sum_j p_j \left| \frac12\Tr(W_j) \right|^2 \\
&= 1-\sum_j p_j y_j^2 \\
&\ge 1-y^\star(T)^2.
\end{align}

\end{proof}

We are now ready to prove Theorem~\ref{thm:sm_rand_opt}.

\begin{proof}[Proof of Theorem~\ref{thm:sm_rand_opt}]
By Proposition~\ref{prop:sm_rand_lower_bound}, it remains only to show that the two-branch protocol in \eqref{sm_rand_two_branch_protocol} achieves the lower bound.

Let $\Omega:=\sqrt{1+f_{\max}^2}, \theta:=\frac{\Omega T}{2}, \eta:=y^\star(T) = \frac{2f_{\max}}{1+f_{\max}^2}\sin^2\theta$. For the two controls $f_+$ and $f_-=-f_+$, a direct calculation gives
\begin{align}
    U_{f_+}(T) = \chi I-i(\eta Y+\zeta Z), \qquad U_{f_-}(T) = \chi I-i(-\eta Y+\zeta Z),
\end{align}
where
\begin{align}
\zeta=\frac{\sin(2\theta)}{\Omega}, \qquad \chi=\frac{f_{\max}^2+\cos(2\theta)}{1+f_{\max}^2},
\end{align}
and
\begin{align}
    \chi^2+\eta^2+\zeta^2 = 1.
\end{align}
Let
\begin{align}
\mathcal R^\star(\rho) := \frac12 U_{f_+}(T) \rho U_{f_+}(T)^\dagger + \frac12 U_{f_-}(T) \rho U_{f_-}(T)^\dagger 
\end{align}
be the randomized channel implemented by the two-branch protocol in Theorem~\ref{thm:sm_rand_opt}, and define the corresponding relative channel $\widetilde{\mathcal R}^\star := \mathcal U_{\rm targ}^\dagger\circ \mathcal R^\star$.  Using $U_{\rm targ}=-iY$ and substituting the above expressions for $U_{f_+}(T)$ and $U_{f_-}(T)$, a direct calculation gives
\begin{align}
\widetilde{\mathcal R}^\star(\rho) = \eta^2\rho+(1-\eta^2)N\rho N, \qquad    N:=-\frac{\zeta X+\chi Y}{\sqrt{1-\eta^2}}.
\end{align}
Since $\chi^2+\zeta^2=1-\eta^2$, we have $N^\dagger=N, N^2=I, \Tr N=0$.  Thus, with $\mathcal N(\rho):=N\rho N$,
\begin{align}
\widetilde{\mathcal R}^\star-\mathcal I  = (1-\eta^2)(\mathcal N-\mathcal I).
\end{align}
Since $N$ is a unitary, we can apply Proposition~\ref{prop:sm_su2_diamond_distance} with $U = iN$ and $V = I$, and yield
\begin{align}
     \|\mathcal N-\mathcal I\|_\diamond = 2 \sqrt{1- \left| \frac12\Tr(iN)\right|^2} = 2.
\end{align}
Therefore, by unitary invariance of the diamond norm, 
\begin{align}
\frac12 \left\|\mathcal R^\star-\mathcal U_{\rm targ} \right\|_\diamond = \frac12 \left\| \widetilde{\mathcal R}^\star-\mathcal I \right\|_\diamond = 1-\eta^2 = 1-y^\star(T)^2.
\end{align}
Hence the two-branch protocol saturates the lower bound and solves Problem~\ref{prob:sm_rand_oc}. This completes the proof. 

\end{proof}

\section{Details on numerical simulation}

\subsection{Setup}
In this section, we provide details on the numerical simulations shown in Fig.~\ref{fig:randomized-cnot}. We used GRAPE to optimize a smooth surrogate cost based on the Frobenius norm, defined 
as $\|A\|_F:=\sqrt{\Tr(A^\dagger A)}$.  For a deterministic control waveform $\bm f$, the cost function is 
\begin{align}
\label{sm_G_det_cost}
    G_{\rm det}(\bm f) := \left\|U_{\bm f}(T)-U_{\rm targ}\right\|_F^2 .
\end{align}
For a randomized ensemble with branches $\bm f^{(1)},\ldots,\bm f^{(M)}$ and probabilities $p_1,\ldots,p_M$, we use the randomized surrogate
\begin{align}
\label{sm_G_rand_cost}
G_{\rm rand}(\bm p,\{\bm f^{(i)}\}) := \left\|\sum_{i=1}^M p_i U_{\bm f^{(i)}}(T) - U_{\rm targ} \right\|_F^2.
\end{align}  
We refer the reader to Ref.~\cite{khaneja2005optimal} for a review of the GRAPE algorithm.

The Frobenius surrogate is convenient for two reasons. First, it upper bounds the diamond-distance error through the averaged-unitary bound used in the main text. Indeed, by the sharpened mixing lemma~\cite{chen2021concentration},
\begin{align}
\epsilon_\diamond \le 2\left\|\sum_i p_i U_{\bm f^{(i)}}(T) - U_{\rm targ}\right\| \leq 2\left\|\sum_i p_i U_{\bm f^{(i)}}(T) - U_{\rm targ}\right\|_F.
\end{align}
Second, after discretizing the controls into piecewise-constant time steps, the final unitary is a smooth function of the control amplitudes. Hence $G_{\rm det}$ and $G_{\rm rand}$ are suitable gradient-based objectives for GRAPE. In our simulations, the diamond distance was computed only after optimization, as the final performance metric.

\subsection{Randomized GRAPE} \label{subsec:sm_rgrape}

\begin{algorithm}[t!]
\caption{Randomized GRAPE (R-GRAPE)}
\label{alg:rand_grape}

\KwIn{Hamiltonian model $H_0$, $\{H_k\}_{k=1}^m$, target unitary $U_{\rm targ}$, total time $T$, number of time steps $N$, number of branches $M$, amplitude bound $f_{\max}$, outer iterations $S$, inner GRAPE iterations $K$, and objective function $G$.}

\KwOut{Optimized branch controls $\{\bm f^{(i)}\}_{i=1}^M$ and probabilities $\bm p$.}

Initialize branch controls $\{\bm f^{(i)}\}_{i=1}^M$ satisfying the amplitude constraints
Initialize probabilities $\bm p$ with $p_i\ge0$ and $\sum_{i=1}^M p_i=1$

\For{$s\leftarrow 1$ \KwTo $S$}{

    \tcp{Forward propagate all branches}
    \For{$i\leftarrow 1$ \KwTo $M$}{
        $U^{(i)}(T)\leftarrow
        \FProp\left(H_0,\{H_k\},\bm f^{(i)},T,N\right)$
    }

    \tcp{Update probabilities with branch unitaries fixed}
    $\bm p\leftarrow
    \arg\min_{\bm p:p_i\ge0,\sum_i p_i=1}
    G\left(\bm p,\{U^{(i)}(T)\}_{i=1}^M\right)$

    \tcp{Update branch controls with probabilities fixed}
    \For{$\ell\leftarrow 1$ \KwTo $K$}{
        \For{$i\leftarrow 1$ \KwTo $M$}{
            $\bm f^{(i)}\leftarrow
            \FGR\left(H_0,\{H_k\},\bm f^{(i)},T,N;G,\bm p\right)$
            
            Project amplitudes of $\bm f^{(i)}$ back to
            $[-f_{\max},f_{\max}]$
        }
    }
}

\Return{$\{\bm f^{(i)}\}_{i=1}^M,\bm p$}
\end{algorithm}

We now describe the randomized GRAPE algorithm (R-GRAPE) used for the numerical comparison in Fig.~\ref{fig:randomized-cnot}. The algorithm is summarized in Algorithm~\ref{alg:rand_grape}. R-GRAPE extends standard GRAPE to a probabilistic mixture of $M$ branch unitaries $\{U^{(i)}(T)\}_{i=1}^M$, generated by controls $\{\bm f^{(i)}(\cdot)\}_{i=1}^M$. It alternates between updating the probability vector $\bm p$, with the branch controls fixed, and applying branch-wise GRAPE updates, with $\bm p$ fixed. Given its generality and effectiveness, it may be of independent interest beyond the specific examples considered here.

\subsection{Simulation parameters}
\label{subsec:sm_benchmarking_protocol}

For the CNOT simulations in Fig.~\ref{fig:randomized-cnot}, we used the two-qubit Hamiltonian in \eqref{two_qubit_model} with $J=1$ and $f_{\max}=10$. We compared deterministic GRAPE, the symmetry-generated randomized protocol constructed from the deterministic GRAPE pulse, and R-GRAPE. Deterministic GRAPE optimized the surrogate cost $G_{\rm det}$ in \eqref{sm_G_det_cost}, while R-GRAPE optimized the randomized surrogate cost $G_{\rm rand}$ in \eqref{sm_G_rand_cost}. For R-GRAPE, we used $M=4$ branches to match the number of branches in the symmetry-generated randomized protocol, $S=30$ outer probability updates, and $K=1000$ inner GRAPE steps per outer loop. For deterministic GRAPE, we used $30000$ GRAPE iterations, matching the total number of inner control-update steps used in R-GRAPE. All optimizer settings, including optimizer type, learning-rate schedule, time discretization, and amplitude projection rule, were kept fixed across methods. For each value of $T$, we performed 5 independent random initializations and report the best result. After optimization, the diamond distance was computed numerically by solving the standard SDP for the optimized solutions \cite{watrous2018the}.

\subsection{Error diagnostics}
To diagnose the cancellation mechanism, we decompose the relative error generator of the deterministic GRAPE pulse. For the optimized deterministic GRAPE pulse $\bm f$, define
\begin{align}
    U_{\rm targ}^\dagger U_{\bm f}(T) = e^{-iE^\star},
\end{align}
and expand
\begin{align}
E^\star = \sum_{P\in\mathcal P_2} c_P P, \qquad c_P=\frac{1}{4}\Tr(PE^\star), \qquad \mathcal P_2:=\{I,X,Y,Z\}^{\otimes 2}.
\end{align}
Fig.~\ref{fig:sm_cnot_error_decomp} shows that the dominant component is $Z_1X_2$, labeled as $ZX$ in the figure. This component is therefore canceled by the $X_2$ time-reversal pairing explained in the main text. This explains the quadratic suppression observed in Fig.~\ref{fig:randomized-cnot}.

\begin{figure}[t!]
    \centering
    \includegraphics[width=0.75\linewidth]{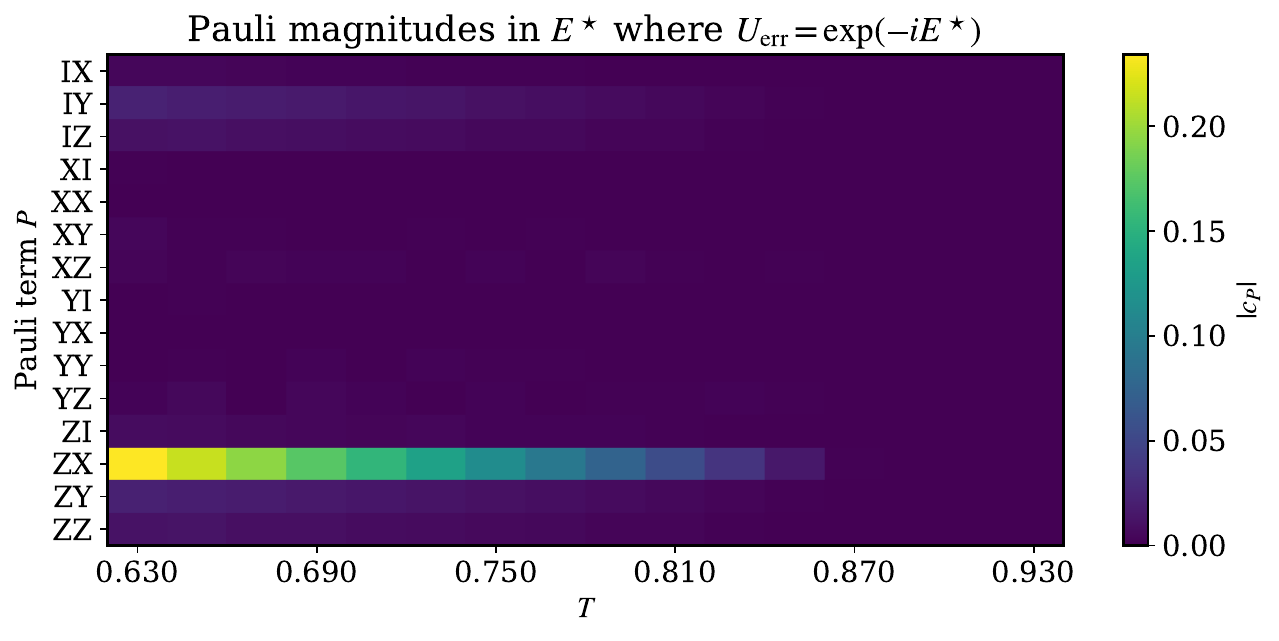}
    \caption{Pauli decomposition of the relative error generator $E^\star$ for the optimized deterministic GRAPE pulse in Fig.~\ref{fig:randomized-cnot}. The identity component is omitted. The dominant component is $Z_1X_2$, which is cancelled by the time-reversal pairing method discussed in the main text. This explains the quadratic improvement. }
    \label{fig:sm_cnot_error_decomp}
\end{figure}

\subsection{Additional numerical simulation on estimating observables in an experiment}

In this subsection, we numerically test the CNOT protocols used in Fig.~\ref{fig:randomized-cnot} in a setup where one estimates expectation values of observables. We first note that the diamond-distance error also controls observable-estimation errors. Indeed, for any two channels $\mathcal E$ and $\mathcal F$, any input state $\rho$, and any observable $O$ with $\|O\|\le 1$,
\begin{align}
\left| \Tr\left(O\left(\mathcal E(\rho)-\mathcal F(\rho)\right)\right)\right|
\le\left\|\mathcal E(\rho)-\mathcal F(\rho)\right\|_1
\le\left\|\mathcal E-\mathcal F\right\|_\diamond.
\end{align}
The diamond distance therefore gives a worst-case guarantee for observable estimation (while the following numerical test probes the error for specific states and observables).

\begin{figure}
    \centering
    \includegraphics[width=0.9\linewidth]{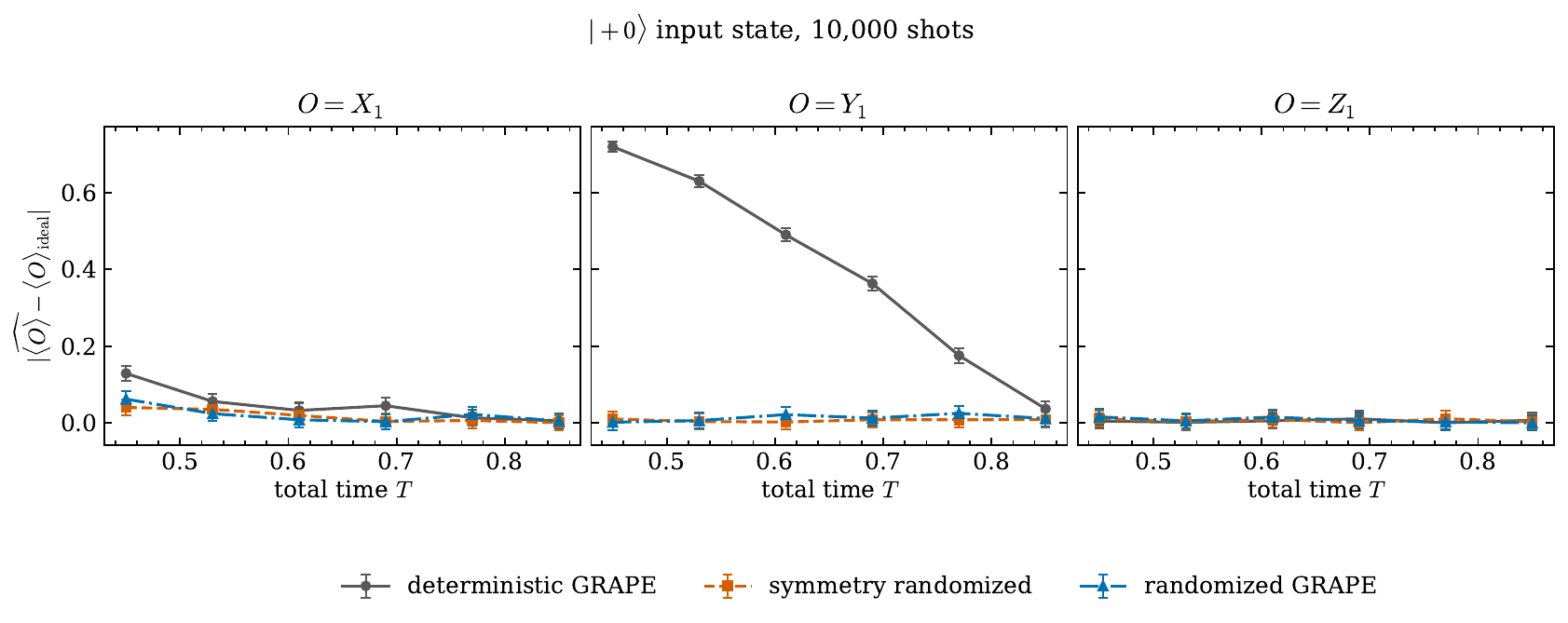}
    \caption{Performance comparison between three different CNOT control protocols used in Fig.~\ref{fig:randomized-cnot} for estimating expectation values of Pauli observables. The input state is $\rho_{\rm in}=\ket{+0}\bra{+0}$, for which the ideal CNOT output is the Bell state, and each point uses $N_{\rm shots}=10^4$ shots. The deterministic GRAPE pulse shows a large bias in the $Y_1$ observable, while the randomized protocols suppress this bias. The $X_1$ obserable shows a weaker improvement, whereas $Z_1$ is comparatively insensitive to the dominant coherent error. This behavior is consistent with the leading deterministic error being in the $Z_1X_2$ error: $Y_1$ is first-order sensitive to this error, while $X_1$ and $Z_1$ are not.}
    \label{fig:sm_cnot_observable}
\end{figure}

For each total time $T$, we prepare
\begin{align}
    \rho_{\rm in}=\ket{+0}\bra{+0}
\end{align}
and apply the approximate CNOT controls to prepare a Bell state, and then estimate the expectation value of a given observable. We compare deterministic GRAPE, the symmetry-generated randomized protocol, and R-GRAPE.

The estimation procedure is as follows.
\begin{enumerate}
    \item[1.] Choose a protocol. For deterministic GRAPE, every shot uses the same optimized control $\bm f$. For the symmetry-generated randomized protocol, every shot samples uniformly from the four branches in \eqref{concatenated_control_protocol_CNOT}. For R-GRAPE, every shot samples from the optimized ensemble according to the optimized probability vector $\bm p$.

    \item[2.] For each shot $s=1,\ldots,N_{\rm shots}$, prepare $\rho_{\rm in}$, apply the sampled control branch, and measure $O$. For simplicity, assuming $O$ is Pauli, the outcome is $a_s\in\{\pm1\}$.

    \item[3.] Estimate the expectation value of $O$ by
    \begin{align}
       \widehat {\langle O\rangle} =  \frac{1}{N_{\rm shots}} \sum_{s=1}^{N_{\rm shots}} a_s.
    \end{align}
    We then report the observable-estimation error
     \begin{align}
        \left| \widehat{\langle O\rangle} - \langle O\rangle_{\rm ideal}
        \right|, \qquad
        \langle O\rangle_{\rm ideal}=\Tr\left(O\text{CNOT}\rho_{\rm in}\text{CNOT}^{\dagger}\right).
    \end{align}
\end{enumerate}

Fig.~\ref{fig:sm_cnot_observable} shows the resulting observable-estimation errors for $O\in\{X_1,Y_1,Z_1\}$ and $N_{\rm shot} = 10000$. The most advantage appears in the $Y_1$ observable: the deterministic GRAPE pulse has a visible systematic bias, whereas both randomized protocols suppress this bias to the finite-shot level. The $X_1$ observable shows a smaller but still visible improvement, while the $Z_1$ observable is nearly insensitive to the error component that dominates the deterministic pulse.

This behavior can be understood from the first-order response to the coherent relative error. As in the previous subsection, define the error generator
\begin{align}
    U_{\rm targ}^\dagger U_{\bm f}(T)=e^{-iE^\star},
    \qquad E^\star=\sum_{P\in\mathcal P_2}c_P P, \qquad \mathcal P_2=\{I,X,Y,Z\}^{\otimes 2}.
\end{align}
For an observable $O$, define 
\begin{align}
    A_O:=U_{\rm targ}^\dagger O U_{\rm targ}.
\end{align}
Then, for the deterministic pulse,
\begin{align}
\label{eq:observable_linear_response}
    \Tr\left(OU_{\bm f}(T)\rho_{\rm in}U_{\bm f}(T)^\dagger\right) = \Tr[A_O\rho_{\rm in}] + i\Tr\left([E^\star,A_O]\rho_{\rm in}\right) + \mathcal O(\|E^\star\|^2).
\end{align}
For $\rho_{\rm in}=\ket{+0}\bra{+0}$, only Pauli strings in $\mathrm{span}\{I,X_1,Z_2,X_1Z_2\}$ have nonzero expectation, and for these operators, we have
\begin{align}
    A_{X_1}=X_1X_2,\qquad
    A_{Y_1}=Y_1X_2,\qquad
    A_{Z_1}=Z_1.
\end{align}
The error diagnostic in Fig.~\ref{fig:sm_cnot_error_decomp} shows that the dominant coherent error of the deterministic GRAPE pulse is in the $Z_1X_2$ error. Therefore the first-order observable response in \eqref{eq:observable_linear_response} is determined by commutators with $Z_1X_2$.

For $Y_1$, this first-order response is nonzero:
\begin{align}
    [Z_1X_2,A_{Y_1}] = [Z_1X_2,Y_1X_2] \propto X_1,
\end{align}
and $\Tr(X_1\rho_{\rm in})\neq 0$.  Thus $Y_1$ is directly sensitive to the dominant deterministic coherent error.

By contrast, the same dominant error gives no first-order contribution to $X_1$ or $Z_1$.  Indeed,
\begin{align}
    [Z_1X_2,A_{X_1}] =  [Z_1X_2,X_1X_2] \propto Y_1, \qquad  \Tr(Y_1\rho_{\rm in})=0,
\end{align}
while
\begin{align}
    [Z_1X_2,A_{Z_1}] = [Z_1X_2,Z_1] = 0.
\end{align}
Hence both $X_1$ and $Z_1$ are insensitive to the dominant error at first order. Any improvement in $X_1$ therefore comes from smaller subleading error components, whereas the large improvement in $Y_1$ reflects cancellation of the dominant $Z_1X_2$ component by the randomized protocols.  This explains the trend in Fig.~\ref{fig:sm_cnot_observable}.

\section{Finite-width randomized Pauli boundary pulses}

In the main text, the randomized boundary pulse construction was first described for instantaneous boundary pulses. Here we show that, for Pauli boundary operations, the leading finite-pulse contribution can also be canceled by randomization for 1-local coherent noise.

Consider a pulse sequence $w$ 
whose ideal endpoint implements a Pauli operation $P$ up to a global phase. In the presence of a weak coherent noise $\eta V$, we denote the perturbed final unitary as
\begin{align}
U_w^\eta = P\exp\left(-i\eta F_w(V) + \mathcal O(\eta^2)\right),
\end{align}
where $F_w(V)$ is the first-order error action accumulated in the toggling frame of the pulse sequence. A randomized finite-width implementation cancels the leading boundary error if $\mathbb E_w[F_w(V)]=0$.

We start with a primitive $\pi$ pulse about a single-qubit Pauli $A\in\{X,Y,Z\}$,
\begin{align}
A_s:\qquad U_{A_s}(t)=e^{-is\theta(t)A}, \qquad s=\pm1,
\end{align}
where $\theta(0)=0$ and $\theta(\tau_p)=\pi/2$. The two signs implement the same Pauli operation $A$ up to a global phase, but traverse opposite paths during the finite pulse. We assume
\begin{align}
C := \int_0^{\tau_p}\cos(2\theta(t))dt = 0,
\end{align}
which holds for any pulse with a time-symmetric envelope $\dot\theta(t)=\dot\theta(\tau_p-t)$, including square and Gaussian pulses. Define also
\begin{align}
S := \int_0^{\tau_p}\sin(2\theta(t))dt.
\end{align}

\begin{lemma}[Single-qubit randomized Pauli pulse sequences]
\label{lem:finite_width_single_qubit_gadget}
Let $P\in\{X,Y,Z\}$ be a desired single-qubit Pauli operation. Choose two anticommuting Paulis $A$ and $B$ such that $P\sim BA$, where $\sim$ denotes equality up to a global phase, and define
\begin{align}
\mathcal W_{P;A,B} = \left\{A_+B_-, B_+A_-, A_-B_+, B_-A_+ \right\}.
\end{align}
Every sequence in $\mathcal W_{P;A,B}$ implements $P$ up to a global phase. Moreover, for any traceless single-qubit noise $V$,
\begin{align}
\frac14\sum_{w\in\mathcal W_{P;A,B}}F_w(V)=0 .
\end{align}
\end{lemma}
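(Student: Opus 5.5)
The plan is a direct first-order (Dyson/toggling-frame) computation. I first evaluate $F_w(V)$ for a single primitive pulse $A_s$. Then I use a composition rule for two consecutive pulses. Finally I sum over the four sequences in $\mathcal W_{P;A,B}$ and check that everything cancels, pairing terms by the sign $s$ and by anticommutation of $A$ and $B$.

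\emph{Endpoint.} Since $\theta(\tau_p)=\pi/2$, we have $U_{A_s}(\tau_p)=e^{-is\pi A/2}=-isA\sim A$ for both signs. Hence each word in $\mathcal W_{P;A,B}$ ends at $AB$ or $BA$ up to phase. Because $A$ and $B$ anticommute, $AB=-BA\sim BA\sim P$. This proves the first claim.

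\emph{Primitive pulse.} For a pulse $U(t)$ on $[0,\tau_p]$ with perturbation $\eta V$, the interaction picture gives
\[
U^\eta(\tau_p)=U(\tau_p)\exp\left(-i\eta\int_0^{\tau_p}U(t)^\dagger VU(t)dt+\mathcal O(\eta^2)\right).
\]
So $F_{A_s}(V)=\int_0^{\tau_p}e^{is\theta A}Ve^{-is\theta A}dt$. Since $V$ is traceless, I decompose $V=v_AA+V_\perp$, where $V_\perp$ lies in the span of the two Paulis anticommuting with $A$. The component along $A$ is invariant under the conjugation. For the other component, $e^{is\theta A}V_\perp e^{-is\theta A}=e^{2is\theta A}V_\perp=\cos(2\theta)V_\perp+is\sin(2\theta)AV_\perp$. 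Integrating and using $C=0$ gives
\[
F_{A_s}(V)=\tau_p v_AA+isS\,AV_\perp .
\]
The key consequence is that the $s$-odd part cancels between the two signs:
\[
F_{A_+}(V)+F_{A_-}(V)=2\tau_p v_AA .
\]

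\emph{Composition and summation.} For a sequence that applies pulse $1$ first and then pulse $2$, write $U_2e^{-i\eta F_2}U_1e^{-i\eta F_1}=U_2U_1e^{-i\eta(U_1^\dagger F_2U_1+F_1)}$. This gives $F_w=F_{\rm first}+Q_1F_{\rm second}Q_1$, where $Q_1$ is the Pauli implemented by the first pulse; its global phase drops out of the conjugation.

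Whichever reading-order convention is used for the words, the set $\mathcal W_{P;A,B}$ is closed under reversing the order and flipping both signs. So the four words consist of each ordering ($A$ first or $B$ first) paired with both sign patterns. Summing, I expect
\[
\sum_w F_w=(F_{A_+}+F_{A_-})+(F_{B_+}+F_{B_-})+A(F_{B_+}+F_{B_-})A+B(F_{A_+}+F_{A_-})B .
\]
Inserting the primitive-pulse result, this becomes $2\tau_p(v_AA+v_BB)+2\tau_p(v_BABA+v_ABAB)$. By anticommutation, $ABA=-B$ and $BAB=-A$, so the sum vanishes.

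\emph{Main obstacle.} The delicate step is the bookkeeping of the composition rule: which pulse acts first, and the conjugation frame. I must verify that the four words really pair each ordering with both sign assignments under the paper's convention for $A_+B_-$. If the convention were different, the $s$-odd terms $isS\,AV_\perp$ would not cancel and the argument would fail. I would fix the convention explicitly and then check that the two orderings each appear once with $(+,-)$ and once with $(-,+)$, which is what drives the cancellation. The second-order remainder is routine, using a uniform bound on $V$ over the fixed duration $2\tau_p$.
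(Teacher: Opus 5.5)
Your proof is correct and follows essentially the same route as the paper. It uses the same primitive-pulse formula (with $C=0$) and the same composition rule $F_w=F_{\rm first}+Q_1F_{\rm second}Q_1$, and it writes out the ``elementary calculation'' the paper omits: your sign-pairing together with $ABA=-B$ and $BAB=-A$ does the work of the paper's remark that no non-identity single-qubit Pauli commutes with two anticommuting Paulis. The pairing you actually need is that the list is closed under flipping both signs at fixed order, $A_+B_-\leftrightarrow A_-B_+$ and $B_+A_-\leftrightarrow B_-A_+$; this is immediate by inspection under either reading convention, so the obstacle you flagged is already resolved.
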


\begin{proof}
It is enough to prove the claim for a Pauli error $R\in\{X,Y,Z\}$ and then use linearity. For a primitive pulse, a direct calculation gives
\begin{align}
F_{A_s}(R) = \int_0^{\tau_p}U_{A_s}(t)^\dagger R U_{A_s}(t)dt =
\begin{cases}
\tau_p R, & [A,R]=0,\\
sS iAR, & \{A,R\}=0.
\end{cases}
\end{align}
For a two-pulse sequence, with the left pulse applied first, the first-order error is
\begin{align}
F_{A_sB_t}(R) = F_{A_s}(R) + A F_{B_t}(R) A,
\end{align}
and similarly for the reversed order. Averaging the four choices in $\mathcal W_{P;A,B}$, after elementary calculations, gives
\begin{align}
\frac14\sum_{w\in\mathcal W_{P;A,B}}F_w(R) =
\begin{cases}
2 \tau_p R, & [A,R]=[B,R]=0,\\
0, & \text{otherwise.}
\end{cases}
\end{align}
On a single qubit, no non-identity Pauli commutes with two anticommuting non-identity Paulis $A$ and $B$. Hence the average vanishes for every $R\in\{X,Y,Z\}$, and therefore for every traceless $V$.

\end{proof}

Lemma~\ref{lem:finite_width_single_qubit_gadget} gives the local building block, 
which can be applied independently on each qubit to implement multi-qubit Pauli strings under 1-local noise, except for identity factors. For instance, to implement $X_1=X\otimes I$, applying the randomized $X$ pulse sequences on qubit 1 while leaving qubit 2 idle would allow 1-local noise on qubit 2 to accumulate a first-order boundary error. A simple remedy is to replace each idle identity by an active randomized identity, e.g. $I_{\rm rand}:=X_{\rm rand}X_{\rm rand}$. To match the pulse time for for every qubit, we therefore use the uniform local replacement
\begin{align}
I &\longmapsto X_{\rm rand}X_{\rm rand},&
X &\longmapsto Y_{\rm rand}Z_{\rm rand},\\
Y &\longmapsto Z_{\rm rand}X_{\rm rand},&
Z &\longmapsto X_{\rm rand}Y_{\rm rand},
\end{align}
where each $P_{\rm rand}$ denotes the randomized Pauli pulse sequences of Lemma~\ref{lem:finite_width_single_qubit_gadget}. The final unitaries are correct up to phase, since $XX=I$, $ZY\sim X$, $XZ\sim Y$, and $YX\sim Z$. Then, these randomized pulse sequences are robust against 1-local noise on average. 

\begin{proposition}[Finite-width Pauli-string pulse sequences for 1-local noise]
\label{prop:finite_width_pauli_string_gadget}
Let $G=\bigotimes_{j=1}^n G_j, ~G_j\in\{I,X,Y,Z\}$ be an $n$-qubit Pauli boundary pulse operation, and let the noise be 1-local,
\begin{align}
V=\sum_{j=1}^n\sum_{P\in\{X,Y,Z\}}v_{j,P}P_j.
\end{align}
Using the local replacement rule above independently on each qubit gives a randomized finite-width implementation of $G$ such that every branch implements $G$ up to a global phase and
\begin{align}
\mathbb E[F_G(V)]=0 .
\end{align}
Consequently, the averaged boundary contribution exhibits an averaged error scaling of $\mathcal O(\eta^2\tau_p^2)$.
\end{proposition}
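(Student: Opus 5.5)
The plan is to reduce everything to Lemma~\ref{lem:finite_width_single_qubit_gadget} using the tensor-product structure of the pulses and the linearity of the first-order error action in $V$.

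\textbf{Structure of a branch.} Fix a branch. On each qubit $j$ the local replacement rule assigns a two-step sequence $Q^{(j)}_{\rm rand}R^{(j)}_{\rm rand}$, and each factor is itself a randomized two-pulse word from Lemma~\ref{lem:finite_width_single_qubit_gadget}. Every qubit therefore runs a four-primitive-pulse sequence of the same total length $4\tau_p$, and the pulses on different qubits act simultaneously. The ideal toggling-frame unitary factorizes as $U_w(t)=\bigotimes_j U_{w_j}(t)$. The endpoint is $\bigotimes_j (\text{product of the two local Paulis}) \sim G$, using $XX=I$, $ZY\sim X$, $XZ\sim Y$ and $YX\sim Z$. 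This settles the first claim.

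\textbf{Reduction to single-qubit actions.} Since $F_w(V)=\int U_w(t)^\dagger V U_w(t)\,dt$ is linear in $V$, and every term $P_j$ of a 1-local $V$ acts on a single qubit, the tensor-product form gives
\begin{align}
F_w(P_j)=I\otimes\cdots\otimes F_{w_j}(P)\otimes\cdots\otimes I .
\end{align}
Here $F_{w_j}$ is the first-order action of qubit $j$'s own four-pulse sequence. The other qubits' toggling unitaries cancel in the conjugation. Because the qubits are randomized independently, $\mathbb E[F_G(V)]=\sum_{j,P} v_{j,P}\,\mathbb E_{w_j}[F_{w_j}(P)]$, and it suffices to prove $\mathbb E_{w_j}[F_{w_j}(P)]=0$ for a single qubit.

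\textbf{Single-qubit concatenation.} For a concatenated sequence $w=w^{(1)}w^{(2)}$, with $w^{(1)}$ applied first and implementing $Q$, the composition rule used in the proof of the lemma gives
\begin{align}
F_w(P)=F_{w^{(1)}}(P)+Q^\dagger F_{w^{(2)}}(P)\,Q .
\end{align}
The words $w^{(1)}$ and $w^{(2)}$ are sampled independently and uniformly from their sets $\mathcal W$. Averaging, the first term vanishes by Lemma~\ref{lem:finite_width_single_qubit_gadget}. The second term is a fixed conjugation, independent of how $w^{(1)}$ is sampled, of a quantity that also averages to zero by the lemma. The expectation is therefore zero for every $P\in\{X,Y,Z\}$. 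The one subtlety is $I\mapsto X_{\rm rand}X_{\rm rand}$, where both factors are randomized $X$ gadgets. The lemma applies to $P=X$ with any anticommuting pair $A,B$ satisfying $BA\sim X$, for example $A=Z$, $B=Y$, so the same argument holds. I expect this step to be the main point to check carefully: the second sub-block must be sampled independently of the first, and the conjugating Pauli must not depend on the random choice within the first block. Both hold, because every word in $\mathcal W_{P;A,B}$ implements the same $P$ up to phase, and phases cancel in the conjugation.

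\textbf{Scaling.} Each branch has total duration $O(\tau_p)$, so the Dyson expansion in the toggling frame gives $U_w^\eta = G\bigl(I-i\eta F_w(V)+O(\eta^2\tau_p^2)\bigr)$ with $\|F_w(V)\|=O(\tau_p)$. By the mixing argument used in the main text, the averaged operator $\mathbb E_w[U_w^\eta] = G\bigl(I-i\eta\,\mathbb E[F_w(V)]+O(\eta^2\tau_p^2)\bigr)=G\bigl(I+O(\eta^2\tau_p^2)\bigr)$. Applying the sharpened mixing lemma with aligned phases then bounds the boundary contribution to the diamond distance by $O(\eta^2\tau_p^2)$.
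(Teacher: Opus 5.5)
Your proposal is correct and follows essentially the same route as the paper. You reduce by linearity to a single 1-local Pauli term, use the synchronized tensor-product toggling frame so the other qubits drop out, and apply the concatenation rule $F=F_{\beta_1}(R)+Q^\dagger F_{\beta_2}(R)\,Q$, with each term averaging to zero over its own independently sampled word by Lemma~\ref{lem:finite_width_single_qubit_gadget}; your Dyson-expansion and mixing-lemma argument for the $\mathcal O(\eta^2\tau_p^2)$ scaling just spells out what the paper states as an immediate consequence.
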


\begin{proof}
It suffices to consider a one-local Pauli error $R_j\in\{X_j,Y_j,Z_j\}$. In the local replacement rule, the implementation on qubit $j$ consists of two independently randomized Pauli pulse sequences, $P_{j,1}^{\rm rand}$ and $P_{j,2}^{\rm rand}$. For fixed branches $\beta_1,\beta_2$, the first-order error has the same form
\begin{align}
F_{j,\beta_1,\beta_2}(R_j) = F_{\beta_1}^{P_{j,1}}(R_j) + P_{j,1}F_{\beta_2}^{P_{j,2}}(R_j)P_{j,1}.
\end{align}
By Lemma~\ref{lem:finite_width_single_qubit_gadget}, each term has zero average over its own branch, and hence
\begin{align}
\mathbb E_{\beta_1,\beta_2}F_{j,\beta_1,\beta_2}(R_j)=0 .
\end{align}
Pulses on other qubits commute with $R_j$, so this local cancellation is unchanged in the full tensor-product implementation. Therefore $\mathbb E[F_G(P_j)]=0$ for all $j,\mu$, and the result follows by linearity in $V$.
\end{proof}

\begin{figure}[t!]
    \centering
    \includegraphics[width=0.8\linewidth]{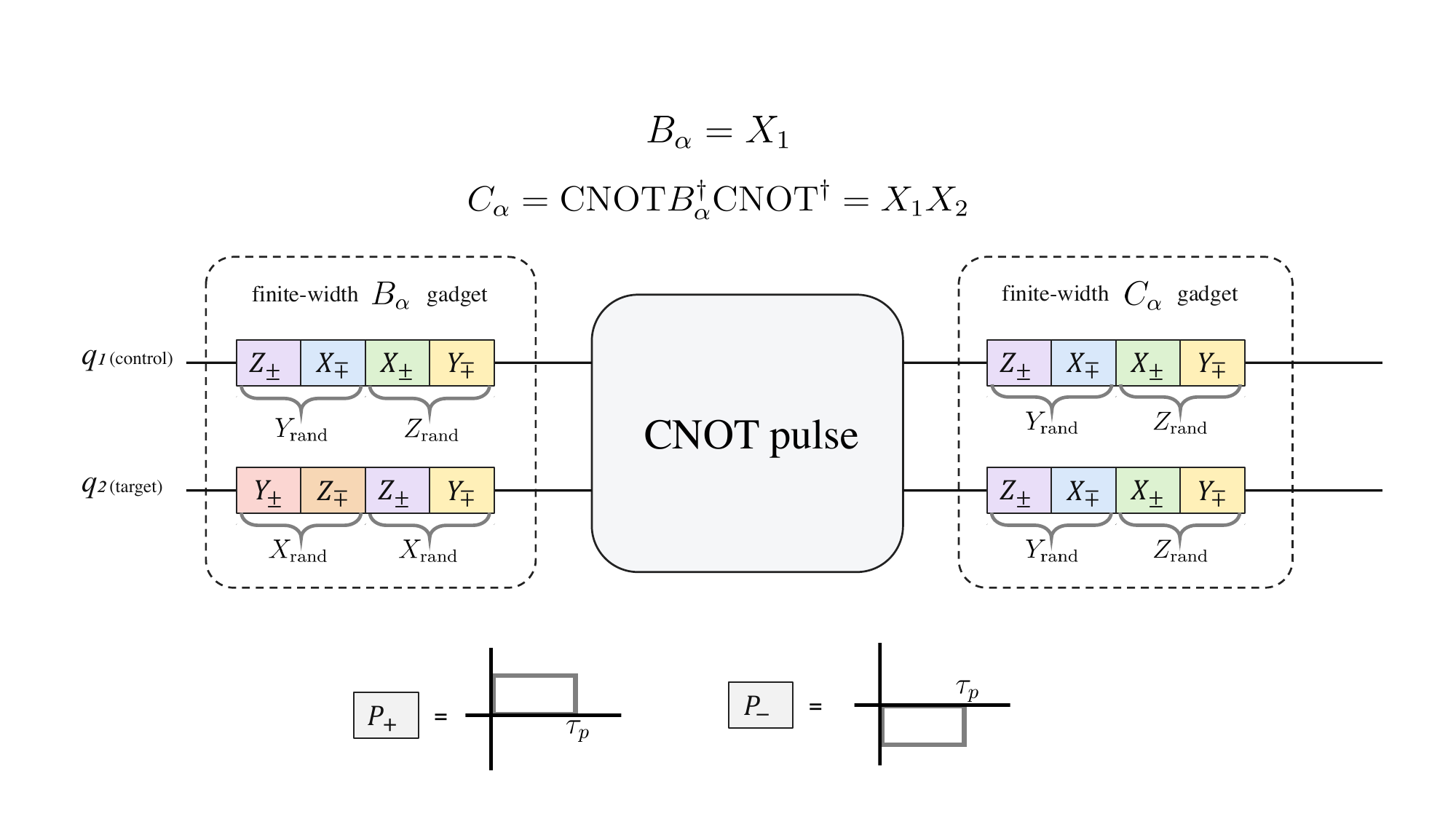}
    \caption{Schematic of a single branch of the finite-width randomized boundary-pulse construction for CNOT. The reference CNOT pulse is preceded by a Pauli-string boundary pulse sequence 
    $B_\alpha$ and followed by the conjugated pulse sequence $C_\alpha=U_{\rm CNOT}B_\alpha^\dagger U_{\rm CNOT}^\dagger$. Twirling over all two-qubit Pauli choices of $B_\alpha$, together with the corresponding $C_\alpha$, cancels the coherent error to first order. Here we show one representative branch with $B_\alpha=X_1$. Each primitive pulse $P_s$ denotes evolution under $H(t)=s\dot{\theta}(t)P$ for duration $\tau_p$, with $s=\pm 1$; for the schematic, we depict these primitives as square pulses. The two-pulse sequence $P_{\rm rand}$ are sampled from the four sequences $\{A_+B_-,B_+A_-,A_-B_+,B_-A_+\}$, where $A$ and $B$ anticommute and $P\sim BA$.}
    \label{fig:cnot_example_boundary}                                                      
\end{figure}

Using Proposition~\ref{prop:finite_width_pauli_string_gadget}, Clifford targets admit a general finite-width boundary construction that removes all first-order contributions in $\eta$.

\begin{corollary}[Application to Clifford targets] \label{cor:finite_width_clifford_boundary}
Let $U_{\rm targ}$ be a Clifford unitary and choose the boundary unitaries $B_\alpha$ to be Pauli strings. Then $C_\alpha=U_{\rm targ}B_\alpha^\dagger U_{\rm targ}^\dagger$ is also a Pauli string up to a phase. Therefore both $B_\alpha$ and $C_\alpha$ can be implemented using the randomized finite-width Pauli-string pulse sequences of Proposition~\ref{prop:finite_width_pauli_string_gadget}. For coherent 1-local noise, this removes the $\mathcal O(\eta\tau_p)$ finite-boundary contribution in the randomized boundary protocol. 

Moreover, by choosing $B_\alpha$ uniformly from the $n$-qubit Pauli group, the reference-control
first-order noise generator is averaged out as $\frac{1}{4^n}\sum_{B\in\{I,X,Y,Z\}^{\otimes n}}B^\dagger F_VB = 0$. Hence the reference control $\mathcal O(\eta)$ term and the finite-boundary $\mathcal O(\eta\tau_p)$ term are both canceled, so the entire randomized boundary protocol has diamond distance error scaling $\mathcal O(\eta^2)$.
\end{corollary}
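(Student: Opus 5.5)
The corollary has two independent first-order contributions, and I would handle them one at a time before combining them.

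\emph{Step 1: the boundary pulses.} First I would check that $C_\alpha$ is a Pauli string. Since $U_{\rm targ}$ is Clifford, conjugation maps the Pauli group to itself up to phases. So $C_\alpha=U_{\rm targ}B_\alpha^\dagger U_{\rm targ}^\dagger=\pm U_{\rm targ}B_\alpha U_{\rm targ}^\dagger$ is a Pauli string up to a global phase, and the global phase does not affect the channel. Proposition~\ref{prop:finite_width_pauli_string_gadget} therefore applies to both $B_\alpha$ and $C_\alpha$. Each is implemented by its randomized finite-width sequence, and every branch realizes the correct Pauli up to phase.

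\emph{Step 2: expanding one branch.} Fix $\alpha$ and the inner pulse branches $w,w'$. The full noisy unitary is the time-ordered product $U^\eta_{w'}(C_\alpha)\,U_\eta(T)\,U^\eta_{w}(B_\alpha)$. Each factor has the form (ideal)$\cdot\exp(-i\eta F+\mathcal O(\eta^2))$. Moving all first-order generators into the frame after the ideal operation, the total first-order generator is
\begin{align}
\Gamma = B_\alpha^\dagger U_{\rm targ}^\dagger F_{w'}(V)U_{\rm targ}B_\alpha + B_\alpha^\dagger F_V B_\alpha + F_w(V),
\end{align}
up to the ordering convention. Because the ideal operations combine to $U_{\rm targ}$ up to phase, the branch equals $U_{\rm targ}e^{-i\eta\Gamma+\mathcal O(\eta^2)}$ up to a global phase.

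\emph{Step 3: averaging.} The inner randomizations are independent of each other and of $\alpha$. Conditioned on $\alpha$, the expectations of $F_w(V)$ and $F_{w'}(V)$ vanish by Proposition~\ref{prop:finite_width_pauli_string_gadget}. The surrounding conjugations are fixed unitaries, so the first and third terms of $\Gamma$ average to zero by linearity. The middle term averages to $\frac{1}{4^n}\sum_B B^\dagger F_V B$. This is the Pauli twirl of $F_V$, which equals $\Tr(F_V)I/2^n$. Since $\Tr F_V=\int_0^T\Tr V\,dt=0$, the middle term also vanishes. Hence $\mathbb E\,\Gamma=0$.

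\emph{Step 4: converting to the diamond norm.} Choose phase representatives so that each branch is $U_{\rm targ}(I-i\eta\Gamma+\mathcal O(\eta^2))$. The phase-aligned average is then $U_{\rm targ}+\mathcal O(\eta^2)$, and the sharpened mixing lemma gives $\epsilon_\diamond\le 2\|\overline U-U_{\rm targ}\|=\mathcal O(\eta^2)$.

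\emph{Main obstacle.} The hard part is making the $\mathcal O(\eta^2)$ remainders uniform. The second-order terms include cross terms between the boundary and bulk generators, of order $\eta^2(T+\tau_p)^2$, and these must be bounded over all finitely many branches. I would handle this with a Dyson expansion of the total noisy evolution over $[0,T+4\tau_p]$. The interaction-picture noise has norm at most $\eta\|V\|$, so the second-order remainder is bounded by $\frac12(\eta\|V\|(T+4\tau_p))^2$ uniformly in the branch. That bound justifies the mixing-lemma step.
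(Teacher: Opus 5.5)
Your proposal is correct and takes the same route as the paper. In both, Clifford conjugation keeps $C_\alpha$ a Pauli string, Proposition~\ref{prop:finite_width_pauli_string_gadget} removes the boundary first-order terms, and the Pauli twirl removes $F_V$; you also make explicit what the paper leaves implicit, namely the combined first-order generator $\Gamma$, the identity $\frac{1}{4^n}\sum_B B^\dagger F_V B=\Tr(F_V)I/2^n=0$, and the uniform Dyson bound on the remainder that justifies the mixing-lemma step. One cosmetic slip: under the local replacement rule each boundary operation uses four primitive pulses, so the total duration is $T+8\tau_p$ rather than $T+4\tau_p$, which only changes the constant in your $\mathcal O(\eta^2)$ bound.
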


For generating Fig.~\ref{fig:boundary_noise_robustness_CNOT} in the main text, we used Corollary~\ref{cor:finite_width_clifford_boundary}. The reference CNOT pulse was obtained using deterministic GRAPE, with the total time $T$ chosen sufficiently large so that the target gate is implemented to numerical precision in the absence of noise. The 16 boundary pulses $B_\alpha$ and $C_\alpha = U_{\rm targ} B_\alpha^\dagger U_{\rm targ}^\dagger$ were then constructed from the finite-width Pauli pulse sequences described above using square pulses. Fig.~\ref{fig:cnot_example_boundary} illustrates the general idea of this construction for a representative branch with $B_\alpha=X_1$ and the corresponding $C_\alpha=X_1X_2$, showing how the boundary unitaries are implemented using square-pulse Pauli gadgets.

\end{document}